\def\RSthmtxt{theorem~}\newref{thm}{name = \RSthmtxt}}
\def\RSlemtxt{Lemma~}\newref{lem}{name = \RSlemtxt}}
\newcommand{\lyxaddress}[1]{
	\par {\raggedright #1
	\vspace{1.4em}
	\noindent\par}
}
\theoremstyle{plain}
\newtheorem{thm}{\protect\theoremname}
\theoremstyle{plain}
\newtheorem{prop}[thm]{\protect\propositionname}
\theoremstyle{remark}
\newtheorem*{rem*}{\protect\remarkname}
\theoremstyle{plain}
\newtheorem{cor}[thm]{\protect\corollaryname}
\theoremstyle{plain}
\newtheorem{lem}[thm]{Lemma}
\definecolor{myblue}{RGB}{20,45,160}
\DeclareSymbolFont{extraitalic}{U}{zavm}{m}{it}
\newcommand{\Var}{\mathrm{Var}}
\providecommand{\corollaryname}{Corollary}
\providecommand{\propositionname}{Proposition}
\providecommand{\remarkname}{Remark}
\providecommand{\theoremname}{Theorem}
\begin{document}
\title{\textrm{Exact Expressions for the Log-likelihood's Hessian in Multivariate
Continuous-Time Continuous-Trait Gaussian Evolution along a Phylogeny}}
\author{Woodrow Hao Chi Kiang\\
woodrow.hao.chi.kiang@liu.se, hello@hckiang.com}
\maketitle

\lyxaddress{STIMA, Department of Computer and Information Science, Linköping
University, Linköping, Sweden}

\global\long\def\children{\textrm{{Ch}}}%

\global\long\def\parent{\textrm{{Pa}}}%

\global\long\def\descendents{\textrm{{Ch}}^{*}}%

\global\long\def\ancestors{\textrm{{Pa}}^{*}}%

\global\long\def\etaone{\eta_{1}}%
 
\global\long\def\etaj{\eta_{j}}%
 
\global\long\def\etaJ{\eta_{J_{u}}}%

\global\long\def\Sig{\Sigma}%

\global\long\def\Sigx{L}%

\global\long\def\Siglr{ \begin{bmatrix} \Sig_{\etaone}  &  0  &  0  &  \cdots\\
 0  &  \Sig_{\eta_{2}}  &  0  &  \cdots\\
 0  &  0  &  \ddots &  \vdots\\
 0  &  0  &  \cdots &  \Sig_{\etaJ} 
\end{bmatrix}}%

\global\long\def\Siglrinv{ \begin{bmatrix} \Sig_{\etaone}^{-1}  &  0  &  0  &  \cdots\\
 0  &  \Sig_{\eta_{2}}^{-1}  &  0  &  \cdots\\
 0  &  0  &  \ddots &  \vdots\\
 0  &  0  &  \cdots &  \Sig_{\etaJ}^{-1} 
\end{bmatrix}}%

\global\long\def\E{\mathbb{E}}%

\global\long\def\Cov{\mathrm{Cov}}%

\global\long\def\Var{\mathrm{Var}}%

\global\long\def\R{\mathbb{R}}%

\global\long\def\sumju{\sum_{1\le j\le J_{u}}}%

\global\long\def\sumc{\sum_{1\le j\le J_{u}}c_{\etaj}}%

\global\long\def\sumgam{\sum_{1\le j\le J_{u}}\gamma_{\etaj}}%

\global\long\def\sumOmega{\sum_{1\le j\le J_{u}}\Omega_{\etaj}}%

\global\long\def\sumDelta{\sum_{1\le j\le J_{u}}\Delta_{\etaj}}%

\global\long\def\sumjo{\sum_{1\le j\le J_{o}}}%

\global\long\def\rlr{ \begin{bmatrix} r_{\etaone} \\
 \vdots\\
 r_{\etaJ} \\
 
\end{bmatrix}}%

\global\long\def\xlr{ \begin{bmatrix} x_{\etaone} \\
 \vdots\\
 x_{\etaJ} \\
 
\end{bmatrix}}%

\global\long\def\xlrT{ \begin{bmatrix} x_{\etaone}^{T}  &  \cdots &  x_{\etaJ}^{T} \end{bmatrix}}%

\global\long\def\qlr{ \begin{bmatrix} q_{\etaone} \\
 \vdots\\
 q_{\etaJ} 
\end{bmatrix}}%

\global\long\def\qlrT{ \begin{bmatrix} q_{\etaone}^{T}  &  \cdots &  q_{\etaJ}^{T} \end{bmatrix}}%

\global\long\def\Philr{ \begin{bmatrix} \Phi_{\etaone} \\
 \vdots\\
 \Phi_{\etaJ} 
\end{bmatrix}}%

\global\long\def\PhilrT{ \begin{bmatrix} \Phi_{\etaone}^{T}  &  \cdots &  \Phi_{\etaJ}^{T} \end{bmatrix}}%

\global\long\def\Psilr{ \begin{bmatrix} \Psi_{\etaone} \\
 \vdots\\
 \Psi_{\etaJ} 
\end{bmatrix}}%

\global\long\def\PsilrT{ \begin{bmatrix} \Psi_{\etaone}^{T}  &  \cdots &  \Psi_{\etaJ}^{T} \end{bmatrix}}%

\global\long\def\PsilrPhiu{ \begin{bmatrix} \Psi_{\etaone}\\
 \vdots\\
 \Psi_{\etaJ} 
\end{bmatrix}\Phi_{u}}%

\global\long\def\PsilrTPhiuT{ \Phi_{u}^{T} \begin{bmatrix} \Psi_{\etaone}^{T}  &  \cdots &  \Psi_{\etaJ}^{T} \end{bmatrix}}%

\global\long\def\tlrT{ \begin{bmatrix} t_{\etaone}^{T}  &  \cdots &  t_{\etaJ}^{T} \end{bmatrix}}%

\global\long\def\tlr{ \begin{bmatrix} t_{\etaone} \\
 \vdots\\
 t_{\etaJ} 
\end{bmatrix}}%

\global\long\def\GlrT{ \begin{bmatrix} G_{\etaone}^{T}  &  \cdots &  G_{\etaJ}^{T} \end{bmatrix}}%

\global\long\def\Glr{ \begin{bmatrix} G_{\etaone} \\
 \vdots\\
 G_{\etaJ} 
\end{bmatrix}}%

\global\long\def\HlrT{ \begin{bmatrix} H_{\etaone}^{T}  &  \cdots &  H_{\etaJ}^{T} \end{bmatrix}}%

\global\long\def\Hlr{ \begin{bmatrix} H_{\etaone} \\
 \vdots\\
 H_{\etaJ} 
\end{bmatrix}}%

\global\long\def\GLInv{\mathscr{G}_{Linv}}%

\global\long\def\setchar#1{1_{#1}}%

\global\long\def\hrooteta#1{\frac{\partial^{2}{#1}_{u_{1}}}{\partial\theta_{\eta_{1}}\partial\theta_{\eta_{2}}}}%

\global\long\def\heta#1{\frac{\partial^{2}{#1}}{\partial\theta_{\eta_{1}}\partial\theta_{\eta_{2}}}}%

\global\long\def\hgen#1#2#3{\frac{\partial^{2}{#1}_{\eta}}{\partial#2\partial#3}}%

\global\long\def\detaone#1{\frac{\partial{#1}}{\partial\theta_{\eta_{1}}}}%

\global\long\def\detatwo#1{\frac{\partial{#1}}{\partial\theta{}_{\eta_{2}}}}%

\global\long\def\deta#1{\frac{\partial{#1}}{\partial\theta{}_{\eta}}}%

\global\long\def\detat#1#2{\frac{\partial{#1}_{\eta}}{\partial#2}}%

\global\long\def\evalat#1#2{\left.#1\right|_{#2}}%

\global\long\def\symadd#1{\mathcal{S}(#1)}%

\global\long\def\lca#1#2{\mathcal{L}\left(#1,#2\right)}%

\section{Introduction}

It has long been known that phenotypes of multiple related species
should not be modelled as independent variables drawn from the same
distribution. Since \citet{felsenstein1985} introduced the independent
contrast algorithm to circumvent the nonindependence between multiple
species and \citet{hansen1997ou} described the phylogenetic Ornstein-Uhlenbeck
model, various more sophisticated SDE-type phylogenetic comparative
methods (PCM) has been developed. Recently, \citet{mitov2020} proposed
a way to calculate in linear time the likelihood of a subfamily of
these models, denoted by them as $\GLInv$, which covers a wide range
of continuous-time continuous-trait Gaussian Markov processes in which
the trait vector's variance-covariance matrix at time $t$ is invariant
with respect to the any ancestral trait values. To name a few, the
$\GLInv$ family covers the vanilla phylogenetic BM \citep{felsenstein1985}
and OU \citep{hansen1997ou} models, a class of phylogenetic regression
models described in \citet{bartoszek_phylogenetic_2012} (implemented
in the \texttt{R} package \texttt{mvSLOUCH}), the early burst \citep{harmon_early_2010}
and accelerating-decelerating \citep{blomberg_testing_2003} that
are implemented in the \texttt{R} package \texttt{geiger} \citep{harmon_geiger_2008}
and \texttt{mvMORPH} \citep{clavel_mvmorph_2015} among others. The
$\GLInv$ family is made rather general by avoiding the use of any
mathematical structures specific to the BM and OU; and as a result,
the \texttt{R} package \texttt{PCMBase}, which accompanied their paper,
is capable of computing the likelihoods a broad class of continuous-trait
models in the presence of multiple evolutionary regimes, missing or
disappeared traits, measurement errors, and evolution with jumps in
some internal lineages, among numerous other possibilities, while
avoiding the inversion the big covariance matrix of all traits of
all tips that is usually done when one calculates a Gaussian likelihood
(this is done, for example, in the \texttt{R} package \texttt{ouch}
\citep{butler_phylogenetic_2004}).

While such computation of a wide class of the PCM likelihood functions
is useful for obtaining point estimates \textit{via} maximum likelihood
estimation, uncertainty quantificaiton in PCM frequently relies on
generic numerical methods that are not specific to PCM, such as parametric
bootstrap that is implemented in \texttt{ouch} \citep{butler_phylogenetic_2004}
and Bayesian modelling such as MCMC that is implemented in \texttt{geiger}
and \texttt{OUwie} \citep{beaulieu_modeling_2012}, or \texttt{pcmabc}
\citep{Bartoszek_2019}, which implements approximate Bayesian computation.

In this paper, I will present the closed-form expressions for the
second derivatives of the $\GLInv$ family. The resulting Hessian
matrix may be used to obtain a confidence ellipse of the maximum likelihood
estimates given that the log-likelihood surface is sufficiently close
to a multivariate quadratic function. The closed-form expressions
are programmed in an R package called \texttt{glinvci}, which also
includes some out-of-box functionalities to analytically compute the
BM and OU model's confidence ellipses, as well as facilities for the
users to compute the Hessians of their own models within the family.
But perhaps more importantly, given that the family covers so many
Gaussian PCM models, the recursive mathematical structures in the
Hessian may be useful for further mathematical analysis, especially
as papers like \citet{cooper_cautionary_2016} has led to some awareness
of the statistical properties of these PCM models.

When I derive the second derivatives, a linear-time algorithm for
the $\GLInv$ likelihood is also presented. This is different from
the linear-time algorithm that \texttt{PCMBase} and \texttt{PCMFit}
\citep{mitov_automatic_2019} uses, as they expand the quadratic form
in the likelihood into six terms and recurse on the corresponding
six coefficients, while mine is based on Woodbury formulae, which
is more similar to that of \citet{ho_linear-time_2014}. The main
difference between my likelihood algorithm and that of \citet{ho_linear-time_2014}
is that theirs requires the variance-covariance matrix of the traits
to satisfy a ``three-point condition'' (see their Definition 1 and
Theorem 2) and is less general than $\GLInv$; while mine does not
require such condition and is fully multivariate, but comes with a
cost of requiring full-rank matrix update rather than their faster
rank-one covariance matrix update, although both are linear-time with
respect to number of tips of the phylogeny and both uses similar Woodbury-formula
techniques. Also, \citet{bastide2020} have used a similar technique
to obtain both a linear-time algorithm and close-form gradients, but
with a slightly different formulation and perhaps less generality
than what is presented here. Nonetheless, my main objective here is
to obtain the Hessian, and my likelihood and gradient computation
are derived in a way that produce by-product matrices that are neccessary
for the Hessian's computation.

\prettyref{sec:Model} be devoted to introducing the model, its likelihood
and the gradients. Then in \prettyref{sec:Hessian} the Hessian of
the general case will be presented, and in \prettyref{sec:The-Ornstein-Uhlenbeck-Model}
we will discuss the secord-order derivatives regarding the special
case of the widely-used Ornstein-Uhlenbeck model. \prettyref{sec:Handling-of-Missing}
will discuss how missing traits data can be handled. Finally, \prettyref{sec:Discussion}
is a discussion section. Because of their length, the mathematical
proofs are presented in Supplementary Information.

\section{Model\label{sec:Model}}

Assume that we have a known rooted phylogenetic tree representing
the evolution of $N$ species, and we have observed, for each tip
$\eta$, a multivariate real-valued trait vector $x_{\eta}\in\mathbb{R}^{k_{\eta}}$,
which is resulted from an evolutionary process along the tree. Each
branch in the phylogenetic tree is associated with a positive branch
length $t_{\eta}$ representing time, where $\eta$ denotes the node
at which the branch ends. We denote, for any node $\eta$ in a phylogenetc
tree, its direct ancestor as $\parent(\eta)$, the set of all ancestor
as $\ancestors(\eta)$, the set of direct children as $\children(\eta)$,
and the set of all descendents as $\descendents(\eta)$.

We use a convention that, if $\eta$ is an internal node, then $x_{\eta}$
denotes the joint trait vector of all tips descended from node $\eta$;
and the trait vector associated node $\eta$ itself is denoted by
$z_{\eta}\in\mathbb{R}^{k_{\eta}}$. Therefore $x_{\eta}=z_{\eta}$
if $\eta$ is a tip, but $x_{\eta}$ contains more dimensions than
$z_{\eta}$ if $\eta$ is not a tip. The trait vectors $z_{\eta}$
for each node can have different number of dimensions. The multivariate
Gaussian density is denoted as $\phi(\cdot;\mu,\Sigma)$ and the characteristic
function of a set $A$ is denoted as $\setchar A$. We use the convention
that for any function $f$, the empty sum $\sum_{y\in\varnothing}f(y)=0$,
and that $\sum_{a\le j\le b}f(y_{j})=0$ if $b<a$; similarly product
over an empty set is one. The root node is denoted as $o$ hereinafter.

The trait vectors are assumed to evolve along the phylogenetic tree
according to the $\GLInv$ model discussed in \citet{mitov2020},
in which each species' traits evolve independently of each others
after branching off from their common ancestor and for every non-root
node $\eta\in\children(u)$, there are matrices $\left(\Phi_{\eta},w_{\eta},V_{\eta}\right)$
that are independent of $z_{m}$ for any node $m$ in the tree, but
can depend on $t_{\eta}$, $t_{u}$ and other parameters, such that
\[
p(z_{\eta}|z_{u})=\phi(z_{\eta};w_{\eta}+\Phi_{\eta}z_{u},V_{\eta}),
\]

where $\Phi_{\eta}\in\mathbb{R}^{k_{\eta}\times k_{u}}$, $w_{\eta}\in\mathbb{R}^{k_{\eta}}$
and the matrix $V_{\eta}\in\mathbb{R}^{k_{\eta}\times k_{\eta}}$
is symmetric positively definite. We let $\Theta_{\eta}$ be the set
of all possible values for $\left(\Phi_{\eta},w_{\eta},V_{\eta}\right)$
and let $\Theta=\times_{\eta\in\descendents(o)}\Theta_{\eta}$. Furthermore,
we define $f(\theta)$ be the log-likelihood of the entire model given
the root trait; in other words, $f(\theta)=\ln p(x_{o}|z_{o};\theta)$,
$\theta\in\Theta$. Through out the text, we also denote the $j$th
standard Euclidean space basis by $e_{j}$. 
\begin{figure}
\centering{}\includegraphics[width=1\textwidth]{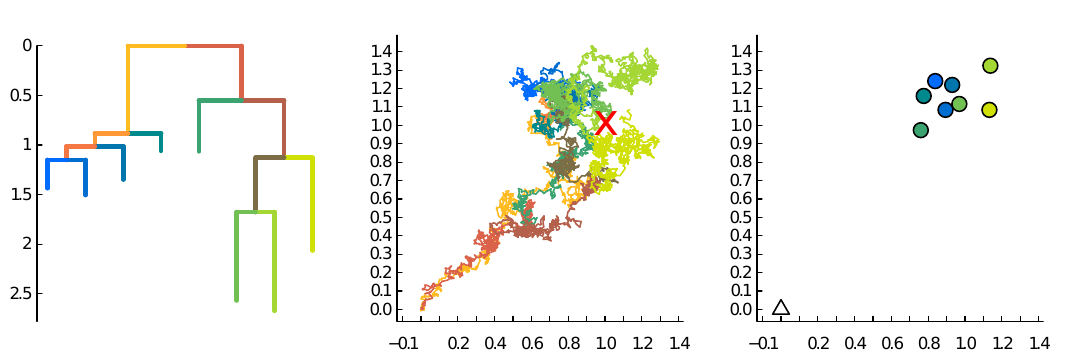}\caption{An illustration of the an multivariate OU model, which is a special
case of the $\protect\GLInv$ family. The leftmost graph shows a known
phylogeny; the middle is an realization of a 2-D OU processes along
this phylogeny, where the 'x' mark is the evolutionary optimum; the
rightmost shows what is observed, in which the triangle represent
the location of the root.}
\end{figure}

\subsection{Calculating the Likelihood and Gradient in Recursive Form}

The invariance of $\left(\Phi_{\eta},w_{\eta},V_{\eta}\right)$ with
respect to the trait vectors allows us to compute the likelihood function
in a post-order manner, and the gradient through an extra pre-order
traversal. Recently, the mentioned algorithms for the likelihood and
the gradient has been, independently of my effort, established in
\citet{bastide2020} but with a slightly different formulation and
perhaps a greater generality than in my version. Nonetheless, I will
still briefly show my formulation as a mean to lay a foundation for
discussing the second-order derivatives, as mine involves defining
some matrices and quantities that are important to the presentation
of the Hessian.

Note that for the computation of the likelihood function, the post-order
recursion shown here is very different from that of PCMBase discussed
in \citet{mitov2020}. While \citet{mitov2020} expands the quadratic
form in the Gaussian likelihood while traversing the tree, our method
uses of the Woodbury matrix inversion formula to avoid these expansion,
and more importantly, to book-keep some intermediate values for later
use in the gradient and Hessian computation.

Let $(\etaj)_{1\le j\le J_{u}}$ be all elements in $\children(u)$
and $v=\parent(u)$ with $p(z_{u}|z_{v})=\phi(z_{u};w_{u}+\Phi_{u}z_{v};V_{u})$,
where $V_{u}$ is symmetric positively definite. Also we let $(\kappa_{j})_{1\le j\le J_{o}}$
be all nodes in $\children(o)$, i.e., all direct children of the
root. We start by establishing the relationship between $p(x_{\etaj}|z_{u})$
and $p(x_{u}|z_{v})$.
\begin{prop}
\label{prop:prop1}If for some $(q_{\eta_{j}})_{1\le j\le J_{u}}$,
$(\Psi_{\eta_{j}})_{1\le j\le J_{u}}$, and $(\Sig_{\eta_{j}})_{1\le j\le J_{u}}$
the following holds
\[
p(x_{\eta_{j}}|z_{u})=\phi(x_{\eta_{j}};q_{\eta_{j}}+\Psi_{\eta_{j}}z_{u},\Sig_{\eta_{j}}),\quad\forall1\le j\le J_{u}
\]
then 
\begin{equation}
p(x_{u}|z_{v})=p\left(\xlr|z_{v}\right)=\phi(x_{u};q_{u}+\Psi_{u}z_{v};\Sig_{u})
\end{equation}
where 
\begin{align}
q_{u}= & \qlr+\Psilr w_{u},\label{eq:qu}\\
\Psi_{u} & =\Psilr\Phi_{u},\label{eq:psiu}\\
\Sig_{u}= & \Siglr+\Psilr V_{u}\PsilrT.\label{eq:qpsisig}
\end{align}
\end{prop}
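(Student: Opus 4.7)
The plan is to reduce the claim to two standard Gaussian manipulations: first, collecting the $J_u$ conditional densities $p(x_{\eta_j} | z_u)$ into a single joint density $p(x_u | z_u)$ via conditional independence, and second, marginalising $z_u$ out using the transition density $p(z_u | z_v)$ by the standard Gaussian convolution identity.

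First I would invoke the $\GLInv$ assumption that, after branching off their common ancestor $u$, the subtrees rooted at the different children of $u$ evolve independently of each other. This gives conditional independence of $\{x_{\eta_j}\}_{1 \le j \le J_u}$ given $z_u$, so their joint density factorises into a product of the hypothesised Gaussians. Stacking the $x_{\eta_j}$'s vertically into $x_u$ in the same order as the $q_{\eta_j}$, $\Psi_{\eta_j}$, and $\Sig_{\eta_j}$ are stacked, this product equals
\begin{equation*}
p(x_u | z_u) = \phi\!\left(x_u;\ \qlr + \Psilr z_u,\ \Siglr\right),
\end{equation*}
in which the covariance is block-diagonal precisely because of the conditional independence.

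Second, writing $A := \Psilr$, $b := \qlr$ and $C := \Siglr$ so that $x_u \mid z_u \sim N(b + A z_u,\, C)$, and combining this with the given $z_u \mid z_v \sim N(w_u + \Phi_u z_v,\, V_u)$, the standard Gaussian marginalisation identity yields that $p(x_u | z_v) = \int p(x_u | z_u)\, p(z_u | z_v)\, dz_u$ is Gaussian with mean $(b + A w_u) + (A \Phi_u) z_v$ and covariance $C + A V_u A^T$. Reading off the affine dependence on $z_v$ then immediately produces $q_u = \qlr + \Psilr w_u$, $\Psi_u = \Psilr \Phi_u$ and $\Sig_u = \Siglr + \Psilr V_u \PsilrT$, which are exactly \prettyref{eq:qu}, \prettyref{eq:psiu} and \prettyref{eq:qpsisig}.

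I do not anticipate any substantive obstacle: the result is essentially bookkeeping in block-matrix notation, and the only care required is keeping the stacking order of subvectors and of the corresponding means, slopes and covariances consistent, so that the product density in the first step collapses cleanly into the claimed block form before the marginalisation identity is applied in the second step.
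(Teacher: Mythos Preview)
Your proof is correct, and the two steps you outline (stacking via conditional independence, then Gaussian marginalisation of $z_u$) are exactly the right ones. The paper, however, carries out the second step differently: instead of invoking the density-level convolution identity $N(b+Az_u,C)\ast N(w_u+\Phi_u z_v,V_u)=N(b+Aw_u+A\Phi_u z_v,\,C+AV_uA^T)$, it works with characteristic functions, computing $\E_{|z_v}\exp(ir^Tx_u)$ by first conditioning on $z_u$ (which exposes the product $\prod_j\E_{|z_u}\exp(ir_{\eta_j}^Tx_{\eta_j})$ via conditional independence) and then taking the outer $\E_{|z_v}$ of the resulting function of $z_u$ as a Gaussian characteristic function. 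Your route is slightly more elementary and leans on a well-known closed-form identity; the paper's route avoids citing that identity by rederiving it on the spot in Fourier space, which also makes the block structure appear naturally as a sum in the exponent rather than as a block-diagonal covariance. Either way the bookkeeping is the same and the result drops out immediately.
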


The ``if'' part of the above proposition is trivially true if $\eta_{j}$
are all tips; in this case $\left(q_{\etaj},\Psi_{\etaj},\Sig_{\etaj}\right)=\left(w_{\etaj},\Phi_{\etaj},V_{\etaj}\right)$.
Note that the goal of our likelihood computation is to find $\left\{ \ln p(x_{\kappa}|z_{o})\right\} _{\kappa\in\children(o)}$,
because once these are found, the joint log-likelihood $\ln p(x_{o}|z_{o})$
is simply the sum of them by independence. Thus, if one let go of
the inefficiency of handling and inverting large matrices, the above
proposition has already enabled us to compute $p(x_{o}|z_{o})$ in
a bottom-up manner. Starting computing from $\left(q_{\eta},\Psi_{\eta},\Sig_{\eta}\right)$
for all tips $\eta$, one can successively ``merge'' clades using
\prettyref{eq:qu}-\eqref{eq:qpsisig}. Once one arrive at $\left(q_{\kappa},\Psi_{\kappa},\Sig_{\kappa}\right)$
for all $\kappa\in\children(o)$, i.e., the parameters associated
with direct children of the root, the likelihood $\prod_{\kappa\in\children(o)}p(x_{\kappa}|z_{o})$
is obviously obtained, although computing the Gaussian likelihood
this way requires one to invert $\Sig_{\kappa}$, which could be a
big matrix.

Rather than using the above naive scheme, one can avoid storing and
inverting $\Sig_{\kappa}$, using Woodbury matrix inversion update
\citep{max1950inverting} and a well-known direct result of Sylvester's
determinant theorem \citep{sylvester1883xxxix}, namely,
\begin{align*}
(A+UCV)^{-1} & =A^{-1}-A^{-1}U(C^{-1}+VA^{-1}U)^{-1}VA^{-1}\textrm{, and}\\
\det\left(A+UWV\right) & =\det(W^{-1}+VA^{-1}U)\det W\det A,
\end{align*}
 These matrix inversion formulae give rise to the following bottom-up
iteration scheme which allows an algorithm to only store matrices
of size $k_{\eta}\times k_{\eta}$ and $k_{\eta}\times k_{u}$ when
it visits a node $\eta\in\children(u)$.
\begin{prop}
For any node $n$, define 
\begin{align}
c_{n} & =(x_{n}-q_{n})^{T}\Sig_{n}^{-1}(x_{n}-q_{n});\label{eq:cu}\\
\gamma_{n} & =\Psi_{n}^{T}\Sig_{n}^{-1}(x_{n}-q_{n});\label{eq:gammau}\\
\Omega_{n} & =\Psi_{n}^{T}\Sig_{n}^{-1}\Psi_{n};\text{ and }\label{eq:Omegau}\\
\Delta_{n} & =\ln\det\Sig_{n}\label{eq:Deltau}
\end{align}
where all notations are as defined in \prettyref{prop:prop1}. It
follows that 
\begin{align}
c_{u} & =\sumc-2\sumgam^{T}w_{u}+w_{u}^{T}\left(\sumOmega\right)w_{u}-b_{u}^{T}\Lambda_{u}b_{u}\label{eq:curecur}\\
\gamma_{u} & =\Phi_{u}^{T}\left[I_{k_{u}}-\left(\sumOmega\right)\Lambda_{u}\right]b_{u}\label{eq:gammaurecur}\\
\Omega_{u} & =\Phi_{u}^{T}\left(\sumOmega\right)\left[I_{k_{u}}-\Lambda_{u}\left(\sumOmega\right)\right]\Phi_{u}\label{eq:Omegaurecur}\\
\Delta_{u} & =\sumDelta+\ln\det V_{u}+\ln\det\left(V_{u}^{-1}+\sumOmega\right)\label{eq:Deltaurecur}
\end{align}
where 
\begin{align}
\Lambda_{u} & =\left(V_{u}^{-1}+\sumOmega\right)^{-1}\label{eq:deflambda}\\
b_{u} & =\sumgam-\sumOmega w_{u}.\label{eq:defbu}
\end{align}
and $I_{k_{u}}$ is the $k_{u}\times k_{u}$ identity matrix. Furthermore,
\begin{equation}
\begin{aligned}\ln p(x_{o}|z_{o})= & -\frac{1}{2}\left[\sumjo c_{\kappa_{j}}+z_{o}^{T}\left(\sumjo\Omega_{\kappa_{j}}\right)z_{o}+\sumjo\Delta_{\kappa_{j}}\right]\\
 & \quad\quad\quad+\left(\sumjo\gamma_{\kappa_{j}}^{T}\right)z_{o}-\frac{k_{o}}{2}\ln(2\pi)
\end{aligned}
\label{eq:likmaster}
\end{equation}
where the nodes $\kappa_{j}\in\children(o)$.
\end{prop}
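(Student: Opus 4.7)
The plan is to derive each recursive formula by substituting the representation of $q_u$, $\Psi_u$, and $\Sig_u$ from \prettyref{prop:prop1} into the definitions \eqref{eq:cu}--\eqref{eq:Deltau}, and then simplifying with the Woodbury inversion formula for $\Sig_u^{-1}$ and the stated Sylvester-type identity for $\det\Sig_u$. First I would set $D=\Siglr$ and $\Psi_{\ast}=\Psilr$ so that \eqref{eq:qpsisig} reads $\Sig_u = D + \Psi_{\ast} V_u \Psi_{\ast}^T$; applying Woodbury with $A=D$, $U=\Psi_{\ast}$, $C=V_u$, $V=\Psi_{\ast}^T$ yields
\[
\Sig_u^{-1} \;=\; D^{-1} - D^{-1}\Psi_{\ast}\Lambda_u\Psi_{\ast}^T D^{-1},
\]
once one observes that $\Psi_{\ast}^T D^{-1}\Psi_{\ast} = \sumOmega$ (by block-diagonality of $D$), so the inner inverse is exactly $\Lambda_u$ as defined in \eqref{eq:deflambda}.

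For the recursions on $c_u$, $\gamma_u$, and $\Omega_u$, I will let $\tilde{x} = (\xlr - \qlr) - \Psi_{\ast} w_u$, so that $x_u - q_u = \tilde{x}$ by \eqref{eq:qu}. A direct block-wise expansion of $\tilde{x}^T D^{-1} \tilde{x}$ produces a quadratic in $w_u$ whose coefficients are $\sumc$, $\sumgam$, and $\sumOmega$, while $\Psi_{\ast}^T D^{-1} \tilde{x} = b_u$ by \eqref{eq:defbu}. Substituting these into $c_u = \tilde{x}^T \Sig_u^{-1} \tilde{x}$ via the Woodbury form gives \eqref{eq:curecur}. For $\gamma_u$ and $\Omega_u$, the key step is to factor
\[
\Psi_u^T \Sig_u^{-1} \;=\; \Phi_u^T\bigl[I_{k_u} - (\sumOmega)\Lambda_u\bigr]\Psi_{\ast}^T D^{-1},
\]
a single consequence of \eqref{eq:psiu} and the Woodbury expansion; applying this to $\tilde{x}$ yields \eqref{eq:gammaurecur}, and applying it to $\Psi_u = \Psi_{\ast}\Phi_u$ combined with the elementary identity $[I_{k_u} - A\Lambda_u]A = A[I_{k_u} - \Lambda_u A]$ for $A=\sumOmega$ yields \eqref{eq:Omegaurecur}.

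For \eqref{eq:Deltaurecur}, applying the Sylvester-type identity to $\Sig_u = D + \Psi_{\ast} V_u \Psi_{\ast}^T$ and taking logarithms gives
\[
\ln\det\Sig_u \;=\; \ln\det D + \ln\det V_u + \ln\det\bigl(V_u^{-1} + \Psi_{\ast}^T D^{-1}\Psi_{\ast}\bigr);
\]
block-diagonality reduces $\ln\det D$ to $\sumDelta$ and the cross term to $\ln\det(V_u^{-1} + \sumOmega)$, producing \eqref{eq:Deltaurecur}. Finally, for the master formula \eqref{eq:likmaster} I would use conditional independence of subtrees to write $\ln p(x_o|z_o) = \sumjo \ln p(x_{\kappa_j}|z_o)$, then expand each Gaussian log-density by rewriting the quadratic form $(x_{\kappa_j} - q_{\kappa_j} - \Psi_{\kappa_j}z_o)^T \Sig_{\kappa_j}^{-1}(x_{\kappa_j} - q_{\kappa_j} - \Psi_{\kappa_j}z_o)$ as $c_{\kappa_j} - 2\gamma_{\kappa_j}^T z_o + z_o^T \Omega_{\kappa_j} z_o$ by \eqref{eq:cu}--\eqref{eq:Omegau}, collect terms in $z_o$, and note that the tip dimensions sum to $k_o$.

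The main bookkeeping obstacle will be recognizing that the $\gamma_u$ and $\Omega_u$ recursions share a single Woodbury-derived factor $\Phi_u^T[I_{k_u} - (\sumOmega)\Lambda_u]\Psi_{\ast}^T D^{-1}$, so that extracting this common factorization avoids redoing the Woodbury expansion twice; once that factor is isolated and the block-diagonal inner products are identified with $\sumc$, $\sumgam$, $\sumOmega$, and $\sumDelta$, everything else reduces to bookkeeping.
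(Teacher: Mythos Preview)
Your proposal is correct and follows essentially the same approach as the paper: apply the Woodbury identity to $\Sig_u = D + \Psi_\ast V_u \Psi_\ast^T$, exploit block-diagonality of $D$ to identify the inner products with $\sumc$, $\sumgam$, $\sumOmega$, use the Sylvester determinant identity for $\Delta_u$, and expand the Gaussian density for \eqref{eq:likmaster}. The paper packages the Woodbury step into a small bilinear-form lemma that it invokes separately for $c_u$, $\gamma_u$, $\Omega_u$, whereas you extract the common factor $\Phi_u^T[I_{k_u}-(\sumOmega)\Lambda_u]\Psi_\ast^T D^{-1}$ once and reuse it, which is a mild streamlining of the same computation.
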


Using the above proposition, one can start with computing $(c,\gamma,\Omega,\Delta)$
for all tips and use \prettyref{eq:curecur}\textendash \eqref{eq:Deltaurecur}
to move upward until the reaching the direct children of the root.
Alternatively, this can also be done by a post-order traversal through
the tree, which is usually more convenient to implement if the tree
is stored in a linked-list-like data structure.

The byproducts $b_{\eta}$, $\Lambda_{\eta}$, $\Omega_{\eta}$, $\gamma_{\eta}$,
$c_{\eta}$, and $\Delta_{\eta}$ should be saved if the gradient
and Hessian is needed as they appears in both. Considering the gradient,
observe that in \prettyref{eq:likmaster}, the log-likelihood $\ln p(x_{o}|z_{o})$
comprises four elements that need to be differentiated: $\Omega_{\kappa}$,
$\gamma_{\kappa}$, $c_{\kappa}$, and $\Delta_{\kappa}$. In general,
for any node $\eta$, the derivatives $\frac{\partial\Omega_{\eta}}{\partial\theta_{\eta}}$,
$\frac{\partial\gamma_{\eta}}{\partial\theta_{\eta}}$, $\frac{\partial c_{\eta}}{\partial\theta_{\eta}}$,
and $\frac{\partial\Delta_{\eta}}{\partial\theta_{\eta}}$, where
$\theta_{\eta}$ means any element of one of $\Phi_{\eta}$, $w_{\eta}$,
and $\Sigma_{\eta}$, can be easily obtained by differentiating \prettyref{eq:curecur}\textendash \eqref{eq:Deltaurecur}
directly. Also, if the node $\eta_{2}$ is not a descendent of $\eta_{1}$
then $\frac{\partial\Omega_{\eta_{1}}}{\partial\theta_{\eta_{2}}}$,
$\frac{\partial\gamma_{\eta_{1}}}{\partial\theta_{\eta_{2}}}$, $\frac{\partial c_{\eta_{1}}}{\partial\theta_{\eta_{2}}}$,
and $\frac{\partial\Delta_{\eta_{1}}}{\partial\theta_{\eta_{2}}}$
are all zero because $\theta_{\eta_{2}}$ is absent in \prettyref{eq:curecur}\textendash \eqref{eq:Deltaurecur}.
For the case where $\eta_{2}$ is a descendent of $\eta_{1}$, the
following proposition provides a way to compute $\frac{\partial\Omega_{\eta_{1}}}{\partial\theta_{\eta_{2}}}$,
$\frac{\partial\gamma_{\eta_{1}}}{\partial\theta_{\eta_{2}}}$, $\frac{\partial c_{\eta_{1}}}{\partial\theta_{\eta_{2}}}$,
and $\frac{\partial\Delta_{\eta_{1}}}{\partial\theta_{\eta_{2}}}$
using $\frac{\partial\Omega_{\eta_{2}}}{\partial\theta_{\eta_{2}}}$,
$\frac{\partial\gamma_{\eta_{2}}}{\partial\theta_{\eta_{2}}}$, $\frac{\partial c_{\eta_{2}}}{\partial\theta_{\eta_{2}}}$,
and $\frac{\partial\Delta_{\eta_{2}}}{\partial\theta_{\eta_{2}}}$.
\begin{rem*}
By writing $\partial\theta_{\eta}$, when $\theta_{\eta}$ is an element
of the matrix $V_{\eta}$, we mean the derivatives that have already
taken into account the fact that $V_{\eta}$ is a symmetric matrix\textemdash more
formally, a directional derivative where two elements are simultaneously
perturbed unless $\theta_{\eta}$ is a diagonal element of $V_{\eta}$.
For example, when $V_{\eta}$ is $2\times2$ and $\theta_{\eta}$
is $V_{\eta}\left[1,2\right]$, by saying $\frac{\partial\Omega_{\eta}}{\partial\theta_{\eta}}$
we mean the directional derivative of $\Omega_{\eta}$ in the direction
of 
\[
\begin{bmatrix}0 & 1\\
1 & 0
\end{bmatrix}\textrm{;}
\]
simply put, it is just $\frac{\partial\Omega_{\eta}}{\partial V_{\eta}\left[1,2\right]}+\frac{\partial\Omega_{\eta}}{\partial V_{\eta}\left[2,1\right]}$.
However, when we don't write $\partial\theta_{\eta}$ (for example
when we write $\partial V_{\eta}$ or $\partial V_{ij}$ explicitly)
we mean simple partial derivatives in the usual sense. Readers should
also note that $V$ in fact contains only $k\left(k+1\right)/2$ many
free parameters in practice due to symmetricity. These ``non-symmetric
derivatives'' $\partial V_{\eta}$ are nonetheless convenient sometimes,
because they can be easily transformed into practical and symmetric
one while allowing various parameterisations to be used to ensure
symmetricity (and positive definiteness). As an example of the usefulness
of $\partial V_{\eta}$, when $\Omega$ is a function of symmetric
positive definite $V$ and we may parameterize $V$ using a lower-triangular
Cholesky decomposition $V=LL^{T}$, then for any $\ell\ge m$ we have
(proof in Section 10 of Supplementary Information)
\begin{align}
\frac{\partial\Omega}{\partial L_{\ell m}} & =\sum_{i=1}^{k}L_{im}\left(\frac{\partial\Omega}{\partial V_{\ell i}}+\frac{\partial\Omega}{\partial V_{i\ell}}\right);\label{eq:choleskychain-1}
\end{align}
while if there were more restrictions on $V$ than symmetric positive
definite then it is easy to reparametrise accordingly.

\end{rem*}
\begin{prop}
\label{prop:dchnmaster} Let $u=u_{1},u_{2},\ldots,u_{n}=\eta$ be
any chain of nodes along a branch of the phylogeny with $u_{k}=\parent(u_{k+1})$
for all $k$. Define 
\begin{align}
H_{u_{k}} & =I_{u_{k}}-\Lambda_{u_{k}}\sum_{\iota\in\children(u_{k})}\Omega_{\iota},\label{eq:Hdef}\\
a_{u_{k}} & =\Lambda_{u_{k}}b_{u_{k}}+w_{u_{k}}=H_{u_{k}}w_{u_{k}}+\Lambda_{u_{k}}\sum_{\iota\in\children(u_{k})}\gamma_{\iota},\label{eq:adef}
\end{align}
for all $k\le n-1$ and, for $k\le n$ we define, recursively, 
\begin{align}
F_{u_{k}}^{\eta} & =\begin{cases}
F_{u_{k+1}}^{\eta}H_{u_{k}}\Phi_{u_{k}}, & k\le n-1,\\
I, & k=n,
\end{cases}\label{eq:Fdef}\\
g_{u_{k}}^{\eta} & =\begin{cases}
g_{u_{k+1}}^{\eta}+F_{u_{k+1}}^{\eta}a_{u_{k}}, & k\le n-1,\\
0, & k=n,
\end{cases}\label{eq:gdef}\\
K_{u_{k}}^{\eta} & =\begin{cases}
K_{u_{k+1}}^{\eta}+F_{u_{k+1}}^{\eta}\Lambda_{u_{k}}F_{u_{k+1}}^{\eta T}, & k\le n-1,\\
0, & k=n.
\end{cases}\label{eq:kdef}
\end{align}
Then for $1\le k\le n$ it follows that 
\begin{align}
\frac{\partial\Omega_{u_{k}}}{\partial\theta_{\eta}} & =F_{u_{k}}^{\eta T}\frac{\partial\Omega_{\eta}}{\partial\theta_{\eta}}F_{u_{k}}^{\eta};\label{eq:dchnOmega}\\
\frac{\partial\gamma_{u_{k}}}{\partial\theta_{\eta}} & =F_{u_{k}}^{\eta T}\left(\frac{\partial\gamma_{\eta}}{\partial\theta_{\eta}}-\frac{\partial\Omega_{\eta}}{\partial\theta_{\eta}}g_{u_{k}}^{\eta}\right);\label{eq:dchngamma}\\
\frac{\partial c_{u_{k}}}{\partial\theta_{\eta}} & =\frac{\partial c_{\eta}}{\partial\theta_{\eta}}+g_{u_{k}}^{\eta T}\left(\frac{\partial\Omega_{\eta}}{\partial\theta_{\eta}}g_{u_{k}}^{\eta}-2\frac{\partial\gamma_{\eta}}{\partial\theta_{\eta}}\right);\text{ and }\label{eq:dchnc}\\
\frac{\partial{\Delta}_{u_{k}}}{\partial\theta_{\eta}} & =\frac{\partial\Delta_{\eta}}{\partial\theta_{\eta}}+\text{tr}\left(\frac{\partial\Omega_{\eta}}{\partial\theta_{\eta}}K_{u_{k}}^{\eta}\right).\label{eq:dchnDelta}
\end{align}
Furthermore, 
\begin{equation}
\begin{aligned}\frac{\partial\ln p(x_{o}|z_{o})}{\partial\theta_{\eta}}= & -\frac{1}{2}\left[\sumjo\frac{\partial c_{\kappa_{j}}}{\partial\theta_{\eta}}+z_{o}^{T}\left(\sumjo\frac{\partial\Omega_{\kappa_{j}}}{\partial\theta_{\eta}}\right)z_{o}+\sumjo\frac{\partial\Delta_{\kappa_{j}}}{\partial\theta_{\eta}}\right]\\
 & +\left(\sumjo\frac{\partial\gamma_{\kappa_{j}}}{\partial\theta_{\eta}}^{T}\right)z_{o},
\end{aligned}
\label{eq:dgradmaster}
\end{equation}
where, in each of the four sums in \ref{eq:dgradmaster} respectively,
only the term in which $\kappa_{j}$ is an ancestor of $\eta$ is
non-zero.
\end{prop}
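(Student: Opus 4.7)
The plan is to proceed by downward induction on $k$, starting at $k=n$ where $u_n=\eta$ and the claims \eqref{eq:dchnOmega}--\eqref{eq:dchnDelta} reduce to trivial identities (because $F_\eta^\eta=I$, $g_\eta^\eta=0$, $K_\eta^\eta=0$), and then showing that validity at level $k+1$ implies validity at level $k$. Each inductive step is driven by differentiating the one-step recursions \eqref{eq:curecur}--\eqref{eq:Deltaurecur} at node $u_k$; the key structural fact is that since $\eta$ is a descendant of $u_{k+1}$ alone among the children of $u_k$, the sums $\sumOmega$, $\sumgam$, $\sum c_\iota$, $\sum \Delta_\iota$ inherit $\theta_\eta$-dependence only through the $u_{k+1}$-summand. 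Writing $S:=\sum_{\iota\in\children(u_k)}\Omega_\iota$ for brevity, so $\Lambda_{u_k}=(V_{u_k}^{-1}+S)^{-1}$ and $H_{u_k}=I-\Lambda_{u_k}S$, I would first establish two algebraic lemmas used repeatedly: the inverse-differentiation identity $\partial\Lambda_{u_k}/\partial\theta_\eta=-\Lambda_{u_k}(\partial\Omega_{u_{k+1}}/\partial\theta_\eta)\Lambda_{u_k}$, and the transposition identity $H_{u_k}^T=I-S\Lambda_{u_k}$ obtained from symmetry of $S$ and $\Lambda_{u_k}$.

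For \eqref{eq:dchnOmega}, rewrite \eqref{eq:Omegaurecur} as $\Omega_{u_k}=\Phi_{u_k}^T S H_{u_k}\Phi_{u_k}$; differentiating and using the two lemmas collapses $\partial(SH_{u_k})/\partial\theta_\eta$ to $H_{u_k}^T(\partial\Omega_{u_{k+1}}/\partial\theta_\eta)H_{u_k}$. Substituting the inductive hypothesis for $\partial\Omega_{u_{k+1}}/\partial\theta_\eta$ and using $F_{u_k}^\eta=F_{u_{k+1}}^\eta H_{u_k}\Phi_{u_k}$ from \eqref{eq:Fdef} completes the step. For \eqref{eq:dchngamma}, start from the equivalent form $\gamma_{u_k}=\Phi_{u_k}^T H_{u_k}^T b_{u_k}$, differentiate the product, and collect terms; the cross-term $-\Phi_{u_k}^T H_{u_k}^T(\partial\Omega_{u_{k+1}}/\partial\theta_\eta)\Lambda_{u_k}b_{u_k}$ combines with $\Phi_{u_k}^T H_{u_k}^T(\partial b_{u_k}/\partial\theta_\eta)$ to yield $\Phi_{u_k}^T H_{u_k}^T[\partial\gamma_{u_{k+1}}/\partial\theta_\eta-(\partial\Omega_{u_{k+1}}/\partial\theta_\eta)a_{u_k}]$ after invoking $\Lambda_{u_k}b_{u_k}+w_{u_k}=a_{u_k}$ from \eqref{eq:adef}; the inductive hypothesis and \eqref{eq:gdef} then produce $g_{u_k}^\eta=g_{u_{k+1}}^\eta+F_{u_{k+1}}^\eta a_{u_k}$.

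The step \eqref{eq:dchnc} is the most computationally involved: I would differentiate \eqref{eq:curecur} term by term, handle the quadratic $b_{u_k}^T\Lambda_{u_k}b_{u_k}$ using the inverse-differentiation identity, and then use the just-obtained expression for $\partial\gamma_{u_{k+1}}/\partial\theta_\eta$ to cancel all $w_{u_k}$- and $b_{u_k}$-dependence into the clean combination $g_{u_k}^{\eta T}[(\partial\Omega_\eta/\partial\theta_\eta)g_{u_k}^\eta-2\partial\gamma_\eta/\partial\theta_\eta]$; the bookkeeping here is the main obstacle, since essentially every term in $c_{u_k}$ contributes and one must verify that the pieces assemble into precisely \eqref{eq:dchnc}. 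Finally for \eqref{eq:dchnDelta}, Jacobi's formula applied to $\ln\det(V_{u_k}^{-1}+S)$ gives $\operatorname{tr}(\Lambda_{u_k}\,\partial\Omega_{u_{k+1}}/\partial\theta_\eta)$; substituting the inductive hypothesis and using cyclicity of the trace turns this into $\operatorname{tr}((\partial\Omega_\eta/\partial\theta_\eta)\cdot F_{u_{k+1}}^\eta\Lambda_{u_k}F_{u_{k+1}}^{\eta T})$, which is exactly the new summand defining $K_{u_k}^\eta$ in \eqref{eq:kdef}.

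The closing assertion \eqref{eq:dgradmaster} follows immediately from \eqref{eq:likmaster} by linearity of differentiation: since $\theta_\eta$ is attached to a particular non-root node and $c_{\kappa_j},\gamma_{\kappa_j},\Omega_{\kappa_j},\Delta_{\kappa_j}$ depend on $\theta_\eta$ only when $\eta$ is a descendant of $\kappa_j$, only the unique ancestor-of-$\eta$ summand survives in each of the four sums. The conceptual obstacles I expect are minor beyond bookkeeping: one must carefully track which products $\Lambda_{u_k}S$ versus $S\Lambda_{u_k}$ appear, treat the directional symmetrised derivative convention for entries of $V_\eta$ discussed in the preceding remark (which poses no problem because the derivations are linear in $\partial V_\eta$), and ensure that the base case at $k=n$ correctly recovers the identities $\partial\Omega_\eta/\partial\theta_\eta$, $\partial\gamma_\eta/\partial\theta_\eta$, $\partial c_\eta/\partial\theta_\eta$, $\partial\Delta_\eta/\partial\theta_\eta$ from the vacuous definitions of $F^\eta_\eta,g^\eta_\eta,K^\eta_\eta$.
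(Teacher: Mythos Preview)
Your proposal is correct and matches the paper's proof in both structure and substance. The paper likewise proves \eqref{eq:dchnOmega}--\eqref{eq:dchnDelta} by downward induction on $k$, using exactly the two lemmas you single out (the identity $\partial\Lambda_{u_k}/\partial\theta_\eta=-\Lambda_{u_k}(\partial\Omega_{u_{k+1}}/\partial\theta_\eta)\Lambda_{u_k}$ and the symmetry $H_{u_k}^T=I-S\Lambda_{u_k}$), and derives \eqref{eq:dgradmaster} by differentiating \eqref{eq:likmaster}. The only cosmetic difference is in the handling of \eqref{eq:dchnc} and \eqref{eq:dchnDelta}: after computing the one-step increment $\partial c_{u_\ell}/\partial\theta_\eta-\partial c_{u_{\ell+1}}/\partial\theta_\eta$ (respectively for $\Delta$), the paper sums this telescoping difference from $\ell=k$ to $n-1$ rather than invoking the inductive hypothesis directly, and for $c$ it inserts an auxiliary double-sum identity to regroup the quadratic pieces into $g_{u_k}^{\eta T}(\partial\Omega_\eta/\partial\theta_\eta)g_{u_k}^\eta$; your pure-induction phrasing reaches the same endpoint more directly once you use that $\partial\Omega_\eta/\partial\theta_\eta$ is symmetric.
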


\begin{prop}
\label{prop:hessselflist}The partial derivatives in the right-hand-side
of \prettyref{eq:dchnOmega}-\eqref{eq:dchnDelta} can be written
as
\begin{align}
\frac{\partial\Omega_{\eta}}{\partial\left(V_{\eta}\right)_{ij}} & =\Phi_{\eta}^{T}M_{\eta}G_{\eta}^{\left(ij\right)}M_{\eta}\Phi_{\eta};\label{eq:dodvself}\\
\frac{\partial\gamma_{\eta}}{\partial\left(V_{\eta}\right)_{ij}} & =\Phi_{\eta}^{T}M_{\eta}G_{\eta}^{\left(ij\right)}\xi_{\eta};\label{eq:dgdvself}\\
\frac{\partial c_{\eta}}{\partial\left(V_{\eta}\right)_{ij}} & =\xi_{\eta}^{T}G_{\eta}^{\left(ij\right)}\xi_{\eta};\label{eq:dcdvself}\\
\frac{\partial\Delta_{\eta}}{\partial V_{\eta}} & =D_{\eta};\label{eq:dddvself}\\
\frac{\partial\Omega_{\eta}}{\partial\left(\Phi_{\eta}\right)_{ij}} & =\Phi_{\eta}^{T}M_{\eta}Z_{\eta}e_{i}e_{j}^{T}+e_{j}e_{i}^{T}M_{\eta}Z_{\eta}\Phi_{\eta};\label{eq:dodphiself}\\
\frac{\partial\gamma_{\eta}}{\partial\left(\Phi_{\eta}\right)_{ij}} & =\left[\left(Z_{\eta}e_{i}\right)^{T}\xi_{\eta}\right]e_{j};\label{eq:dgdphiself}\\
\frac{\partial\gamma_{\eta}}{\partial\left(w_{\eta}\right)_{i}} & =-\Phi_{\eta}^{T}Z_{\eta}^{T}M_{\eta}e_{i};\label{eq:dgdwself}\\
\frac{\partial c_{\eta}}{\partial w_{\eta}} & =-2Z_{\eta}^{T}\xi_{\eta};\label{eq:dcdwself}\\
\frac{\partial c_{\eta}}{\partial\Phi_{\eta}} & =\frac{\partial\Delta_{\eta}}{\partial\Phi_{\eta}}=\frac{\partial\Omega_{\eta}}{\partial w}=\frac{\partial\Delta_{\eta}}{\partial w_{\eta}}=0,\label{eq:dxdxselfvanish}
\end{align}
where $\delta$ denotes the Kronecker delta, and
\begin{align}
G_{\eta}^{\left(ij\right)} & =\begin{cases}
-V_{\eta}^{-1}e_{i}e_{j}^{T}V_{\eta}^{-1}, & \textrm{\ensuremath{\eta} is a tip,}\\
-\Lambda_{\eta}V_{\eta}^{-1}e_{i}e_{j}^{T}V_{\eta}^{-1}\Lambda_{\eta}, & \textrm{otherwise;}
\end{cases}\\
M_{\eta} & =\begin{cases}
I, & \textrm{\ensuremath{\eta} is a tip,}\\
\sum_{\iota\in\children(\eta)}\Omega_{\iota}, & \textrm{otherwise;}
\end{cases}\\
\xi_{\eta} & =\begin{cases}
x_{\eta}-w_{\eta}, & \textrm{\ensuremath{\eta} is a tip,}\\
b_{\eta}, & \textrm{otherwise;}
\end{cases}\\
Z_{\eta} & =\begin{cases}
V_{\eta}^{-1}, & \textrm{\ensuremath{\eta} is a tip,}\\
H_{\eta}, & \textrm{otherwise;}
\end{cases}\\
D_{\eta} & =\begin{cases}
V_{\eta}^{-1}, & \textrm{\ensuremath{\eta} is a tip,}\\
V_{\eta}^{-1}-V_{\eta}^{-1}\Lambda_{\eta}V_{\eta}^{-1}, & \textrm{otherwise.}
\end{cases}
\end{align}
The partial derivatives that are not listed above are zero.
\end{prop}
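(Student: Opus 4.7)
I would verify each listed identity by a case split on whether $\eta$ is a tip or an internal node, relying on the observation that the auxiliary objects $M_\eta, \xi_\eta, Z_\eta, D_\eta, G_\eta^{(ij)}$ are defined precisely so that the two cases collapse into a single uniform expression. In the tip case, substituting $(q_\eta,\Psi_\eta,\Sig_\eta) = (w_\eta,\Phi_\eta,V_\eta)$ into \eqref{eq:cu}--\eqref{eq:Deltau} gives the concrete forms $c_\eta = (x_\eta-w_\eta)^T V_\eta^{-1}(x_\eta-w_\eta)$, $\gamma_\eta = \Phi_\eta^T V_\eta^{-1}(x_\eta-w_\eta)$, $\Omega_\eta = \Phi_\eta^T V_\eta^{-1}\Phi_\eta$, $\Delta_\eta = \ln\det V_\eta$. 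Each stated identity then follows by direct differentiation using $\partial(V_\eta^{-1})/\partial(V_\eta)_{ij}=-V_\eta^{-1}e_ie_j^T V_\eta^{-1}$ and Jacobi's formula $\partial\ln\det V_\eta/\partial V_\eta = V_\eta^{-1}$ (invoking symmetry of $V_\eta$), together with the observation that in the tip case $M_\eta=I$, $\xi_\eta=x_\eta-w_\eta$, $Z_\eta=V_\eta^{-1}$, $D_\eta=V_\eta^{-1}$, $G_\eta^{(ij)}=-V_\eta^{-1}e_ie_j^TV_\eta^{-1}$, so the listed ``sandwich'' formulas reproduce exactly the derivatives just computed.

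For the internal-node case I would start from the recursions \eqref{eq:curecur}--\eqref{eq:Deltaurecur}, rewritten with $M_u=\sum_j\Omega_{\eta_j}$, $b_u=\sum_j\gamma_{\eta_j}-M_uw_u$, and $\Lambda_u=(V_u^{-1}+M_u)^{-1}$. Crucially, $M_u$ and the child-summed quantities $\sum_j\gamma_{\eta_j}$, $\sum_j c_{\eta_j}$, $\sum_j\Delta_{\eta_j}$ are inert under the local partials $\partial/\partial V_u$, $\partial/\partial\Phi_u$, $\partial/\partial w_u$, so the differentiation propagates only through (i) $\Lambda_u$ via $V_u^{-1}$, (ii) the explicit $V_u$ factor in $\Delta_u$, and (iii) the explicit $\Phi_u$ and $w_u$ factors. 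Two standard matrix-calculus identities do the work: the chain rule on $\Lambda_u$ gives $\partial\Lambda_u/\partial(V_u)_{ij}=\Lambda_u V_u^{-1}e_ie_j^T V_u^{-1}\Lambda_u$, and Jacobi's formula applied to $\ln\det V_u+\ln\det(V_u^{-1}+M_u)$ gives $V_u^{-1}-V_u^{-1}\Lambda_uV_u^{-1}=D_u$. Combined with the compact factorisations $\Omega_u=\Phi_u^T M_u H_u\Phi_u$ and $\gamma_u=\Phi_u^T H_u^T b_u$ implied by \eqref{eq:Omegaurecur}--\eqref{eq:gammaurecur} (with $H_u=I-\Lambda_u M_u$), each listed identity falls out after routine algebra; for example, $\partial\Omega_u/\partial(V_u)_{ij}=-\Phi_u^T M_u\Lambda_u V_u^{-1}e_ie_j^T V_u^{-1}\Lambda_u M_u\Phi_u=\Phi_u^T M_u G_u^{(ij)}M_u\Phi_u$. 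The vanishing relations \eqref{eq:dxdxselfvanish} are immediate from the absence of the relevant parameter in the corresponding recursion: $\Phi_u$ does not occur in \eqref{eq:curecur} or \eqref{eq:Deltaurecur}, and $w_u$ does not occur in \eqref{eq:Omegaurecur} or \eqref{eq:Deltaurecur}.

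The hard part is not conceptual but clerical: one must carefully track transposes and exploit the symmetry of $V_u$, $\Lambda_u$, and $M_u$ (and hence of $V_u^{-1}\Lambda_u V_u^{-1}$) so that each derivative lands in precisely the $(\cdot)^T(\cdot)(\cdot)$ pattern prescribed by the statement, and one must also translate the ordinary partial with respect to a single entry of $V_u$ (not the symmetric directional derivative $\partial\theta_\eta$) into the quoted formula. No individual step requires a new idea; the value of the proposition lies in having the auxiliary matrices $M_\eta,\xi_\eta,Z_\eta,D_\eta,G_\eta^{(ij)}$ absorb the tip-vs-internal dichotomy into a single uniform interface, which can then be fed into the chain rule of \prettyref{prop:dchnmaster} to assemble the gradient and, subsequently, the Hessian.
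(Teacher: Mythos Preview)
Your proposal is correct and follows essentially the same route as the paper's own proof: a tip-versus-internal case split, direct differentiation of \eqref{eq:cu}--\eqref{eq:Deltau} in the tip case and of the recursions \eqref{eq:curecur}--\eqref{eq:Deltaurecur} in the internal case, driven by the same two matrix-calculus identities $\partial V_\eta^{-1}/\partial(V_\eta)_{ij}=-V_\eta^{-1}e_ie_j^TV_\eta^{-1}$ and $\partial\Lambda_\eta/\partial(V_\eta)_{ij}=\Lambda_\eta V_\eta^{-1}e_ie_j^TV_\eta^{-1}\Lambda_\eta$ together with Jacobi's formula, and the vanishing relations read off from the absence of the relevant parameter in the recursion. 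The paper's version simply writes out each computation line by line, whereas you summarise the mechanism; no step in your outline is missing or mistaken.
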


Note that in \prettyref{eq:dgradmaster}, in order to compute the
gradient, we want $\frac{\partial\Omega_{\kappa_{j}}}{\partial\theta_{\eta}}$,
$\frac{\partial\gamma_{\kappa_{j}}}{\partial\theta_{\eta}}$, $\frac{\partial c_{\kappa_{j}}}{\partial\theta_{\eta}}$,
and $\frac{\partial\Delta_{\kappa_{j}}}{\partial\theta_{\eta}}$ for
all $\eta$. Using \prettyref{eq:Fdef}\textendash \eqref{eq:dchnDelta},
these partial derivatives can be computed by, starting with the trivial
$F_{\kappa_{j}}^{\kappa_{j}}$, $g_{\kappa_{j}}^{\kappa_{j}}$, and
$K_{\kappa_{j}}^{\kappa_{j}}$, recursively updating $\left(F_{\kappa_{j}}^{\eta},g_{\kappa_{j}}^{\eta},K_{\kappa_{j}}^{\eta}\right)$
and outputing $\left(\frac{\partial\Omega_{\kappa_{j}}}{\partial\theta_{\eta}},\frac{\partial\gamma_{\kappa_{j}}}{\partial\theta_{\eta}},\frac{\partial c_{\kappa_{j}}}{\partial\theta_{\eta}},\frac{\partial\Delta_{\kappa_{j}}}{\partial\theta_{\eta}}\right)$
while pre-order traversing all $\eta$ in the sub-tree rooted at $\kappa_{j}$.

\section{Hessian\label{sec:Hessian}}

\begin{figure}
\centering{}\includegraphics[width=0.48\textwidth]{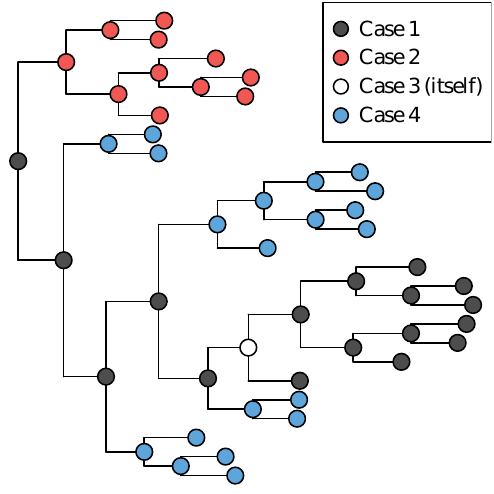}\caption{To examine the formulae for $\frac{\partial^{2}\ln p(x_{o}|z_{o})}{\partial\theta_{\eta_{1}}\partial\theta'_{\eta_{2}}}$,
the relationship between any pair of nodes $\eta_{1}$ and $\eta_{2}$
are classified into four cases. Each node in the example tree is painted
according to their relationship with the Case-3 node. The case for
the root node is in fact irrelavant because the root does not carry
any parameters. \label{fig:fourcases}}
\end{figure}

Now we turn our attention to the second-order partial derivatives
$\frac{\partial^{2}\ln p(x_{o}|z_{o})}{\partial\theta_{\eta_{1}}\partial\theta'_{\eta_{2}}}$.
We consider the following four cases seperately:
\begin{description}
\item [{Case~1}] $\eta_{1}$ is an ancestor of $\eta_{2}$;
\item [{Case~2}] the lowest common ancestor of $\eta_{1}$ and $\eta_{2}$
is the root;
\item [{Case~3}] $\eta_{1}$ and $\eta_{2}$ is the same node;
\item [{Case~4}] otherwise, that is, their lowest common ancestor is neither
the root, nor either of $\eta_{1}$ or $\eta_{2}$ themselves. These
four cases are illustrated in \prettyref{fig:fourcases}.
\end{description}
In Case 2, all the cross derivatives $\frac{\partial^{2}\ln p(x_{o}|z_{o})}{\partial\theta_{\eta_{1}}\partial\theta'_{\eta_{2}}}$
are trivially zero because the $\theta_{\eta_{2}}$ does not enter
into $\frac{\partial\ln p(x_{o}|z_{o})}{\partial\theta_{\eta_{1}}}$
at all, according to \prettyref{eq:dgradmaster} and the recursion
in \prettyref{eq:curecur}\textendash \eqref{eq:Deltaurecur}. This
is unsurprising as the trait at $\eta_{1}$ and $\eta_{2}$ should
be independent with each other if they fall into Case 2. Case 3 is
straightforward because $\left(F_{u_{j}}^{\eta},g_{u_{j}}^{\eta},K_{u_{j}}^{\eta}\right)$
is independent of $\theta_{\eta}$ by construction; therefore, taking
the derivatives of \prettyref{eq:dgradmaster} is only a matter of
finding the $\left(\frac{\partial^{2}\Omega_{\eta_{1}}}{\partial\theta_{\eta_{1}}\partial\theta'_{\eta_{1}}},\frac{\partial^{2}\gamma_{\eta_{1}}}{\partial\theta_{\eta_{1}}\partial\theta'_{\eta_{1}}},\frac{\partial^{2}c_{\eta_{1}}}{\partial\theta_{\eta_{1}}\partial\theta'_{\eta_{1}}},\frac{\partial^{2}\Delta_{\eta_{1}}}{\partial\theta_{\eta_{1}}\partial\theta'_{\eta_{1}}}\right)$
that appears in the right hand sides of \prettyref{eq:dchnOmega}\textendash \eqref{eq:dchnDelta}.
These second order derivatives are shown in the following.
\begin{prop}
(Case 3 scenario) The second derivatives $\left(\frac{\partial^{2}\Omega_{\eta}}{\partial\theta_{\eta}\partial\theta'_{\eta}},\frac{\partial^{2}\gamma_{\eta}}{\partial\theta_{\eta}\partial\theta'_{\eta}},\frac{\partial^{2}c_{\eta}}{\partial\theta_{\eta}\partial\theta'_{\eta}},\frac{\partial^{2}\Delta_{\eta}}{\partial\theta_{\eta}\partial\theta'_{\eta}}\right)$
are zero except that 
\begin{align}
\frac{\partial^{2}\Omega_{\eta}}{\partial\left(V_{\eta}\right)_{ij}\partial\left(V_{\eta}\right)_{pq}} & =\Phi_{\eta}^{T}M_{\eta}\tilde{G}_{\eta}^{\left(ijpq\right)}M_{\eta}\Phi_{\eta}\label{eq:hodvdv}\\
\frac{\partial^{2}\Omega_{\eta}}{\partial\left(V_{\eta}\right)_{ij}\partial\left(\Phi_{\eta}\right)_{pq}} & =e_{q}e_{p}^{T}M_{\eta}G_{\eta}^{\left(ij\right)}M_{\eta}\Phi_{\eta}+\Phi_{\eta}M_{\eta}G_{\eta}^{\left(ij\right)}M_{\eta}e_{p}e_{q}^{T}\label{eq:hodvdphi}\\
\frac{\partial^{2}\Omega_{\eta}}{\partial\left(\Phi_{\eta}\right)_{ij}\partial\left(\Phi_{\eta}\right)_{pq}} & =e_{q}e_{p}^{T}M_{\eta}Z_{\eta}e_{i}e_{j}^{T}+e_{j}e_{i}^{T}M_{\eta}Z_{\eta}e_{p}e_{q}^{T}\label{eq:hodphidphi}\\
\frac{\partial^{2}\gamma_{\eta}}{\partial\left(V_{\eta}\right)_{ij}\partial\left(V_{\eta}\right)_{pq}} & =\Phi_{\eta}^{T}M_{\eta}\tilde{G}_{\eta}^{\left(ijpq\right)}\xi_{\eta}\label{eq:hgamdvdv}\\
\frac{\partial^{2}\gamma_{\eta}}{\partial\left(V_{\eta}\right)_{ij}\partial\left(\Phi_{\eta}\right)_{pq}} & =e_{p}e_{q}^{T}M_{\eta}G_{\eta}^{\left(ij\right)}\xi_{\eta}\label{eq:hgamdvdphi}\\
\frac{\partial^{2}\gamma_{\eta}}{\partial\left(V_{\eta}\right)_{ij}\partial\left(w_{\eta}\right)_{p}} & =-\Phi_{\eta}M_{\eta}G_{\eta}^{\left(ij\right)}M_{\eta}e_{p}\label{eq:hgamdvdw}\\
\frac{\partial^{2}\gamma_{\eta}}{\partial\left(\Phi_{\eta}\right)_{ij}\partial\left(w_{\eta}\right)_{p}} & =-\left[\left(M_{\eta}Z_{\eta}e_{i}\right)^{T}e_{p}\right]e_{j}\label{eq:hgamdphidw}\\
\frac{\partial^{2}c_{\eta}}{\partial\left(V_{\eta}\right)_{ij}\partial\left(V_{\eta}\right)_{pq}} & =\xi_{\eta}^{T}\tilde{G}_{\eta}^{\left(ijpq\right)}\xi_{\eta}\label{eq:hcdvdv}\\
\frac{\partial^{2}c_{\eta}}{\partial\left(V_{\eta}\right)_{ij}\partial\left(w_{\eta}\right)_{p}} & =-e_{p}^{T}M_{\eta}G_{\eta}^{\left(ij\right)}\xi_{\eta}-\xi_{\eta}^{T}G_{\eta}^{\left(ij\right)}M_{\eta}e_{p}\label{eq:hcdvdw}\\
\frac{\partial^{2}\Delta_{\eta}}{\partial\left(V_{\eta}\right)_{ij}\partial\left(V_{\eta}\right)_{pq}} & =-e_{i}^{T}D_{\eta}e_{p}e_{q}^{T}D_{\eta}e_{j}\label{eq:hddvdv}
\end{align}
where 
\begin{align*}
\tilde{G}_{\eta}^{\left(ijpq\right)} & =\begin{cases}
-V_{\eta}^{-1}e_{p}e_{q}^{T}V_{\eta}^{-1}e_{i}e_{j}^{T}V_{\eta}^{-1}-V_{\eta}^{-1}e_{i}e_{j}^{T}V_{\eta}^{-1}e_{p}e_{q}^{T}V_{\eta}^{-1} & \textrm{\ensuremath{\eta} is tip;}\\
-\Lambda_{\eta}\left[V_{\eta}^{-1}e_{p}e_{q}^{T}D_{\eta}e_{i}e_{j}^{T}V_{\eta}^{-1}+V_{\eta}^{-1}e_{i}e_{j}^{T}D_{\eta}e_{p}e_{q}^{T}V_{\eta}^{-1}\right]\Lambda_{\eta} & \textrm{otherwise.}
\end{cases}
\end{align*}
\end{prop}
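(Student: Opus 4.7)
The plan is to obtain each identity by a direct differentiation of the corresponding first-order formula in \prettyref{prop:hessselflist} with respect to a second parameter at the same node $\eta$. The structural observation that makes this tractable is that every quantity appearing in the first-order formulae which involves only strictly descendent nodes --- in particular $M_\eta = \sum_{\iota\in\children(\eta)}\Omega_\iota$, the sum $\sum_{\iota\in\children(\eta)}\gamma_\iota$, and (for a tip) the observation $x_\eta$ --- carries no dependence on $(\Phi_\eta, w_\eta, V_\eta)$. As a consequence, $b_\eta = \sum_{\iota\in\children(\eta)}\gamma_\iota - M_\eta w_\eta$ depends only on $w_\eta$, while $\Lambda_\eta = (V_\eta^{-1}+M_\eta)^{-1}$ and $H_\eta = I - \Lambda_\eta M_\eta$ depend only on $V_\eta$. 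This splits the computation cleanly by parameter block and by whether $\eta$ is a tip or an internal node.

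The two workhorse identities are $\partial V_\eta^{-1}/\partial (V_\eta)_{pq} = -V_\eta^{-1} e_p e_q^T V_\eta^{-1}$ and, for internal $\eta$, $\partial \Lambda_\eta/\partial (V_\eta)_{pq} = \Lambda_\eta V_\eta^{-1} e_p e_q^T V_\eta^{-1} \Lambda_\eta$, the latter obtained from the former via the chain rule on $(V_\eta^{-1}+M_\eta)^{-1}$, using the $V_\eta$-independence of $M_\eta$. From these, $\tilde G_\eta^{(ijpq)}$ is obtained directly as $\partial G_\eta^{(ij)}/\partial (V_\eta)_{pq}$: the tip case is a two-term product-rule derivative of $-V_\eta^{-1} e_i e_j^T V_\eta^{-1}$, while the internal case yields four terms from differentiating $-\Lambda_\eta V_\eta^{-1} e_i e_j^T V_\eta^{-1} \Lambda_\eta$, which are regrouped using the key substitution $V_\eta^{-1}\Lambda_\eta V_\eta^{-1} = V_\eta^{-1} - D_\eta$ into the stated closed form.

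The remaining identities follow by the same recipe. Differentiation of \eqref{eq:dodvself}--\eqref{eq:dcdvself} with respect to a second entry of $V_\eta$ simply replaces $G_\eta^{(ij)}$ by $\tilde G_\eta^{(ijpq)}$, since $\Phi_\eta$, $M_\eta$ and $\xi_\eta$ carry no $V_\eta$-dependence (in the internal case, $\xi_\eta = b_\eta$ depends on $w_\eta$ rather than $V_\eta$). Differentiation with respect to an entry of $\Phi_\eta$ uses $\partial\Phi_\eta/\partial(\Phi_\eta)_{pq} = e_p e_q^T$, while differentiation with respect to an entry of $w_\eta$ uses $\partial \xi_\eta/\partial(w_\eta)_p = -e_p$ (tip) or $-M_\eta e_p$ (internal). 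Identity \eqref{eq:hddvdv} is obtained by differentiating $D_\eta$ once more: in the tip case $D_\eta = V_\eta^{-1}$ gives the result immediately, while in the internal case $D_\eta = V_\eta^{-1} - V_\eta^{-1}\Lambda_\eta V_\eta^{-1}$ yields a four-term derivative which again collapses, via $V_\eta^{-1}\Lambda_\eta V_\eta^{-1} = V_\eta^{-1} - D_\eta$, to $-D_\eta e_p e_q^T D_\eta$.

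The vanishing statements are read off by inspection of \prettyref{prop:hessselflist}: $\Delta_\eta$ is $w_\eta$- and $\Phi_\eta$-independent, $\Omega_\eta$ is $w_\eta$-independent, and the first-order $w_\eta$-derivatives of $\gamma_\eta$ are linear in $w_\eta$-independent matrices. The one genuinely non-routine step is the algebraic cancellation in the internal-node case for $\partial G_\eta^{(ij)}/\partial(V_\eta)_{pq}$ and $\partial D_\eta/\partial(V_\eta)_{pq}$, where producing the compact forms stated in the proposition hinges on the single substitution $V_\eta^{-1}\Lambda_\eta V_\eta^{-1} = V_\eta^{-1} - D_\eta$; once that cancellation is performed, every remaining identity follows mechanically.
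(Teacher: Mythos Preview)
Your proposal is correct and follows essentially the same route as the paper's own proof: both differentiate the first-order formulae of \prettyref{prop:hessselflist} directly, isolate $\partial G_\eta^{(ij)}/\partial(V_\eta)_{pq}=\tilde G_\eta^{(ijpq)}$ as the central computation, and in the internal-node case reduce the resulting four product-rule terms to the stated form via the identity $V_\eta^{-1}\Lambda_\eta V_\eta^{-1}=V_\eta^{-1}-D_\eta$. The paper also records $\partial\xi_\eta/\partial(w_\eta)_p=-M_\eta e_p$ uniformly (which agrees with your case split since $M_\eta=I$ at a tip) and carries out the same telescoping for $\partial D_\eta/\partial(V_\eta)_{pq}$ in the internal case, so there is no substantive difference in strategy.
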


In Case 1 and 4, however, the second order derivative is recursive
because $\left(F_{u_{j}}^{\eta},g_{u_{j}}^{\eta},K_{u_{j}}^{\eta}\right)$
is recursive. This can be seen in the following proposition.
\begin{prop}
\label{prop:interm}In Case 1 and 4, letting $u_{1}\in\children(o)$
where $o$ is the root node, it follows that 
\begin{align}
\hrooteta{\Omega}= & \detaone{F_{u_{1}}^{\eta_{2}}}^{T}\detatwo{\Omega_{\eta_{2}}}F_{u_{1}}^{\eta_{2}}+{F_{u_{1}}^{\eta_{2}}}^{T}\detatwo{\Omega_{\eta_{2}}}\detaone{F_{u_{1}}^{\eta_{2}}};\label{eq:hodtdt}\\
\hrooteta{\gamma}= & \detaone{F_{u_{1}}^{\eta_{2}}}^{T}\left(\detatwo{\gamma_{\eta_{2}}}-\detatwo{\Omega_{\eta_{2}}}g_{u_{1}}^{\eta_{2}}\right)-{F_{u_{1}}^{\eta_{2}}}^{T}\detatwo{\Omega_{\eta_{2}}}\detaone{g_{u_{1}}^{\eta_{2}}};\label{eq:hgdtdt}\\
\hrooteta c= & \detaone{g_{u_{1}}^{\eta_{2}}}^{T}\left(\detatwo{\Omega_{\eta_{2}}}g_{u_{1}}^{\eta_{2}}-2\detatwo{\gamma_{\eta_{2}}}\right)+{g_{u_{1}}^{\eta_{2}}}^{T}\detatwo{\Omega_{\eta_{2}}}\detaone{g_{u_{1}}^{\eta_{2}}};\quad\text{and}\label{eq:hcdtdt}\\
\hrooteta{\Delta}= & \text{tr}\left(\detatwo{\Omega_{\eta_{2}}}\detaone{K_{u_{1}}^{\eta_{2}}}\right).\label{eq:hddtdt}
\end{align}
\end{prop}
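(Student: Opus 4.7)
The plan is to reduce \prettyref{prop:interm} to a direct application of the product rule to the first-order expressions in \prettyref{prop:dchnmaster}, exploiting the fact that the node-local ``self'' derivatives at $\eta_{2}$ carry no dependence on $\theta_{\eta_{1}}$ whenever $\eta_{1}$ and $\eta_{2}$ fall into Case 1 or Case 4. Specialising \prettyref{prop:dchnmaster} to the chain with $u_{k}=u_{1}\in\children(o)$ at $k=1$ and $\eta=\eta_{2}$ already writes $\detatwo{\Omega_{u_{1}}}$, $\detatwo{\gamma_{u_{1}}}$, $\detatwo{c_{u_{1}}}$ and $\detatwo{\Delta_{u_{1}}}$ as combinations of (i) the chain quantities $F_{u_{1}}^{\eta_{2}}$, $g_{u_{1}}^{\eta_{2}}$, $K_{u_{1}}^{\eta_{2}}$, and (ii) the self derivatives $\detatwo{\Omega_{\eta_{2}}}$, $\detatwo{\gamma_{\eta_{2}}}$, $\detatwo{c_{\eta_{2}}}$, $\detatwo{\Delta_{\eta_{2}}}$. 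It suffices to apply $\detaone{\cdot}$ to each of these four identities term by term.

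The central observation is that $\detaone{\cdot}$ annihilates every self-derivative factor. From \prettyref{prop:hessselflist}, each self derivative is built from $(\Phi_{\eta_{2}},w_{\eta_{2}},V_{\eta_{2}})$ together with the auxiliaries $M_{\eta_{2}}$, $\Lambda_{\eta_{2}}$, $\xi_{\eta_{2}}$, $Z_{\eta_{2}}$, $D_{\eta_{2}}$; all of these depend solely on parameters attached to $\eta_{2}$ itself or to nodes in $\descendents(\eta_{2})$, since the sum $\sum_{\iota\in\children(\eta_{2})}\Omega_{\iota}$ that enters them is computed by the post-order recursion \eqref{eq:Omegaurecur} without ever leaving the subtree rooted at $\eta_{2}$. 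In Case 1 the node $\eta_{1}$ is a strict ancestor of $\eta_{2}$, while in Case 4 the lowest common ancestor $\lca{\eta_{1}}{\eta_{2}}$ lies strictly above both of them and therefore places $\eta_{1}$ on a subtree disjoint from $\descendents(\eta_{2})\cup\{\eta_{2}\}$. In either case $\eta_{1}\notin\descendents(\eta_{2})\cup\{\eta_{2}\}$, so $\theta_{\eta_{1}}$ simply does not appear in the self derivatives.

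Granted the independence, the product-rule expansion of $\detaone{\cdot}$ applied to \eqref{eq:dchnOmega}--\eqref{eq:dchnDelta} picks up only the contributions from $\detaone{F_{u_{1}}^{\eta_{2}}}$, $\detaone{g_{u_{1}}^{\eta_{2}}}$ and $\detaone{K_{u_{1}}^{\eta_{2}}}$; a mechanical computation, using $\detaone{F^{T}}=(\detaone{F})^{T}$ and linearity of the trace, collects the surviving pieces into exactly \eqref{eq:hodtdt}--\eqref{eq:hddtdt}. For \eqref{eq:hddtdt} in particular the additive $\detatwo{\Delta_{\eta_{2}}}$ in \eqref{eq:dchnDelta} drops out and only the $K$-derivative inside the trace remains.

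I expect the genuine difficulty to lie elsewhere, namely in subsequently expressing $\detaone{F_{u_{1}}^{\eta_{2}}}$, $\detaone{g_{u_{1}}^{\eta_{2}}}$ and $\detaone{K_{u_{1}}^{\eta_{2}}}$ as tractable, post-order-computable recursions down the chain, since $F$, $g$, $K$ already involve products and sums of the recursive $H$, $\Lambda$, $\Phi$ and $a$ at every node visited, and the location of $\eta_{1}$ on the chain (Case 1) versus off the chain (Case 4) will demand separate book-keeping. The proof of \prettyref{prop:interm} itself, however, is essentially exhausted by the independence observation together with a careful product-rule bookkeeping; the only non-trivial check is verifying in Case 4 that no node of $\descendents(\eta_{2})$ can coincide with or contain $\eta_{1}$, which follows immediately from the definition of $\lca{\eta_{1}}{\eta_{2}}$ and the case hypothesis.
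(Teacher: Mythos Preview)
Your proposal is correct and follows essentially the same approach as the paper's own proof: specialise \prettyref{prop:dchnmaster} to $u_{k}=u_{1}$, $\eta=\eta_{2}$, apply the product rule with $\detaone{\cdot}$, and observe that the cross derivatives $\frac{\partial^{2}\Omega_{\eta_{2}}}{\partial\theta_{\eta_{1}}\partial\theta_{\eta_{2}}}$ etc.\ vanish because $\Omega_{\eta_{2}},\gamma_{\eta_{2}},c_{\eta_{2}},\Delta_{\eta_{2}}$ depend only on parameters at $\eta_{2}$ and its descendants, which excludes $\eta_{1}$ in Cases 1 and 4. Your closing remark that the real work lies in deriving tractable expressions for $\detaone{F_{u_{1}}^{\eta_{2}}}$, $\detaone{g_{u_{1}}^{\eta_{2}}}$, $\detaone{K_{u_{1}}^{\eta_{2}}}$ is also exactly how the paper proceeds next.
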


Although the above is just a result of applying simple calculus product
rules to the $(c,\gamma,\Omega,\Delta)$-recursion in \prettyref{eq:dchnOmega}\textendash \eqref{eq:dchnDelta},
it informs us that the main obstacle in computing the second derivatives
in \prettyref{eq:dgradmaster} is $(\detaone{F_{u_{1}}^{\eta_{2}}},\detaone{g_{u_{1}}^{\eta_{2}}},\detaone{K_{u_{1}}^{\eta_{2}}})$.
Fortunately, the simple structure in the definition of $(F,g,K)$
allows us to derive the following interesting formulae for them.

We denote by $\lca{\eta_{1}}{\eta_{2}}$ the lowest common ancestor
of $\eta_{1}$ and $\eta_{2}$ exclusive of themselves. That is, $\lca{\eta_{1}}{\eta_{2}}$
is the lowest common ancestor in the normal sense except that if $\eta_{1}$
is an ancestor of $\eta_{2}$ then $\lca{\eta_{1}}{\eta_{2}}$ means
$\parent(\eta_{1})$, but not $\eta_{1}$.
\begin{prop}
Assume either Case 1 or 4. Let $u_{k}$ be a (potentially empty) lineage
in which $u_{k}=\parent(u_{k+1})$ for all $1\le k\le K$, with $u_{K}=\lca{\eta_{1}}{\eta_{2}}$
and $u_{1}$ being the direct child of the root that leads to $u_{K}$
. If the lineage were empty, we let $K=0$. In Case 1, let $u_{K+1}=\eta_{1}$
and $u_{K+2}$ be the child of $\eta_{1}$ that leads to $\eta_{2}$.
It follows that 
\begin{align}
\detaone{F_{u_{1}}^{\eta_{2}}}= & \left[-\left(\sum_{1\le k\le K}F_{u_{k+1}}^{\eta_{2}}\Lambda_{u_{k}}{F_{u_{k+1}}^{\eta_{1}}}^{T}\right)\detaone{\Omega_{\eta_{1}}}+\setchar{\descendents(\eta_{2})}(\eta_{1})\detaone{F_{\eta_{1}}^{\eta_{2}}}\right]F_{u_{1}}^{\eta_{1}};\label{eq:dfdtheta}\\
\begin{split}\detaone{g_{u_{1}}^{\eta_{2}}}= & \!\!\sum_{1\le k\le K-1}\left[\setchar{\descendents(\eta_{2})}(\eta_{1})\detaone{F_{\eta_{1}}^{\eta_{2}}}F_{u_{k+1}}^{\eta_{1}}-\!\!\!\!\sum_{k+1\le\ell\le K}\left(F_{u_{\ell+1}}^{\eta_{2}}\Lambda_{u_{\ell}}{F_{u_{\ell+1}}^{\eta_{1}}}^{T}\right)\detaone{\Omega_{\eta_{1}}}F_{u_{k+1}}^{\eta_{1}}\right]a_{u_{k}}\\
 & +\sum_{1\le k\le K}F_{u_{k+1}}^{\eta_{2}}\Lambda_{u_{k}}{F_{u_{k+1}}^{\eta_{1}}}^{T}\left(\detaone{\gamma_{\eta_{1}}}-\detaone{\Omega_{\eta_{1}}}g_{u_{k}}^{\eta_{1}}\right)\\
 & +\setchar{\descendents(\eta_{2})}(\eta_{1})\left(\detaone{F_{\eta_{1}}^{\eta_{2}}}a_{\parent\left(\eta_{1}\right)}+F_{u_{K+2}}^{\eta_{2}}\detaone{a_{\eta_{1}}}\right);\text{ and }
\end{split}
\label{eq:dgdtheta}\\
\detaone{K_{u_{1}}^{\eta_{2}}}= & \sum_{1\le k\le K-1}\left(A_{k}+{A_{k}}^{T}\right)+\sum_{1\le k\le K}B_{k}+\setchar{\descendents(\eta_{2})}(\eta_{1})\left(A_{K}+{A_{K}}^{T}+{B_{K+1}}\right),\label{eq:dkdtheta}
\end{align}
where
\begin{align}
A_{k}= & \left[\setchar{\descendents(\eta_{2})}(\eta_{1})\detaone{F_{\eta_{1}}^{\eta_{2}}}-\sum_{k+1\le\ell\le K}F_{u_{\ell+1}}^{\eta_{2}}\Lambda_{u_{\ell}}{F_{u_{\ell+1}}^{\eta_{1}}}^{T}\detaone{\Omega_{\eta_{1}}}\right]F_{u_{k+1}}^{\eta_{1}}\Lambda_{u_{k}}{F_{u_{k+1}}^{\eta_{2}}}^{T};\\
B_{k}= & -F_{u_{k+1}}^{\eta_{2}}\Lambda_{u_{k}}{F_{u_{k+1}}^{\eta_{1}}}^{T}\detaone{\Omega_{\eta_{1}}}F_{u_{k+1}}^{\eta_{1}}\Lambda_{u_{k}}{F_{u_{k+1}}^{\eta_{2}}}^{T},
\end{align}
and
\begin{align}
\frac{\partial F_{\eta_{1}}^{\eta_{2}}}{\partial\left(\Phi_{\eta_{1}}\right)_{pq}}= & F_{u_{K+2}}^{\eta_{2}}H_{\eta_{1}}e_{p}e_{q}^{T};\\
\frac{\partial F_{\eta_{1}}^{\eta_{2}}}{\partial\left(V_{\eta_{1}}\right)_{pq}}= & -F_{u_{K+2}}^{\eta_{2}}\Lambda_{\eta_{1}}{V_{\eta_{1}}}^{-1}e_{p}e_{q}^{T}{V_{\eta_{1}}}^{-1}\Lambda_{\eta_{1}}\sum_{\iota\in\children(\eta_{1})}\Omega_{\iota}\Phi_{\eta_{1}};\\
\frac{\partial a_{\eta_{1}}}{\partial\left(w_{\eta_{1}}\right)_{p}}= & H_{\eta_{1}}e_{p};\\
\frac{\partial a_{\eta_{1}}}{\partial\left(V_{\eta_{1}}\right)_{pq}}= & \Lambda_{\eta_{1}}V_{\eta_{1}}^{-1}e_{p}e_{q}^{T}V_{\eta_{1}}^{-1}\left(a_{\eta_{1}}-w_{\eta_{1}}\right);\\
\frac{\partial F_{\eta_{1}}^{\eta_{2}}}{\partial w_{\eta_{1}}}= & \frac{\partial a_{\eta_{1}}}{\partial\Phi_{\eta_{1}}}=\frac{\partial a_{\eta_{1}}}{\partial\Phi_{\eta_{1}}}=0;.
\end{align}
\end{prop}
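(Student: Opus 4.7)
The plan is to obtain (\ref{eq:dfdtheta})--(\ref{eq:dkdtheta}) by differentiating the recursive definitions (\ref{eq:Fdef})--(\ref{eq:kdef}) directly, using the chain rules of \prettyref{prop:dchnmaster} to handle the indirect dependence of the intermediate quantities on $\theta_{\eta_{1}}$. I would begin by unrolling the recursions along the chain $u_{1},u_{2},\ldots,u_{n}=\eta_{2}$ (which passes through $u_{K}=\lca{\eta_{1}}{\eta_{2}}$) so that
\[
F_{u_{1}}^{\eta_{2}}=H_{u_{n-1}}\Phi_{u_{n-1}}\cdots H_{u_{1}}\Phi_{u_{1}},\quad g_{u_{1}}^{\eta_{2}}=\sum_{k=1}^{n-1}F_{u_{k+1}}^{\eta_{2}}a_{u_{k}},\quad K_{u_{1}}^{\eta_{2}}=\sum_{k=1}^{n-1}F_{u_{k+1}}^{\eta_{2}}\Lambda_{u_{k}}F_{u_{k+1}}^{\eta_{2}T}.
\]
From these forms it is clear that $\theta_{\eta_{1}}$ can enter only through $H_{u_{k}}$, $\Lambda_{u_{k}}$ and $a_{u_{k}}$ for $k\le K$ (because these depend on the subtree rooted at $u_{k}$, which contains $\eta_{1}$), together with a direct dependence at $u_{k}=\eta_{1}$, an occurrence confined to Case~1.

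The next step is to collect the elementary derivatives. For every $k\le K$ with $\eta_{1}$ strictly below $u_{k}$, the unique child of $u_{k}$ that is an ancestor of $\eta_{1}$ carries all of $\theta_{\eta_{1}}$'s influence into $\sum_{\iota\in\children(u_{k})}\Omega_{\iota}$ and $\sum_{\iota\in\children(u_{k})}\gamma_{\iota}$; calling this child $u_{k}'$, (\ref{eq:dchnOmega})--(\ref{eq:dchngamma}) give
\[
\frac{\partial}{\partial\theta_{\eta_{1}}}\!\!\!\sum_{\iota\in\children(u_{k})}\!\!\!\Omega_{\iota}=F_{u_{k}'}^{\eta_{1}T}\frac{\partial\Omega_{\eta_{1}}}{\partial\theta_{\eta_{1}}}F_{u_{k}'}^{\eta_{1}},\qquad\frac{\partial}{\partial\theta_{\eta_{1}}}\!\!\!\sum_{\iota\in\children(u_{k})}\!\!\!\gamma_{\iota}=F_{u_{k}'}^{\eta_{1}T}\!\left(\frac{\partial\gamma_{\eta_{1}}}{\partial\theta_{\eta_{1}}}-\frac{\partial\Omega_{\eta_{1}}}{\partial\theta_{\eta_{1}}}g_{u_{k}'}^{\eta_{1}}\right),
\]
with $u_{k}'=u_{k+1}$ for $k\le K-1$. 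Applying $\partial A^{-1}=-A^{-1}(\partial A)A^{-1}$ to $\Lambda_{u_{k}}=(V_{u_{k}}^{-1}+\sum_{\iota}\Omega_{\iota})^{-1}$ and substituting into (\ref{eq:Hdef})--(\ref{eq:adef}) then yields closed forms for $\partial\Lambda_{u_{k}}/\partial\theta_{\eta_{1}}$, $\partial H_{u_{k}}/\partial\theta_{\eta_{1}}$ and $\partial a_{u_{k}}/\partial\theta_{\eta_{1}}$; the direct ``self'' derivatives at $u=\eta_{1}$ that enter in Case~1 are read off straight from the definitions, reusing the building blocks already assembled in \prettyref{prop:hessselflist}.

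I would then plug everything into the product- and sum-rule expansions of $F$, $g$, $K$. Differentiating the product $H_{u_{n-1}}\Phi_{u_{n-1}}\cdots H_{u_{1}}\Phi_{u_{1}}$ produces one term per perturbation index $k$; after recognising the downstream tail $\Phi_{u_{k}}H_{u_{k-1}}\Phi_{u_{k-1}}\cdots H_{u_{1}}\Phi_{u_{1}}$ as $F_{u_{1}}^{\eta_{1}}$ and inserting $\partial H_{u_{k}}/\partial\theta_{\eta_{1}}$, each term collapses to $-F_{u_{k+1}}^{\eta_{2}}\Lambda_{u_{k}}F_{u_{k+1}}^{\eta_{1}T}(\partial\Omega_{\eta_{1}}/\partial\theta_{\eta_{1}})F_{u_{1}}^{\eta_{1}}$; summation over $k$ reproduces the bracketed sum in (\ref{eq:dfdtheta}), and the Case-1 ``self'' contribution appears as the $\setchar{\descendents(\eta_{2})}(\eta_{1})$ term whose prefactor is obtained by applying the same recipe to $F_{\eta_{1}}^{\eta_{2}}$ in isolation. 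The analogous computation for $g_{u_{1}}^{\eta_{2}}$ and $K_{u_{1}}^{\eta_{2}}$ is longer because these objects are themselves sums indexed by $k$, so differentiation produces a sum over pairs $(k,\ell)$; factoring out the common blocks $F_{u_{\ell+1}}^{\eta_{2}}\Lambda_{u_{\ell}}F_{u_{\ell+1}}^{\eta_{1}T}$ and reversing the order of summation yields the compact double-sum structure in (\ref{eq:dgdtheta}) and the $A_{k}+A_{k}^{T}$ symmetrisation in (\ref{eq:dkdtheta}), the transposes arising because $K$ is symmetric and the product rule acts simultaneously on the left and right factors.

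The main obstacle is combinatorial bookkeeping rather than any single hard estimate: the product rule generates many terms, and the challenge is to collect them into a form that exposes the common factors $F_{u_{k+1}}^{\eta_{1}}\Lambda_{u_{k}}F_{u_{k+1}}^{\eta_{2}T}$ while keeping the roles of $\eta_{1}$ and $\eta_{2}$ rigidly separate in the superscripts of the various $F$'s. A secondary subtlety is verifying that the ``self'' and ``propagation'' contributions patch together correctly across Cases~1 and~4, which the indicator $\setchar{\descendents(\eta_{2})}(\eta_{1})$ handles cleanly by switching the extra $u_{K+1}=\eta_{1}$ term on in Case~1 and off in Case~4.
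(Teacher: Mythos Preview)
Your proposal is correct and follows essentially the same route as the paper's proof: unroll the recursions for $F$, $g$, $K$ along the chain, identify that $\theta_{\eta_{1}}$ enters only through $H_{u_{k}}$, $\Lambda_{u_{k}}$, $a_{u_{k}}$ for $k\le K$ (plus the direct ``self'' contribution at $\eta_{1}$ in Case~1), compute these intermediate derivatives via \prettyref{prop:dchnmaster}, and then apply the product and sum rules, regrouping the resulting double sums into the stated form. The paper organises the $F$-computation slightly differently by first splitting $F_{u_{1}}^{\eta_{2}}=F_{u_{K+1}}^{\eta_{2}}F_{u_{1}}^{u_{K+1}}$ and differentiating the second factor, but this is only a cosmetic variant of your full-unroll approach; the substantive steps (the formula for $\partial H_{u_{k}}/\partial\theta_{\eta_{1}}$, the term-by-term differentiation of the $g$- and $K$-sums, and the combinatorial regrouping) are identical.
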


It is important to see that, in \prettyref{eq:dfdtheta}, we can obtain
the partial derivative of any child of $\eta_{2}$ by pre-multiplying
$H_{\eta_{2}}\Phi_{\eta_{2}}$. This immediately leads to the following
result.
\begin{cor}
\label{cor:dfrecur}Within Case 1 and 4, for any $v\in\children(\eta_{2})$,
it follows that
\begin{align}
\detaone{F_{u_{1}}^{v}}= & H_{\eta_{2}}\Phi_{\eta_{2}}\detaone{F_{u_{1}}^{\eta_{2}}};\label{eq:dFrecurdown}\\
\detaone{g_{u_{1}}^{v}}= & H_{\eta_{2}}\Phi_{\eta_{2}}\detaone{g_{u_{1}}^{\eta_{2}}};\textrm{ and}\label{eq:dgrecurdown}\\
\detaone{K_{u_{1}}^{v}}= & H_{\eta_{2}}\Phi_{\eta_{2}}\detaone{K_{u_{1}}^{\eta_{2}}}{\Phi_{\eta_{2}}}^{T}{H_{\eta_{2}}}^{T}.\label{eq:dKrecurdown}
\end{align}
\end{cor}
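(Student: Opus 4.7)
The plan is to first derive closed-form identities expressing $F_{u_1}^v$, $g_{u_1}^v$, and $K_{u_1}^v$ directly in terms of their $\eta_2$ counterparts together with a single ``boundary'' contribution coming from $\eta_2$ itself, and then to observe that every matrix entering that boundary contribution is independent of $\theta_{\eta_1}$ in Cases 1 and 4, so that differentiation collapses each expression to the claimed form.

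First I would set up the chain $u_1,\ldots,u_n=\eta_2,\,u_{n+1}=v$ along which $F_{u_k}^v$ is defined, and prove by a decreasing induction on $k$ that
\[
F_{u_k}^v \;=\; H_{\eta_2}\Phi_{\eta_2}\,F_{u_k}^{\eta_2}\qquad (1\le k\le n).
\]
The base case $k=n$ is immediate from \eqref{eq:Fdef}, since $F_{u_{n+1}}^v=F_v^v=I$ yields $F_{u_n}^v=I\cdot H_{\eta_2}\Phi_{\eta_2}=H_{\eta_2}\Phi_{\eta_2}F_{u_n}^{\eta_2}$, while the inductive step is a one-line application of \eqref{eq:Fdef}. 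Substituting this identity term-by-term into the telescoping sums \eqref{eq:gdef} and \eqref{eq:kdef} factors $H_{\eta_2}\Phi_{\eta_2}$ out of every index $k\le n-1$; the residual $k=n$ term yields the boundary pieces $a_{\eta_2}$ and $\Lambda_{\eta_2}$, giving
\[
g_{u_1}^v \;=\; H_{\eta_2}\Phi_{\eta_2}\,g_{u_1}^{\eta_2} + a_{\eta_2},\qquad K_{u_1}^v \;=\; H_{\eta_2}\Phi_{\eta_2}\,K_{u_1}^{\eta_2}\,\Phi_{\eta_2}^{T} H_{\eta_2}^{T} + \Lambda_{\eta_2}.
\]

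The key observation then is that $H_{\eta_2}$, $\Phi_{\eta_2}$, $\Lambda_{\eta_2}$, and $a_{\eta_2}$ are built solely from parameters attached to $\eta_2$ itself and to nodes in the subtree rooted at $\eta_2$: this is visible from \eqref{eq:Hdef}, \eqref{eq:adef}, and \eqref{eq:deflambda}, combined with the post-order recursion \eqref{eq:curecur}--\eqref{eq:Deltaurecur} that produces the $\Omega_\iota$ for $\iota\in\children(\eta_2)$. Both in Case 1, where $\eta_1$ is a strict ancestor of $\eta_2$, and in Case 4, where $\eta_1$ sits in a sibling subtree, we have $\eta_1\notin\descendents(\eta_2)\cup\{\eta_2\}$, so none of these four matrices depend on $\theta_{\eta_1}$. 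Differentiating the three identities above with respect to $\theta_{\eta_1}$ therefore kills the boundary terms $a_{\eta_2}$ and $\Lambda_{\eta_2}$ and leaves the prefactor $H_{\eta_2}\Phi_{\eta_2}$ (and its transpose) inert, which produces exactly \eqref{eq:dFrecurdown}--\eqref{eq:dKrecurdown}.

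The main step requiring care is this independence verification; the inductive factorisation of $F$ and the resulting telescoping identities for $g$ and $K$ are otherwise straightforward bookkeeping. The independence argument is essentially the same one already used implicitly in Case 2, where all cross-derivatives vanish because the two parameters live in disjoint subtrees; once the dependency structure has been spelled out, the remainder of the proof is mechanical.
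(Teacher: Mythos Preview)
Your argument is correct. The factorisations
\[
F_{u_1}^v = H_{\eta_2}\Phi_{\eta_2}\,F_{u_1}^{\eta_2},\qquad
g_{u_1}^v = H_{\eta_2}\Phi_{\eta_2}\,g_{u_1}^{\eta_2} + a_{\eta_2},\qquad
K_{u_1}^v = H_{\eta_2}\Phi_{\eta_2}\,K_{u_1}^{\eta_2}\,\Phi_{\eta_2}^{T} H_{\eta_2}^{T} + \Lambda_{\eta_2}
\]
are genuine identities following directly from the recursions \eqref{eq:Fdef}--\eqref{eq:kdef}, and your independence claim for $H_{\eta_2},\Phi_{\eta_2},\Lambda_{\eta_2},a_{\eta_2}$ with respect to $\theta_{\eta_1}$ is exactly right in both Case~1 and Case~4, since $\eta_1$ lies neither in $\descendents(\eta_2)$ nor equals $\eta_2$.

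Your route differs from the paper's. The paper treats the corollary as an immediate consequence of the explicit expressions in \prettyref{prop:interm} and the subsequent proposition: one inspects \eqref{eq:dfdtheta}--\eqref{eq:dkdtheta}, notes that passing from $\eta_2$ to $v\in\children(\eta_2)$ leaves $K$ and the nodes $u_1,\ldots,u_K$ unchanged while every occurrence of $F_{u_{k+1}}^{\eta_2}$ becomes $F_{u_{k+1}}^{v}=H_{\eta_2}\Phi_{\eta_2}F_{u_{k+1}}^{\eta_2}$, and reads off the left-multiplication by $H_{\eta_2}\Phi_{\eta_2}$. Your approach instead bypasses those closed-form derivative expressions entirely and works at the level of the \emph{undifferentiated} objects $F,g,K$ themselves, deriving structural identities first and differentiating afterward. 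This is more elementary and self-contained, and it makes transparent exactly which terms survive differentiation; the paper's route, by contrast, re-uses machinery already in place and is quicker once Proposition~7 is established.
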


These simple recursion allows us to compute the second derivatives
of the log-likelihood recursively. Recall that in order to compute
$\heta f=\heta{\ln p(x_{o}|z_{o})}$, we need $\hrooteta{\Omega}$,
$\hrooteta{\gamma}$, $\hrooteta c$, and $\hrooteta{\Delta}$, which
in turn depends on $\detaone{F_{u_{1}}^{\eta_{2}}}$, $\detaone{g_{u_{1}}^{\eta_{2}}}$,
and $\detaone{K_{u_{1}}^{\eta_{2}}}$. Computing $\detaone{F_{u_{1}}^{\eta_{2}}}$,
$\detaone{g_{u_{1}}^{\eta_{2}}}$, and $\detaone{K_{u_{1}}^{\eta_{2}}}$
for all internal nodes $\left(\eta_{1},\eta_{2}\right)$ needs a quadratic-time,
becaues from \prettyref{eq:dFrecurdown}\textendash \eqref{eq:dKrecurdown},
it is apparent that we can fix any $\eta_{1}$ and compute all $\detaone{F_{u_{1}}^{v}}$,
$\detaone{g_{u_{1}}^{v}}$, and $\detaone{K_{u_{1}}^{v}}$ for all
$v\in\children(o)$ in one pre-order tree walk (recall that $o$ is
a direct child of the root), and then one would need to loop though
all possible $\eta_{1}$. This also implies that computing all $\heta f$
would be quadratic time with respect to the number of internal nodes.
In practice, the computation can be speeded up considerably with careful
programming, for example, by storing neccessariy matrices such as
$H_{\eta_{1}}$ and parallelize this computation in multiple threads.

As a side note, despite this matrix is big, if one is only interested
in inferring about the parameters of a sub-model such as the Ornstein-Uhlenbeck
or Brownian motion model, it is unnecessary to store the entire matrix
of these $\heta f$; nor do we usually desire to store them for its
$O(n^{2}(2p^{2}+p)^{2})$ complexity, where $n$ is the number of
tips and $p$ the number of trait dimensions. To see this, let $g=(g^{(1)},\cdots,g^{(\dim\text{\ensuremath{\Theta}})})$
be the function that maps from OU model's parameter space into $\Theta$,
and let $\mathbf{H}$ and $\mathbf{J}$ be the Hessian and Jacobian
operator respectively. By the calculus chain rule it follows that
\begin{equation}
\mathbf{H}(f\circ g)=(\mathbf{J}g)^{T}[(\mathbf{H}f)\circ g](\mathbf{J}g)+\sum_{j=1}^{\dim\Theta}\left(\mathbf{J}f_{j}\right)\mathbf{H}g^{(j)},\label{eq:chainrule_simp}
\end{equation}
where $f_{j}$ denotes the restriction of $f$ to its $j$th coordinate.
In an actual implementation, because 
\begin{equation}
(\mathbf{J}g)^{T}[(\mathbf{H}f)\circ g](\mathbf{J}g)=\sum_{k,\ell}[(\mathbf{H}f)\circ g]_{k\ell}\textrm{row}_{k}(\mathbf{J}g){}^{T}\textrm{row}_{\ell}(\mathbf{J}g)
\end{equation}
 holds, to compute the first term above, one only need to perform
an on-the-fly rank-one matrix update everytime an element of $(\mathbf{H}f)\circ g$
is outputted from a tree walk, therefore avoiding the need to store
the entire $(\mathbf{H}f)\circ g$ in memory. For the second term,
we should not have any storage problems at all as long as our memory
is large enough to store $\mathbf{J}f$, which has only a $O(n^{2}p^{2})$
complexity, because the size of $\mathbf{H}g^{(j)}$ is typically
constant in $n$, such as in the OU model.

\section{The Ornstein-Uhlenbeck Model\label{sec:The-Ornstein-Uhlenbeck-Model}}

In the glinvci package, the Ornstein-Uhlenbeck model and Brownian
motion model's Hessian matrix is implemented using the calculus chain
rule in \prettyref{eq:chainrule_simp}. First, the Jacobian $\mathbf{J}g$
is computed and stored in the memory; then the algorithm iterate through
the phylogenetic tree and at the same time update the first term in
\prettyref{eq:chainrule_simp}; then the second term is computed and
added to the final result.

Consider the SDE governing the OU process 
\begin{equation}
\mathrm{d}z(t)=-H[z(t)-\mu\mathrm{]d}t+\Sigx\mathrm{d}W(t)\label{eq:ousde}
\end{equation}
where $\Sigx$ is a lower-triangular Cholesky factor. Using the same
notation as previous section, in this OU model, the function $g$
can be written as
\begin{align}
V_{\eta} & =\intop_{0}^{t_{\eta}}e^{-Hv}LL^{T}e^{-H^{T}v}\mathrm{d}v,\label{eq:VOU}\\
w_{\eta} & =\left(I-e^{-Ht_{\eta}}\right)\mu,\label{eq:wOU}\\
\Phi_{\eta} & =e^{-Ht_{\eta}}.\label{eq:PhiOU}
\end{align}

By differentiating the above expression directly, with the help of
the matrix exponential spectral formulae in \citet{najfeld_derivatives_1995},
we can obtain the explicit formulae for $\mathbf{J}g$ and $\mathbf{H}g^{(j)}$.
As the derivation of these formulae is quite lengthy, here we only
present the result but leave the tedious derivation in the supplimentary
material owing to space limitations and readibility. For convenience,
we define $\symadd A=A+A^{T}$ for any square matrix $A$, and we
let $U^{ij}=e_{i}e_{j}^{T}$.

Additionally, we define the following univariate definite integrals
that appear in the derivatives of the Ornstein-Uhlenbeck model:

\begin{align}
I_{n}\left(a,t\right) & =\intop_{0}^{t}y^{n}e^{ay}\textrm{d}y,\label{eq:intin}\\
K_{n}\left(a,b,t\right) & =\intop_{0}^{t}e^{ay}I_{n}\left(b,y\right)\textrm{d}y,\label{eq:intkn}\\
Q\left(a,b,c,t\right) & =\intop_{0}^{t}e^{ay}I_{0}\left(b,y\right)I_{0}\left(c,y\right)\textrm{d}y.\label{eq:intq}
\end{align}

As we do not restrict $a$, $b$, and $c$ to be non-zero, the integrands
may take a different form if any of them were zero (for example if
$a=0$ then $I_{0}$ becomes a polynomial integral); hence when integrating
them these cases must be seperately considered. But aside from that,
they are straightforward to integrate with standard calculus especially
for small $n$. Only $I_{0}$, $I_{1}$, $I_{2}$, $K_{0}$, $K_{1}$
and $Q$ are used hereinafter. Readers can find the explicit expressions
of these in Section 11 of the Supplementary Material.
\begin{prop}
\label{prop:oujac}If an eigen decomposition $H=P\textrm{diag}\left(\lambda\right)P^{-1}$
exists, all elements of $\mathbf{J}g$ are zero except that
\begin{align}
\detat V{H_{ij}} & =-\symadd{PMP^{T}}\textrm{,}\label{eq:dvdhij}\\
\detat V{\Sigx_{ij}} & =\evalat{V_{\eta}}{LL^{T}=\symadd{U^{ij}L^{T}}},\label{eq:dvdlij}\\
\detat w{H_{ij}} & =-\detat{\Phi}{H_{ij}}\mu\textrm{,}\label{eq:dwdhij}\\
\detat w{\mu_{i}} & =(\mathbf{I}-e^{-Ht_{n}})e_{i}\textrm{, and}\label{eq:dwdmui}\\
\detat{\Phi}{H_{ij}} & =-P\left\{ \overline{U^{ij}}\odot\left[e^{-\lambda_{i}t_{\eta}}I_{0}(\lambda_{i}-\lambda_{j},t_{\eta})\right]_{ij}\right\} P^{-1}\textrm{,}\label{eq:dphidhij}
\end{align}

where
\begin{align}
M_{mn} & =\sum_{k}\overline{U_{mk}^{ij}}\overline{\Sigma}_{kn}\big\{\left[\lambda_{m}=\lambda_{k}\right]I_{1}(-\lambda_{m}-\lambda_{n},t_{\eta})\nonumber \\
 & \quad\quad\quad\quad\quad+\left[\lambda_{m}\neq\lambda_{k}\right]\left[I_{0}(-\lambda_{k}-\lambda_{n},t_{\eta})-I_{0}(-\lambda_{m}-\lambda_{n},t_{\eta})\right]\big\}\textrm{,}\\
\overline{U^{ij}} & =P^{-1}U^{ij}P\textrm{,}\\
\overline{\Sigma} & =P^{-1}\left(\Sigx\Sigx^{T}\right)P^{-T}\textrm{,}
\end{align}

with $\left[c=0\right]$ being the Iversion bracket, $\mathbf{I}$
denoting the identity matrix, and a convention that the product of
zero and a undefined value equals to zero.
\end{prop}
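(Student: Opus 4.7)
The plan is to treat the five displayed formulae in order of increasing difficulty and centre everything on the spectral/integral representation of the matrix exponential from \citet{najfeld_derivatives_1995}, namely the identity
\[
\frac{\partial e^{At}}{\partial A_{ij}}=\int_{0}^{t}e^{As}\,U^{ij}\,e^{A(t-s)}\,\mathrm{d}s,
\]
which, after a conjugation by $P$ into the eigenbasis of $A$, turns the $\mathrm{d}s$-integral into a Hadamard product of $P^{-1}U^{ij}P=\overline{U^{ij}}$ with a matrix of scalar integrals. I would first dispatch the trivial items: \prettyref{eq:dwdmui} is immediate from linearity of $w_{\eta}=(I-e^{-Ht_{\eta}})\mu$ in $\mu$; \prettyref{eq:dwdhij} is the chain rule applied to $w_{\eta}=(I-\Phi_{\eta})\mu$, noting that $\mu$ is independent of $H$; and \prettyref{eq:dvdlij} follows because $V_{\eta}$ is linear in $LL^{T}$, so perturbing a single entry $L_{ij}$ produces $\delta(LL^{T})=U^{ij}L^{T}+(U^{ij}L^{T})^{T}=\symadd{U^{ij}L^{T}}$, and one simply re-inserts this symmetric perturbation in place of $LL^{T}$ in \prettyref{eq:VOU}.

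Next I would establish \prettyref{eq:dphidhij}. Applying the spectral integral formula to $A=-H=P(-\Lambda)P^{-1}$ and using $\partial(-H)/\partial H_{ij}=-U^{ij}$ gives
\[
\frac{\partial\Phi_{\eta}}{\partial H_{ij}}=-P\Bigl(\overline{U^{ij}}\odot F\Bigr)P^{-1},\qquad F_{k\ell}=\int_{0}^{t_{\eta}}e^{-\lambda_{k}s}e^{-\lambda_{\ell}(t_{\eta}-s)}\,\mathrm{d}s.
\]
Factoring out $e^{-\lambda_{k}t_{\eta}}$ identifies $F_{k\ell}=e^{-\lambda_{k}t_{\eta}}I_{0}(\lambda_{k}-\lambda_{\ell},t_{\eta})$, where the definition of $I_{0}$ in \prettyref{eq:intin} already accommodates both $\lambda_{k}\neq\lambda_{\ell}$ and $\lambda_{k}=\lambda_{\ell}$ via the standard $0/0$ limit. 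This is exactly \prettyref{eq:dphidhij}.

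The substantial work lies in \prettyref{eq:dvdhij}. Differentiating $V_{\eta}=\int_{0}^{t_{\eta}}e^{-Hv}LL^{T}e^{-H^{T}v}\,\mathrm{d}v$ under the integral sign and using the symmetry $LL^{T}=(LL^{T})^{T}$ shows the second piece is the transpose of the first, producing the outer $\symadd{\,\cdot\,}$ in the claim; so it suffices to identify the first piece with $-PMP^{T}$. I would substitute the spectral formula for $\partial e^{-Hv}/\partial H_{ij}$, use $e^{-H^{T}v}=P^{-T}e^{-\Lambda v}P^{T}$ to bring $P$ and $P^{T}$ outside, and write the inner factor as
\[
M_{mn}=\sum_{k}\overline{U^{ij}}_{mk}\,\overline{\Sigma}_{kn}\int_{0}^{t_{\eta}}e^{-(\lambda_{m}+\lambda_{n})v}I_{0}(\lambda_{m}-\lambda_{k},v)\,\mathrm{d}v.
\]
Evaluating the inner integral is the main obstacle, because $I_{0}(\lambda_{m}-\lambda_{k},v)$ is piecewise in its first argument: when $\lambda_{m}=\lambda_{k}$ it collapses to $v$, producing an $I_{1}(-\lambda_{m}-\lambda_{n},t_{\eta})$ contribution, while when $\lambda_{m}\neq\lambda_{k}$ one expands $I_{0}$ in exponentials and obtains the $I_{0}(-\lambda_{k}-\lambda_{n},t_{\eta})-I_{0}(-\lambda_{m}-\lambda_{n},t_{\eta})$ difference. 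Packaging these two cases with Iverson brackets gives the stated form of $M$. Throughout, Fubini justifies swapping the order of the $\mathrm{d}v$-integral with the $\mathrm{d}s$-integral hidden inside the Najfeld--Havel formula, and the convention that $0\cdot(\text{undefined})=0$ is used to close the degenerate branches cleanly; these technical points, together with the case split on coinciding eigenvalues, are where the bulk of the care is required.
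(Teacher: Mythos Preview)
Your proposal is correct and follows essentially the same route as the paper: both use the Najfeld--Havel integral representation of the Fr\'echet derivative of the matrix exponential, conjugate into the eigenbasis of $H$ to convert the inner integrals into Hadamard products with $\overline{U^{ij}}$, handle \prettyref{eq:dwdhij}, \prettyref{eq:dwdmui}, \prettyref{eq:dvdlij} as the trivial linear/chain-rule cases, and obtain $M_{mn}$ by computing $\int_{0}^{t_{\eta}}e^{-(\lambda_{m}+\lambda_{n})v}I_{0}(\lambda_{m}-\lambda_{k},v)\,\mathrm{d}v$ (the paper writes this as a $K_{0}$ and then unpacks the Iverson-bracket cases). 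Your explicit mention of Fubini and the degenerate-eigenvalue case split is a nice touch of rigour that the paper leaves implicit.
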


Again, after directly differentiating the first derivatives shown
above, with some lengthy calculations, we arrive at the expressions
below.\begingroup\allowdisplaybreaks
\begin{prop}
\label{prop:ouhess}When $H$ has an eigen factorization $H=P\textrm{diag}\left(\lambda\right)P^{-1}$,
all second-order partial derivatives of $g$ are zero except that
\begin{align}
\hgen V{H_{mn}}{H_{ij}} & =\sum_{\alpha,\beta}\sum_{k,\ell}\left[P^{-1}U^{mn}P\right]_{\alpha\beta}\left[P^{-1}U^{ij}P\right]_{k\ell}P\mathcal{S}\Big(\nonumber \\
 & \quad\quad\quad\left[k=\beta\right]\left[\left[\gamma=\alpha\right]\mathring{L}_{k\ell\alpha\iota}^{\eta}\overline{\Sigma}_{\ell\iota}\right]_{\gamma\iota}-\left[\ell=\alpha\right]\left[\left[\gamma=k\right]\mathring{R}_{k\ell\beta\iota}^{\eta}\overline{\Sigma}_{\beta\iota}\right]_{\gamma\iota}\nonumber \\
 & \quad\quad+\left[Q\left(-\lambda_{\gamma}\!-\!\lambda_{\iota},\lambda_{\alpha}\!-\!\lambda_{\beta},\lambda_{k}\!-\!\lambda_{\ell},t_{\eta}\right)\right]_{\gamma\iota}\odot\left(U^{\alpha\beta}U^{k\ell}\overline{\Sigma}+U^{\alpha\beta}\overline{\Sigma}U^{k\ell}\right)\Big)P^{T},\label{eq:hvdhdh}\\
\hgen V{L_{mn}}{H_{ij}} & =\text{\ensuremath{\evalat{\detat V{H_{ij}}}{LL^{T}=\symadd{U^{mn}L^{T}}}}},\label{eq:hvdldh}\\
\hgen V{L_{mn}}{L_{ij}} & =\evalat{V_{\eta}}{\symadd{U^{ij}U^{nm}}},\label{eq:hvdldl}\\
\hgen w{H_{mn}}{H_{ij}} & =-\left(\hgen{\Phi}{H_{mn}}{H_{ij}}\right)\mu,\label{eq:hwdhdh}\\
\hgen w{\mu_{m}}{H_{ij}} & =-\left(\detat{\Phi}{H_{ij}}\right)e_{m}\textrm{, and}\label{eq:hwdmudh}\\
\hgen{\Phi}{H_{mn}}{H_{ij}} & =\sum_{\alpha,\beta}\sum_{k,\ell}\left[P^{-1}U^{mn}P\right]_{\alpha\beta}\left[P^{-1}U^{ij}P\right]_{k\ell}P\Big\{\nonumber \\
 & \textrm{\quad\quad\ensuremath{\left[k=\beta\right]\Big[e^{-\lambda_{\alpha}t_{\eta}}I_{0}}(\ensuremath{\lambda_{\alpha}}-\ensuremath{\lambda_{\beta}},\ensuremath{t_{\eta}})\ensuremath{I_{0}}(\ensuremath{\lambda_{k}}-\ensuremath{\lambda_{\ell}},\ensuremath{t_{\eta}})}\nonumber \\
 & \quad\quad\quad\quad\quad\quad\quad-e^{-\lambda_{\alpha}t_{\eta}}K_{0}\left(\lambda_{\alpha}-\lambda_{\ell},\lambda_{k}-\lambda_{\alpha},t_{\eta}\right)\Big]U^{\alpha\ell}\nonumber \\
 & \quad\quad\quad+[\ell=\alpha]\big[e^{-\lambda_{k}t_{\eta}}K_{0}\left(\lambda_{k}-\lambda_{\beta},\lambda_{\beta}-\lambda_{\ell},t_{\eta}\right)U^{k\beta}\Big]\Big\} P^{-1}\textrm{,}\label{eq:hphidhdh}
\end{align}

where
\begin{align}
\mathring{L}_{k\ell\alpha\iota}^{\eta} & =\left[\lambda_{k}=\lambda_{\alpha}\right]K_{1}\left(-\lambda_{\alpha}-\lambda_{\iota},\lambda_{\alpha}-\lambda_{\ell},t_{\eta}\right)\nonumber \\
 & \quad+\left[\lambda_{k}\neq\lambda_{\alpha}\right]\frac{K_{0}\left(-\lambda_{\alpha}-\lambda_{\iota},\lambda_{k}-\lambda_{\ell},t_{\eta}\right)-K_{0}\left(-\lambda_{\alpha}-\lambda_{\iota},\lambda_{\alpha}-\lambda_{\ell},t_{\eta}\right)}{\lambda_{k}-\lambda_{\alpha}}\mathrm{\,\,and}\\
\mathring{R}_{k\ell\beta\iota}^{\eta} & =\left[\lambda_{\beta}=\lambda_{\ell}\right]K_{1}\left(-\lambda_{k}-\lambda_{\iota},\lambda_{k}-\lambda_{\beta},t_{\eta}\right)\nonumber \\
 & \quad+\left[\lambda_{\beta}\neq\lambda_{\ell}\right]\frac{K_{0}\left(-\lambda_{k}-\lambda_{\iota},\lambda_{k}-\lambda_{\ell},t_{\eta}\right)-K_{0}\left(-\lambda_{k}-\lambda_{\iota},\lambda_{k}-\lambda_{\beta},t_{\eta}\right)}{\lambda_{\beta}-\lambda_{\ell}}\textrm{. }
\end{align}
\end{prop}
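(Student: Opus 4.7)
The plan is to differentiate each first-order identity of Proposition~\ref{prop:oujac} once more, re-using the spectral decomposition $H=P\,\mathrm{diag}(\lambda)P^{-1}$ and the Najfeld--Havel derivative formulae for matrix exponentials that already produced the Jacobian. A structural inspection of \prettyref{eq:VOU}--\prettyref{eq:PhiOU} shows that $w_\eta=(I-\Phi_\eta)\mu$ is linear in $\mu$ and independent of $L$, that $V_\eta$ is quadratic in $L$ through $LL^T$ and independent of $\mu$, and that $\Phi_\eta$ depends only on $H$. This immediately reduces the task to the six non-vanishing blocks listed in the statement: the three pure-$H$ blocks for $V,w,\Phi$, the mixed $L\times H$ and $L\times L$ blocks for $V$, and the mixed $\mu\times H$ block for $w$.

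I would first dispose of the three easy blocks by one further application of routine calculus. Equation~\prettyref{eq:hwdmudh} drops out of differentiating \prettyref{eq:dwdmui} in $H_{ij}$; \prettyref{eq:hwdhdh} follows from $w_\eta=(I-\Phi_\eta)\mu$ once \prettyref{eq:hphidhdh} is established; \prettyref{eq:hvdldh} is obtained by replacing $LL^T$ inside \prettyref{eq:dvdhij} with its $L_{mn}$-derivative $\symadd{U^{mn}L^T}$; and \prettyref{eq:hvdldl} is the Hessian of the quadratic-in-$L$ integrand of $V_\eta$.

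The heart of the proof is the two pure-$H$ blocks. For $\Phi_\eta=e^{-Ht_\eta}$, iterating Duhamel's formula writes $\partial^2\Phi_\eta/\partial H_{mn}\partial H_{ij}$ as a sum of two double integrals over the triangle $0\le s_1\le s_2\le t_\eta$, one for each order in which the two perturbations $U^{ij}$ and $U^{mn}$ are inserted between the exponential factors. After passing to the eigenbasis via $\overline{U^{ij}}=P^{-1}U^{ij}P$ and $\overline{U^{mn}}=P^{-1}U^{mn}P$ and expanding each $e^{-Hs}$ as $P\,\mathrm{diag}(e^{-\lambda s})P^{-1}$, each triangular integral reduces to a scalar one-dimensional integral of a product of exponentials whose closed form is exactly the $K_0$ factor of \prettyref{eq:hphidhdh}; the two alternatives are selected by the Iverson brackets $[k=\beta]$ and $[\ell=\alpha]$, which encode which exponential leg each perturbation lands on. The same strategy applied to \prettyref{eq:dvdhij}, combined with the outer symmetrization $\symadd{\cdot}$ inherited from $V_\eta=V_\eta^T$, produces \prettyref{eq:hvdhdh}: the new integral $Q$ appears whenever both perturbations act on the exponential legs surrounding $\overline{\Sigma}$, while $K_0$ and $K_1$ appear when one perturbation passes through the $P$-derivative of $\overline{\Sigma}$.

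The main obstacle, and the reason the statement is long, is the coincident-eigenvalue case that gives rise to the two-branch form of $\mathring L$ and $\mathring R$. When $\lambda_k\neq\lambda_\alpha$, differentiating an $I_0$-type scalar in an eigenvalue produces an elementary difference that rearranges into the divided-difference form $[K_0(\cdot,\lambda_k-\cdot,\cdot)-K_0(\cdot,\lambda_\alpha-\cdot,\cdot)]/(\lambda_k-\lambda_\alpha)$; when $\lambda_k=\lambda_\alpha$ this expression has a removable singularity whose limit is a single $K_1$. Recognising the Hessian entries as divided differences of $K_0$ that are continuous across the diagonal $\lambda_k=\lambda_\alpha$ is the step that welds the many branches into the uniform expression stated, and it is precisely this phenomenon that forced the preliminary introduction of the integrals $K_n$ and $Q$.
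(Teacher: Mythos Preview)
Your proposal is correct and follows essentially the same route as the paper: differentiate the Najfeld--Havel/Duhamel integral representation of $e^{-tH}$ twice, pass to the eigenbasis of $H$, and evaluate the resulting nested scalar integrals as $I_0,K_0,K_1,Q$; the easy blocks \prettyref{eq:hvdldh}--\prettyref{eq:hwdmudh} are handled exactly as you say.

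Two small remarks on presentation and accuracy. First, what you describe informally as ``passing to the eigenbasis via $\overline{U^{ij}}$ and $\overline{U^{mn}}$'' the paper packages as an explicit change-of-basis lemma: instead of differentiating in the standard directions $U^{ij}$, one differentiates in the directions $G_{k\ell}=PU^{k\ell}P^{-1}$, so that the inner Duhamel integral $J_{k\ell}(v)=\int_0^v e^{\tau H}G_{k\ell}e^{-\tau H}\,\mathrm d\tau$ collapses to the scalar $I_0(\lambda_k-\lambda_\ell,v)$ times a fixed matrix; the outer double sum $\sum_{\alpha,\beta}\sum_{k,\ell}[P^{-1}U^{mn}P]_{\alpha\beta}[P^{-1}U^{ij}P]_{k\ell}$ in the statement is precisely the coordinate change back. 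Your description amounts to the same thing once $\overline{U^{ij}}$ is expanded entrywise, but isolating the lemma makes the bookkeeping cleaner.

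Second, your account of where $\mathring L$ and $\mathring R$ come from is slightly off: they do not arise from a ``$P$-derivative of $\overline\Sigma$'' (the eigenvector matrix $P$ is never differentiated in this approach), but from the second perturbation hitting the \emph{inner} Duhamel integral $J_{k\ell}(v)$ already produced by the first differentiation of $V_\eta$. That is, the two terms in \prettyref{eq:hvdhdh} correspond to the product-rule split of $\partial\!\left(e^{-vH}J_{k\ell}(v)\Sigma e^{-vH^T}\right)$: differentiating the outer exponentials gives the $Q$ term, while differentiating $J_{k\ell}$ itself gives the triply-nested integrals that become $\mathring L,\mathring R$. Your divided-difference interpretation of the $[\lambda_k=\lambda_\alpha]$ branches is, however, exactly right.
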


\endgroup

Although the results in \prettyref{prop:oujac} and \prettyref{prop:ouhess}
assume an OU process, it is actually also applicable to the Brownian
motion because the latter is just a special case of the former with
$H$ being the zero matrix. When $H$ is zero, the entire likelihood
surface is obviously invariant to $\theta$, so one can simply ignore
$\theta$ when calculating the Hessians of the log-likelihood. More
concretely, to obtain the derivatives of the Brownian motion, one
only need to evaluate the above formulae at $H_{ij}=0$ for all $i$
and $j$, and discard (or set to zero, in other words) the irrelavant
derivatives that are taken with respect to $H$ and $\mu$.

In the actual implementation in the glinvci package, the parameterization
is, in fact, slightly different: the diagonals of $L$, in the implementation,
are parameterized in its log-scale instead. This, as \citet{pinheiro_unconstrained_1996}
has pointed out, restricts the diagonals to be always positive, and
hence avoiding multiple optima in the log-likelihood caused by the
signs of the diagonals, as well as enabling the use of unrestricted
optimization when estimating the model. The Jacobian and Hessians
of the log-scale diagonal's case are simply a matter of applying a
chain rule to the formulae in \prettyref{prop:oujac} and \prettyref{prop:ouhess}
for $L_{ii}$ for all $i$.

\section{Handling of Missing Data and Lost Traits\label{sec:Handling-of-Missing}}

\begin{figure}

\begin{centering}
\includegraphics[width=0.48\textwidth]{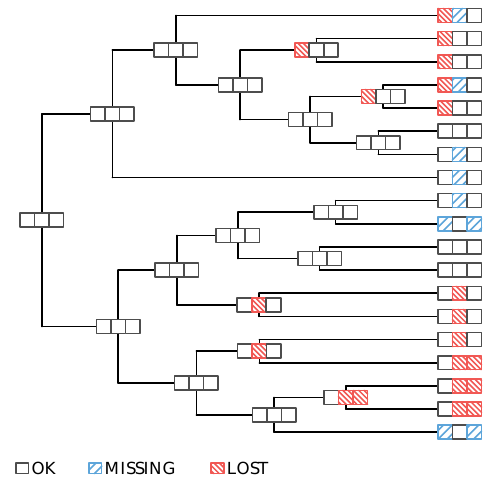}\caption{An example of missingness and existence of traits in a data set with
3-dimensional trait vector. The missingness status of the tips is
set according to the NAs and NaNs in the data, and that of the internal
nodes is inferred from the tip.}
\par\end{centering}
\end{figure}
Although the objective of this work is to present the second-order
derivatives, it is worth noting that in this formulation, the dimensionality
of the traits at different nodes can differ, and this allows the handling
of situations in which, for example, some observations are missing.

There are multiple ways that missing data can occur in PCM and they
must be modelled differently. For example, if a missing trait observation
were due to data collection problem, it should be marginalised out
in the Gaussian likelihood. But if the trait were lost completely
during the evolutionary process in a sense that it has existed at
the root of the tree but has ceased to exist at some time, simply
marginalising them would be errornous because it has to do with the
underlying evolutionary process.

The \texttt{glinvci} package has the capability to handles missing
observation, which occurs due to data collection problems, and non-existing
(lost) traits, which refers to trait dimensions which have existed
at the root of the tree but had ceased to exist during the evolutionary
process. The way \texttt{glinvci} handles missing and lost traits
is similar to PCMBase, but \texttt{glinvci} supports an additional
alternative option to control how the lost traits are handled in the
likelihood function. Similar to \texttt{PCMBase} \citet{mitov2020},
\texttt{glinvci} encodes the former situation with NA and the latter
NaN. \texttt{Glinvci} uses the follow rule to assign ``missingness
tags'' to each node. As written in my package's documentation, each
trait vector dimension $p$ of each node $\eta$ is tagged as one
of ``missing'', ``lost'' or ``available''. If the $p$th dimension
of a tip contains an NA, we tag it as ``missing'', with all other
tags in the tree left unchanged.

Thus, ``missing'' tags can only appear in the tips. The $p$th dimension
of any node $\eta$, regardless of whether or not it is an internal
node or a tip, is tagged ``lost'' if and only if the $p$th dimension
of all tips inside the clade started at $\eta$ are NaN. In other
words, this $p$th dimension is assumed to be existent and evolving
as usual in all internal nodes $\eta$ unless the dimension exists
in none of the tips descended from $\eta$. These assumption on the
traits' existence is in fact consistent with that of PCMBase.

By default, \texttt{glinvci} handles missing data the same way as
PCMBase: it marginalizes all missing dimensions at the tips of the
tree by removing the corresponding dimensions in the Gaussian distribution's
means and covariance matrices (that is, dropping the missing dimensions
in $(\Phi_{\eta},w_{\eta},V_{\eta})$), while assuming that the underlying
evolution remains unchanged. This method of handling missing measurement
is invariant to whether the user is using the OU, Brownian motion,
or other models. For non-existing traits, however, \texttt{glinvci}
provides two pre-defined methods, one of which is identical to what
PCMBase does: when a trait changes from existent at node $\eta$ to
nonexistent at node $\eta^{*}\in\children(\eta)$, the Gaussian distribution
is marginalized by dropping the lost dimension in the exact same way
as in the case of missing measurements.

The second pre-defined method for handling nonexistent traits is specific
to the OU models. In this method, we assume that, starting from the
time $t^{*}$ at which the branch leading to a node $\eta^{*}$ begins:
(i) the lost trait dimension does not evolve at all; (ii) the difference
between the lost $p$th trait at $t^{*}$ and its optimum has stopped
influencing the evolutions of any other dimensions via any means,
in particular, neither through the linear combination with the drift
matrix $H$ nor through the Wiener process covariance. (iii). the
disappearence of the traits does not alter how other existing traits
interact. From these desiderata, letting $\mathcal{P}=\left\{ p|\text{the \ensuremath{p}th dimension is lost in \ensuremath{\eta^{*}}}\right\} $,
we seek a modified OU process according to which the trait vector
should evolve since $t^{*}$, and whose parameters $(H',\mu',\Sigx')$
satisfy
\begin{align}
H'e_{p}= & 0 & \text{for \ensuremath{p\in\mathcal{P}}},\label{eq:lostconstr1}\\
H'e_{j}= & \sum_{k\notin\mathcal{P}}(He_{j})_{k}e_{k} & \text{for \ensuremath{j\notin\mathcal{P}}},\label{eq:lostconstr2}\\
\left(\Sigx'\Sigx'^{T}\right)_{jk}= & \left(\Sigx\Sigx^{T}\right)_{jk} & j\notin\mathcal{P}\text{ and }k\notin\mathcal{P}\textrm{,}\label{eq:lostconstr3}\\
\left(\Sigx'\Sigx'^{T}\right)_{jk}= & 0 & j\in\mathcal{P}\text{ or }k\in\mathcal{P}\textrm{, and}\label{eq:lostconstr4}\\
\mu'= & \mu.\label{eq:lostconstr5}
\end{align}
The first equation above means that the lost dimensions of $z(t)-\mu$
does not affect $\mathrm{d}z(t)$. The second implies that other dimensions
drift toward the optimum ``as usual'', except that these existent
dimensions' influence to the lost dimensions of $\mathrm{d}z(t)$
is zeroed out. The third and the fourth equations imply that the covariance
between the existent dimensions are not changed by the trait loss,
that the lost dimension is not correlated to any existing traits anymore,
and that the lost dimension has zero variance. It is a matter of simple
linear algebra to verify that $(H',\Sigx')$ will satisfy the above
four equations if we set $H'$ to be identical to $H$ except its
$p$th rows and columns are zero for all $p\in\mathcal{P}$; and $\Sigx'$
to be the same as $\Sigx$ except that its $p$th row is zero for
all $p\in\mathcal{P}$. The SDE is invariant of the lost dimensions
of $\mu$ because of \prettyref{eq:lostconstr1}. Using the new $(H',\mu',\Sigx')$,
\texttt{glinvci} computes $(\Phi,w,V)$ and marginalizes the lost
traits by removing the lost dimensions in $(\Phi,w,V)$.

\section{Discussion\label{sec:Discussion}}

Using the Hessian matrix $\hat{\mathcal{I}}^{-1}$ of the likelihood,
one might be interested in the $\alpha$-level asymptotic confidence
ellipse
\begin{align}
\left(\hat{\theta}-\theta\right)^{T}\hat{\mathcal{I}}^{-1}\left(\hat{\theta}-\theta\right) & \le\text{\ensuremath{\chi_{p}^{2}\left(1-\alpha\right)}}\label{eq:confellipse-1}
\end{align}
where $p$ is the number of parameters. I have set up an simple numerical
experiment to give an idea of how many tips are needed for this to
have a good coverage probability. In the experiment, for each number
of trait dimensions and each number of tips, a tree and a trait data
matrix, is, together as a pair, repeatedly simulated, with the former
generated using the ``rtree'' function taken from the R package
``ape'' \citep{apepack} and the latter according to the OU model
defined in \prettyref{eq:ousde}. The true OU parameters are the same
across repetitions, being fixed to
\begin{align}
H & =\textrm{diag}\left(\left[1-1/10,\ldots,1-p/10\right]\right),\label{eq:fixexper1}\\
\mu & =\left[-7/8,\ldots,-7/8\right]\textrm{, and}\label{eq:fixexper2}\\
LL^{T} & =\textrm{diag}\left(\left[1/2,\ldots,1/2\right]\right).\label{eq:fixexper3}
\end{align}
Then, for each simulated data set we check whether or not the confidence
ellipse defined in \prettyref{eq:confellipse-1} contains the true
parameter, and subsequently estimated an average-over-trees-and-traits
non-coverage rate for the confidence region. The result of this simulation
is shown in \prettyref{fig:coverage}.

\begin{figure}
\centering{}\includegraphics[width=0.63\textwidth]{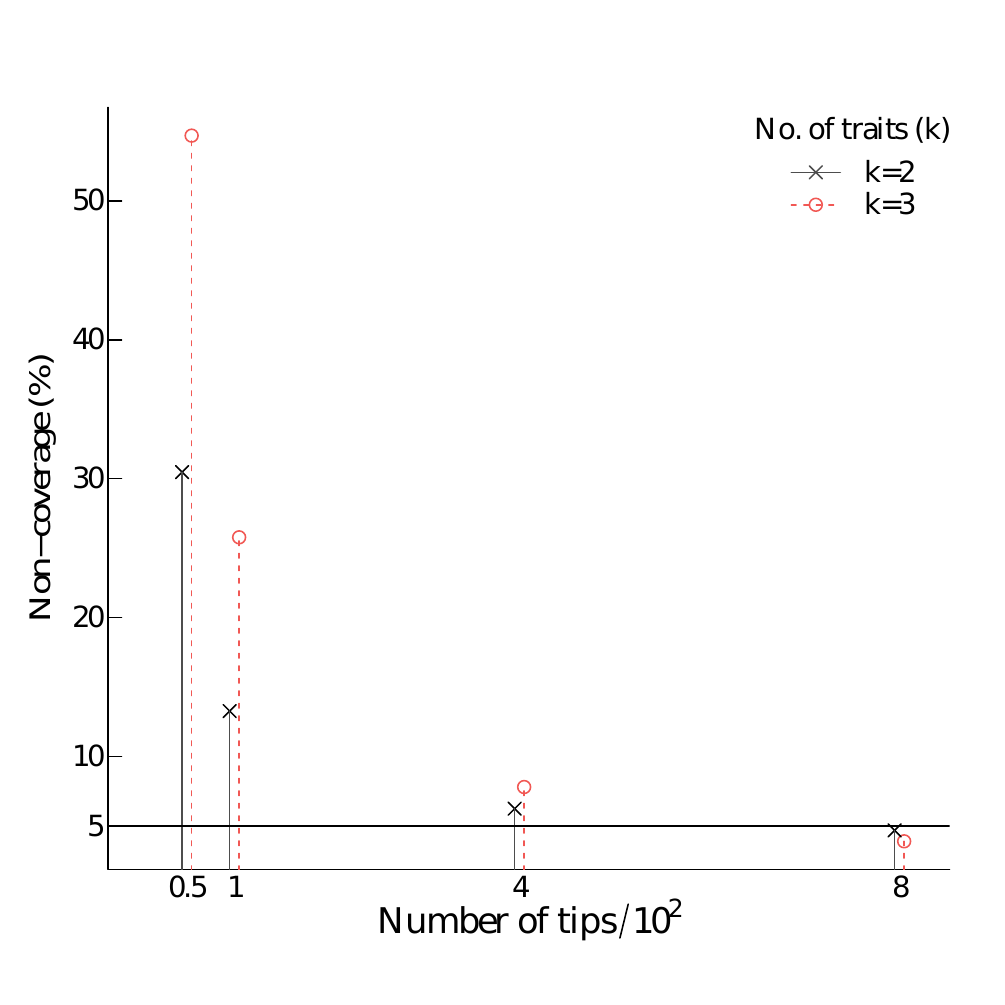}\caption{Estimated average-over-trees-and-traits non-coverage rate of the 95\%
confidence ellipse presented in \prettyref{eq:confellipse-1}. For
each pair of $(k,n_{\text{tips}})$ shown in the graph, 128 i.i.d.
sample of phylogenies with $n_{\text{tips}}$ many tips were drawn
using the \textquotedblleft rtree\textquotedblright{} function in the
R package \textquotedblleft ape\textquotedblright{} \citep{apepack}.
For each of such trees a $k$-dimensional trait data set is subsequently
drawn from the OU model according to the prescribed parameters presented
in \prettyref{eq:fixexper1}-\eqref{eq:fixexper3}; this trait data
set is then used to compute a confidence region. The non-coverage
rate in the picture is the percentage of times, among the 128 i.i.d.
sample, in which the confidence region fails to cover the truth. \label{fig:coverage}}
\end{figure}
Given that the formulae described here are so lengthy, one may wonder
whether or not there are simpler way to obtain a confidence regions.
As mentioned for example in \citet{revell_testing_2008}, if an one-species-per-row
data matrix $X$ were drawn from the simplest case of multivariate
Brownian motion model then the variance-covariance matrix of $\mathrm{vec}(X)$
can be factorized into a Kronecker product $R\otimes C$, where $R$
is a covariance matrix and $C$ is a matrix whose elements are the
total shared branch length between each pair of tips. In this formulation,
all arithmetics will be greatly simplified partly because the covariances
caused by the phylogeny is fully captured in a single matrix $C$,
which frees us from having to deal with the phylogenetic tree node-by-node.
However, the covariance structure of the data matrix will not be so
easy to work with anymore if, for example, there are multiple Brownian-motion
evolutionary regimes with differing rates. The single-regime OU model,
however, seems hopeful to be analysed exactly and analytically, as
it is not difficult to express the covariance of $X$ under this model
using a tensor-product-like structure. Some faster and \foreignlanguage{british}{more
straightforward} Bayesian formulae for the non-branching Ornstein-Uhlenbeck
process exists \citet{rajesh2018}, but it assumes equally-timed observations
during the stochastic process, which corresponds to observing the
internal nodes' trait vectors in the phylogenetic setting. Compared
to these, the second-derivatives shown in \prettyref{sec:Hessian}
do not assume, thus cannot utilize, any mathematical structures specific
to the Brownian motion or the Ornstein-Uhlenbeck process. As a consequence,
the recursion structure in \prettyref{sec:Hessian} allows the freedom
to change the function $g$, that is, \prettyref{eq:VOU}-\eqref{eq:PhiOU}
to, for instance, account for measurement error at the tips, or to
treat any internal nodes specially if one had some prior knowledge
about the nodes.

But with this level of generality, the readers are reminded that there
are generally no guarentees that the log-likelihood surface can be
sufficiently approximated by a nice quadratic function. In fact, because
$g$ is a general function, the MLE may not even exist uniquely if
the model is mis-specified or over-parameterized. While our framework
allows the users to define the function $g$ completely to their liking
and is capable of giving a Hessian for parameterization, as a cautionary
reminder, the users should ensure that their log-likelihood surface
is sufficiently approximatable by a non-singular quadratic function.
Consistency of the MLE is also only valid under some conditions\textemdash however
one defines. As an example to illustrate this subtlety, for the maximum
likelihood estimator for the ancestral state at the root (which is
not what we do here), the sufficient condition for consistency is
different depending on the definition of consistency: one might infinitely
grow the same tree by adding tips, or alternatively he might consider
a series of completely different trees whose number of tips tend to
infinity. In the former case, one might want to have a model that
describes how the tree has come to be, and how the tree will become
in future (say, the tree comes from a pure-birth tree model) in order
to reason about consistency or the asymptotic behaviour of the Hessians.
And even in the simpler latter case in which the user might just take
the tree ``as-is'', if there were multiple evolutionary regimes,
then one at least must define where the regimes are in all those theoretical
big trees, which nobody has ever observed in a real-world biological
setting; making up such a definition is in itself not an easy task
already. For this reason, in most practical scenarios, I recommend
that one should at least plot the likelihood surface to verify that
it is indeed similar to a quadratic function (they are not too bad
in most simulated scenarios I have manually inspected). Also, even
if in some cases one do not trust the Hessians due to a nonideal likelihood
surface, I still recommend using the Hessian as a diagnostic tool
to judge how good the MLE is, because any oddity in the Hessian, such
as extreme eigenvalues, is often a symptom of the MLE's ill-behaviour
or model misspecification.

From \prettyref{cor:dfrecur} we can also get some insight into what
the Hessian matrix is ``comprised of''. The corollary implies that
the ``magnitude'' of all the $H_{\eta}\Phi_{\eta}$'s throughout
the phylogeny are the key quantities that influence the numerical
values of the resulting Hessian matrix because of the way the results
of \prettyref{cor:dfrecur} is multiplied into the equations in \prettyref{prop:interm},
and then subsequently multiplied into the final Hessian matrix's entries.
One can envision that if all the $H_{\eta}\Phi_{\eta}$ throughout
the tree has large eigenvalues, then these recursion will result in
matrices with very large eigenvalues, hence then entries of the Hessian
matrix will be large, amounting to smaller uncertainty in some directions
of the resulting Hessian.

Some simplification of the algebra may be possible if one simply drop
the second term in \prettyref{eq:chainrule_simp}, hence arriving
at a delta-method-like approximation. This is also implemented in
the glinv package, in which the function for computing the Hessian
has an option to allow the user to omit this term, although anecdotal
observation suggests that this approximation can be sometimes very
good but sometimes very unreliable. Because $\mathbf{H}g^{(j)}$ becomes
more neligible when $g$ is closer to a linear function, one might
hope for a clever reparameterization to improve the delta-method approximation.
In fact, in the case of a Brownian motion, the mapping $g$ defined
by \prettyref{eq:VOU}-\eqref{eq:PhiOU} is simply a linear function
with respect to $LL^{T}$, so we should expect a much better approximation
than in the OU case. Dropping the second term might also be a good
starting point if one would like to derive asymptotic results out
of the formulae.

\section{Acknowledgement}

This work has been partially funded by the Swedish Research Council
(Vetenskapsrådet), grant no 2017-04951.

\bibliographystyle{elsarticle-harv}
\phantomsection\addcontentsline{toc}{section}{\refname}\bibliography{Manuscript}

\begin{thebibliography}{22}
\expandafter\ifx\csname natexlab\endcsname\relax\def\natexlab#1{#1}\fi
\providecommand{\url}[1]{\texttt{#1}}
\providecommand{\href}[2]{#2}
\providecommand{\path}[1]{#1}
\providecommand{\DOIprefix}{doi:}
\providecommand{\ArXivprefix}{arXiv:}
\providecommand{\URLprefix}{URL: }
\providecommand{\Pubmedprefix}{pmid:}
\providecommand{\doi}[1]{\href{http://dx.doi.org/#1}{\path{#1}}}
\providecommand{\Pubmed}[1]{\href{pmid:#1}{\path{#1}}}
\providecommand{\bibinfo}[2]{#2}
\ifx\xfnm\relax \def\xfnm[#1]{\unskip,\space#1}\fi
\bibitem[{Bartoszek and Li\`o(2019)}]{Bartoszek_2019}
\bibinfo{author}{Bartoszek, K.}, \bibinfo{author}{Li\`o, P.},
  \bibinfo{year}{2019}.
\newblock \bibinfo{title}{Modelling trait-dependent speciation with approximate
  bayesian computation}.
\newblock \bibinfo{journal}{Acta Physica Polonica B Proceedings Supplement}
  \bibinfo{volume}{12}, \bibinfo{pages}{25}.
\newblock \DOIprefix\doi{10.5506/aphyspolbsupp.12.25}.
\bibitem[{Bartoszek et~al.(2012)Bartoszek, Pienaar, Mostad, Andersson and
  Hansen}]{bartoszek_phylogenetic_2012}
\bibinfo{author}{Bartoszek, K.}, \bibinfo{author}{Pienaar, J.},
  \bibinfo{author}{Mostad, P.}, \bibinfo{author}{Andersson, S.},
  \bibinfo{author}{Hansen, T.F.}, \bibinfo{year}{2012}.
\newblock \bibinfo{title}{A phylogenetic comparative method for studying
  multivariate adaptation}.
\newblock \bibinfo{journal}{Journal of Theoretical Biology}
  \bibinfo{volume}{314}, \bibinfo{pages}{204--215}.
\newblock \DOIprefix\doi{10.1016/j.jtbi.2012.08.005}.
\bibitem[{Bastide et~al.(2020)Bastide, Ho, Baele, Lemey and
  Suchard}]{bastide2020}
\bibinfo{author}{Bastide, P.}, \bibinfo{author}{Ho, L.S.T.},
  \bibinfo{author}{Baele, G.}, \bibinfo{author}{Lemey, P.},
  \bibinfo{author}{Suchard, M.A.}, \bibinfo{year}{2020}.
\newblock \bibinfo{title}{Efficient bayesian inference of general gaussian
  models on large phylogenetic trees}.
\newblock \href{http://arxiv.org/abs/2003.10336}{{\tt arXiv:2003.10336}}.
\bibitem[{Beaulieu et~al.(2012)Beaulieu, Jhwueng, Boettiger and
  O'Meara}]{beaulieu_modeling_2012}
\bibinfo{author}{Beaulieu, J.M.}, \bibinfo{author}{Jhwueng, D.C.},
  \bibinfo{author}{Boettiger, C.}, \bibinfo{author}{O'Meara, B.C.},
  \bibinfo{year}{2012}.
\newblock \bibinfo{title}{Modeling {Stabilizing} {Selection}: {Expanding} the
  {Ornstein}-uhlenbeck {Model} of {Adaptive} {Evolution}}.
\newblock \bibinfo{journal}{Evolution} \bibinfo{volume}{66},
  \bibinfo{pages}{2369--2383}.
\newblock \DOIprefix\doi{10.1111/j.1558-5646.2012.01619.x}.
\bibitem[{Blomberg et~al.(2003)Blomberg, Garland~JR. and
  Ives}]{blomberg_testing_2003}
\bibinfo{author}{Blomberg, S.P.}, \bibinfo{author}{Garland~JR., T.},
  \bibinfo{author}{Ives, A.R.}, \bibinfo{year}{2003}.
\newblock \bibinfo{title}{Testing for {Phylogenetic} {Signal} in {Comparative}
  {Data}: {Behavioral} {Traits} {Are} {More} {Labile}}.
\newblock \bibinfo{journal}{Evolution} \bibinfo{volume}{57},
  \bibinfo{pages}{717--745}.
\newblock \DOIprefix\doi{10.1111/j.0014-3820.2003.tb00285.x}.
\bibitem[{Butler and King(2004)}]{butler_phylogenetic_2004}
\bibinfo{author}{Butler, M.A.}, \bibinfo{author}{King, A.A.},
  \bibinfo{year}{2004}.
\newblock \bibinfo{title}{Phylogenetic comparative analysis: a modeling
  approach for adaptive evolution}.
\newblock \bibinfo{journal}{American Naturalist} \bibinfo{volume}{164},
  \bibinfo{pages}{683--695}.
\newblock \DOIprefix\doi{10.1086/426002}.
\bibitem[{Clavel et~al.(2015)Clavel, Escarguel and
  Merceron}]{clavel_mvmorph_2015}
\bibinfo{author}{Clavel, J.}, \bibinfo{author}{Escarguel, G.},
  \bibinfo{author}{Merceron, G.}, \bibinfo{year}{2015}.
\newblock \bibinfo{title}{{mvMORPH}: an {R} package for fitting multivariate
  evolutionary models to morphometric data}.
\newblock \bibinfo{journal}{Methods in Ecology and Evolution}
  \bibinfo{volume}{6}, \bibinfo{pages}{1311--1319}.
\newblock \DOIprefix\doi{10.1111/2041-210X.12420}.
\bibitem[{Cooper et~al.(2016)Cooper, Thomas, Venditti, Meade and
  Freckleton}]{cooper_cautionary_2016}
\bibinfo{author}{Cooper, N.}, \bibinfo{author}{Thomas, G.H.},
  \bibinfo{author}{Venditti, C.}, \bibinfo{author}{Meade, A.},
  \bibinfo{author}{Freckleton, R.P.}, \bibinfo{year}{2016}.
\newblock \bibinfo{title}{A cautionary note on the use of {Ornstein}
  {Uhlenbeck} models in macroevolutionary studies}.
\newblock \bibinfo{journal}{Biological Journal of the Linnean Society}
  \bibinfo{volume}{118}, \bibinfo{pages}{64--77}.
\newblock \DOIprefix\doi{10.1111/bij.12701}. \bibinfo{note}{\_eprint:
  https://onlinelibrary.wiley.com/doi/pdf/10.1111/bij.12701}.
\bibitem[{Felsenstein(1985)}]{felsenstein1985}
\bibinfo{author}{Felsenstein, J.}, \bibinfo{year}{1985}.
\newblock \bibinfo{title}{Phylogenies and the comparative method}.
\newblock \bibinfo{journal}{The American Naturalist} \bibinfo{volume}{125},
  \bibinfo{pages}{1--15}.
\newblock \URLprefix \url{http://www.jstor.org/stable/2461605}.
\bibitem[{Hansen(1997)}]{hansen1997ou}
\bibinfo{author}{Hansen, T.F.}, \bibinfo{year}{1997}.
\newblock \bibinfo{title}{Stabilizing selection and the comparative analysis of
  adaptation}.
\newblock \bibinfo{journal}{Evolution} \bibinfo{volume}{51},
  \bibinfo{pages}{1341--1351}.
\bibitem[{Harmon et~al.(2010)Harmon, Losos, Jonathan~Davies, Gillespie,
  Gittleman, Bryan~Jennings, Kozak, McPeek, Moreno-Roark, Near
  et~al.}]{harmon_early_2010}
\bibinfo{author}{Harmon, L.J.}, \bibinfo{author}{Losos, J.B.},
  \bibinfo{author}{Jonathan~Davies, T.}, \bibinfo{author}{Gillespie, R.G.},
  \bibinfo{author}{Gittleman, J.L.}, \bibinfo{author}{Bryan~Jennings, W.},
  \bibinfo{author}{Kozak, K.H.}, \bibinfo{author}{McPeek, M.A.},
  \bibinfo{author}{Moreno-Roark, F.}, \bibinfo{author}{Near, T.J.}, et~al.,
  \bibinfo{year}{2010}.
\newblock \bibinfo{title}{Early bursts of body size and shape evolution are
  rare in comparative data}.
\newblock \bibinfo{journal}{Evolution} \bibinfo{volume}{64},
  \bibinfo{pages}{2385--2396}.
\bibitem[{Harmon et~al.(2008)Harmon, Weir, Brock, Glor and
  Challenger}]{harmon_geiger_2008}
\bibinfo{author}{Harmon, L.J.}, \bibinfo{author}{Weir, J.T.},
  \bibinfo{author}{Brock, C.D.}, \bibinfo{author}{Glor, R.E.},
  \bibinfo{author}{Challenger, W.}, \bibinfo{year}{2008}.
\newblock \bibinfo{title}{{GEIGER}: investigating evolutionary radiations}.
\newblock \bibinfo{journal}{Bioinformatics} \bibinfo{volume}{24},
  \bibinfo{pages}{129--131}.
\newblock \DOIprefix\doi{10.1093/bioinformatics/btm538}.
\bibitem[{Ho and An\'e(2014)}]{ho_linear-time_2014}
\bibinfo{author}{Ho, L.S.T.}, \bibinfo{author}{An\'e, C.},
  \bibinfo{year}{2014}.
\newblock \bibinfo{title}{A {Linear}-{Time} {Algorithm} for {Gaussian} and
  {Non}-{Gaussian} {Trait} {Evolution} {Models}}.
\newblock \bibinfo{journal}{Systematic Biology} \bibinfo{volume}{63},
  \bibinfo{pages}{397--408}.
\newblock \DOIprefix\doi{10.1093/sysbio/syu005}.
\bibitem[{Max(1950)}]{max1950inverting}
\bibinfo{author}{Max, A.W.}, \bibinfo{year}{1950}.
\newblock \bibinfo{title}{Inverting modified matrices}, in:
  \bibinfo{booktitle}{Memorandum Rept. 42, Statistical Research Group}.
  \bibinfo{publisher}{Princeton Univ.}, p.~\bibinfo{pages}{4}.
\bibitem[{Mitov et~al.(2020)Mitov, Bartoszek, Asimomitis and
  Stadler}]{mitov2020}
\bibinfo{author}{Mitov, V.}, \bibinfo{author}{Bartoszek, K.},
  \bibinfo{author}{Asimomitis, G.}, \bibinfo{author}{Stadler, T.},
  \bibinfo{year}{2020}.
\newblock \bibinfo{title}{Fast likelihood calculation for multivariate gaussian
  phylogenetic models with shifts}.
\newblock \bibinfo{journal}{Theoretical Population Biology}
  \bibinfo{volume}{131}, \bibinfo{pages}{66--78}.
\newblock \DOIprefix\doi{https://doi.org/10.1016/j.tpb.2019.11.005}.
\bibitem[{Mitov et~al.(2019)Mitov, Bartoszek and
  Stadler}]{mitov_automatic_2019}
\bibinfo{author}{Mitov, V.}, \bibinfo{author}{Bartoszek, K.},
  \bibinfo{author}{Stadler, T.}, \bibinfo{year}{2019}.
\newblock \bibinfo{title}{Automatic generation of evolutionary hypotheses using
  mixed {Gaussian} phylogenetic models}.
\newblock \bibinfo{journal}{Proceedings of the National Academy of Sciences}
  \bibinfo{volume}{116}, \bibinfo{pages}{16921--16926}.
\newblock \DOIprefix\doi{10.1073/pnas.1813823116}.
\bibitem[{Najfeld and Havel(1995)}]{najfeld_derivatives_1995}
\bibinfo{author}{Najfeld, I.}, \bibinfo{author}{Havel, T.},
  \bibinfo{year}{1995}.
\newblock \bibinfo{title}{Derivatives of the {Matrix} {Exponential} and {Their}
  {Computation}}.
\newblock \bibinfo{journal}{Advances in Applied Mathematics}
  \bibinfo{volume}{16}, \bibinfo{pages}{321--375}.
\newblock \DOIprefix\doi{10.1006/aama.1995.1017}.
\bibitem[{Paradis and Schliep(2019)}]{apepack}
\bibinfo{author}{Paradis, E.}, \bibinfo{author}{Schliep, K.},
  \bibinfo{year}{2019}.
\newblock \bibinfo{title}{ape 5.0: an environment for modern phylogenetics and
  evolutionary analyses in {R}}.
\newblock \bibinfo{journal}{Bioinformatics} \bibinfo{volume}{35},
  \bibinfo{pages}{526--528}.
\bibitem[{Pinheiro and Bates(1996)}]{pinheiro_unconstrained_1996}
\bibinfo{author}{Pinheiro, J.C.}, \bibinfo{author}{Bates, D.M.},
  \bibinfo{year}{1996}.
\newblock \bibinfo{title}{Unconstrained parametrizations for
  variance-covariance matrices}.
\newblock \bibinfo{journal}{Statistics and computing} \bibinfo{volume}{6},
  \bibinfo{pages}{289--296}.
\bibitem[{Revell and Harmon(2008)}]{revell_testing_2008}
\bibinfo{author}{Revell, L.J.}, \bibinfo{author}{Harmon, L.J.},
  \bibinfo{year}{2008}.
\newblock \bibinfo{title}{Testing quantitative genetic hypotheses about the
  evolutionary rate matrix for continuous characters}.
\newblock \bibinfo{journal}{Evolutionary ecology research}
  \bibinfo{volume}{10}, \bibinfo{pages}{311--331}.
\bibitem[{Singh et~al.(2018)Singh, Ghosh and Adhikari}]{rajesh2018}
\bibinfo{author}{Singh, R.}, \bibinfo{author}{Ghosh, D.},
  \bibinfo{author}{Adhikari, R.}, \bibinfo{year}{2018}.
\newblock \bibinfo{title}{Fast bayesian inference of the multivariate
  ornstein-uhlenbeck process}.
\newblock \bibinfo{journal}{Phys. Rev. E} \bibinfo{volume}{98},
  \bibinfo{pages}{012136}.
\newblock \DOIprefix\doi{10.1103/PhysRevE.98.012136}.
\bibitem[{Sylvester(1883)}]{sylvester1883xxxix}
\bibinfo{author}{Sylvester, J.J.}, \bibinfo{year}{1883}.
\newblock \bibinfo{title}{Xxxix. on the equation to the secular inequalities in
  the planetary theory}.
\newblock \bibinfo{journal}{The London, Edinburgh, and Dublin Philosophical
  Magazine and Journal of Science} \bibinfo{volume}{16},
  \bibinfo{pages}{267--269}.

\end{thebibliography}

\pagebreak
\begin{center}
\begin{hyphenrules}{nohyphenation}
\LARGE\bfseries Supplemental Materials: Exact Expressions for the Log-likelihood's Hessian in Multivariate Continuous-Time Continuous-Trait Gaussian Evolution along a Phylogeny\par
\end{hyphenrules}
\end{center}
\setcounter{equation}{0}
\setcounter{figure}{0}
\setcounter{table}{0}
\setcounter{page}{1}
\makeatletter
\renewcommand{\theequation}{S\arabic{equation}}
\renewcommand{\thefigure}{S\arabic{figure}}
\renewcommand{\bibnumfmt}[1]{[S#1]}
\renewcommand{\citenumfont}[1]{S#1}

\renewcommand\eqref[1]{\prettyref{eq:#1}}
\small

\global\long\def\etaone{\eta_{1}}%
\global\long\def\etaj{\eta_{j}}%
\global\long\def\etaJ{\eta_{J}}%
\global\long\def\rlr{ \begin{bmatrix} r_{\etaone} \\
 \vdots\\
 r_{\etaJ} \\
 
\end{bmatrix}}%
\global\long\def\xlr{ \begin{bmatrix} x_{\etaone} \\
 \vdots\\
 x_{\etaJ} \\
 
\end{bmatrix}}%
\global\long\def\xlrT{ \begin{bmatrix} x_{\etaone}^{T}  &  \cdots &  x_{\etaJ}^{T} \end{bmatrix}}%
\global\long\def\qlr{ \begin{bmatrix} q_{\etaone} \\
 \vdots\\
 q_{\etaJ} 
\end{bmatrix}}%
\global\long\def\qlrT{ \begin{bmatrix} q_{\etaone}^{T}  &  \cdots &  q_{\etaJ}^{T} \end{bmatrix}}%
\global\long\def\Philr{ \begin{bmatrix} \Phi_{\etaone} \\
 \vdots\\
 \Phi_{\etaJ} 
\end{bmatrix}}%
\global\long\def\PhilrT{ \begin{bmatrix} \Phi_{\etaone}^{T}  &  \cdots &  \Phi_{\etaJ}^{T} \end{bmatrix}}%
\global\long\def\Psilr{ \begin{bmatrix} \Psi_{\etaone} \\
 \vdots\\
 \Psi_{\etaJ} 
\end{bmatrix}}%
\global\long\def\PsilrT{ \begin{bmatrix} \Psi_{\etaone}^{T}  &  \cdots &  \Psi_{\etaJ}^{T} \end{bmatrix}}%
\global\long\def\PsilrPhiu{ \begin{bmatrix} \Psi_{\etaone}\\
 \vdots\\
 \Psi_{\etaJ} 
\end{bmatrix}\Phi_{u}}%
\global\long\def\PsilrTPhiuT{ \Phi_{u}^{T} \begin{bmatrix} \Psi_{\etaone}^{T}  &  \cdots &  \Psi_{\etaJ}^{T} \end{bmatrix}}%
\global\long\def\tlrT{ \begin{bmatrix} t_{\etaone}^{T}  &  \cdots &  t_{\etaJ}^{T} \end{bmatrix}}%
\global\long\def\tlr{ \begin{bmatrix} t_{\etaone} \\
 \vdots\\
 t_{\etaJ} 
\end{bmatrix}}%
\global\long\def\GlrT{ \begin{bmatrix} G_{\etaone}^{T}  &  \cdots &  G_{\etaJ}^{T} \end{bmatrix}}%
\global\long\def\Glr{ \begin{bmatrix} G_{\etaone} \\
 \vdots\\
 G_{\etaJ} 
\end{bmatrix}}%
\global\long\def\HlrT{ \begin{bmatrix} H_{\etaone}^{T}  &  \cdots &  H_{\etaJ}^{T} \end{bmatrix}}%
\global\long\def\Hlr{ \begin{bmatrix} H_{\etaone} \\
 \vdots\\
 H_{\etaJ} 
\end{bmatrix}}%
\global\long\def\Sig{\Sigma}%
\global\long\def\Siglr{ \begin{bmatrix} \Sig_{\etaone}  &  0  &  0  &  \cdots\\
 0  &  \Sig_{\eta_{2}}  &  0  &  \cdots\\
 0  &  0  &  \ddots &  \vdots\\
 0  &  0  &  \cdots &  \Sig_{\etaJ} 
\end{bmatrix}}%
\global\long\def\Siglrinv{ \begin{bmatrix} \Sig_{\etaone}^{-1}  &  0  &  0  &  \cdots\\
 0  &  \Sig_{\eta_{2}}^{-1}  &  0  &  \cdots\\
 0  &  0  &  \ddots &  \vdots\\
 0  &  0  &  \cdots &  \Sig_{\etaJ}^{-1} 
\end{bmatrix}}%
\global\long\def\E{\mathbb{E}}%
\global\long\def\Cov{\mathrm{Cov}}%
\global\long\def\Var{\mathrm{Var}}%
\global\long\def\R{\mathbb{R}}%
\global\long\def\sumc{\sum_{1\le j\le J}c_{\etaj}}%
\global\long\def\sumgam{\sum_{1\le j\le J}\gamma_{\etaj}}%
\global\long\def\sumOmega{\sum_{1\le j\le J}\Omega_{\etaj}}%
\global\long\def\sumDelta{\sum_{1\le j\le J}\Delta_{\etaj}}%
\global\long\def\children{\textrm{{Ch}}}%
\global\long\def\parent{\textrm{{Pa}}}%
\global\long\def\detaone#1{\frac{\partial{#1}}{\partial\theta_{\eta_{1}}}}%
\global\long\def\detatwo#1{\frac{\partial{#1}}{\partial\theta{}_{\eta_{2}}}}%
\global\long\def\deta#1{\frac{\partial{#1}}{\partial\theta{}_{\eta}}}%
\global\long\def\hrooteta#1{\frac{\partial^{2}{#1}_{u_{1}}}{\partial\theta_{\eta_{1}}\partial\theta_{\eta_{2}}}}%
\global\long\def\setchar#1{1_{#1}}%
\global\long\def\descendents{\textrm{{Ch}}^{*}}%
\global\long\def\D{\textrm{{D}}}%
\global\long\def\d{\textrm{{d}}}%
\global\long\def\vecc{\operatorname{vec}}%
\section*{Proof of Proposition 1}
\begin{proof}
We let $J=J_{u}$ for ease of notation. Let $\psi_{x_{u}|z_{v}}$
be the multivariate characteristic functions of $p(x_{u}|z_{u})$
and for convenience we denote $r=\rlr$. We have \setlength\arraycolsep{1.5pt}
\begin{align}
\psi_{x_{u}|z_{v}}(r) & =\E_{|z_{v}}\exp\left(ir^{T}\xlr\right)\\
 & =\E_{|z_{v}}\E_{|z_{u},z_{v}}\exp\left(ir^{T}\xlr\right)\\
 & =\E_{|z_{v}}\E_{|z_{u}}\exp\left(ir^{T}\xlr\right)\\
 & =\E_{|z_{v}}\E_{|z_{u}}\exp\left(i\sum_{1\le j\le J}r_{\etaj}^{T}x_{\etaj}\right)\\
 & =\E_{|z_{v}}\prod_{1\le j\le J}\E_{|z_{u}}\exp\left(ir_{\etaj}^{T}x_{\etaj}\right)\\
 & =\E_{|z_{v}}\prod_{1\le j\le J}\exp\left[i(q_{\etaj}+\Psi_{\etaj}z_{u})^{T}r_{\etaj}-\frac{1}{2}r_{\etaj}^{T}\Sig_{\etaj}r_{\etaj}\right]\\
 & =\exp\left(\sum_{1\le j\le J}iq_{\etaj}^{T}r_{\etaj}-\sum_{1\le j\le J}\frac{1}{2}r_{\etaj}^{T}\Sig_{\etaj}r_{\etaj}\right)\cdot\E_{|z_{v}}\exp\left[i\left(\sum_{1\le j\le J}r_{\etaj}^{T}\Psi_{\etaj}\right)z_{u}\right]\\
 & =\exp\left(ir^{T}\qlr-\frac{1}{2}r^{T}\Siglr r\right)\cdot\E_{|z_{v}}\exp\left(ir^{T}\Psilr z_{u}\right)\\
\end{align}
\begin{align}
 & =\exp\left(ir^{T}\qlr-\frac{1}{2}r^{T}\Siglr r\right)\\
 & \quad\quad\cdot\exp\left(ir^{T}\Psilr\left(\Phi_{u}z_{v}+w_{u}\right)-\frac{1}{2}r^{T}\Psilr V_{u}\PsilrT r\right)\\
 & =\exp\left\{ \begin{gathered}ir^{T}\left(\qlr+\Psilr\left(\Phi_{u}z_{v}+w_{u}\right)\right)\quad\quad\quad\quad\quad\quad\quad\quad\quad\quad\quad\\
-\frac{1}{2}r^{T}\left(\Siglr+\Psilr V_{u}\PsilrT\right)r
\end{gathered}
\right\} \\
 & =\exp\left(ir^{T}\left(q_{u}+\Psi_{u}z_{v}\right)-\frac{1}{2}r^{T}\Sig_{u}r\right).
\end{align}
\end{proof}

\section*{Proof of Proposition 2}

Before getting into Proposition 2, we first derive a lemma that relates
any bi-linear form of $\Sig_{u}^{-1}$ to $\Sig_{\etaj}^{-1}$ (recall
our notation that node $\eta_{j}$ is one of the direct children of
node $u$). This lemma will be used repeatedly in the proof of Proposition
2, and is the central main technique that is used to avoid storing
and inverting the all-species-all-traits covariance matrix.

\global\long\def\pl{\xl- q_{\ell}}%
 
\global\long\def\pr{\xr- q_{r}}%
 
\global\long\def\Psilwu{\Psi_{\ell}w_{u}}%
 
\global\long\def\Psirwu{\Psi_{r} w_{u}}%
 
\global\long\def\plr{\xlr- \qlr}%
 
\global\long\def\plrT{\xlrT- \qlrT}%
 
\global\long\def\Psilrwu{\Psilr w_{u}}%
 
\global\long\def\sumj{\sum_{1\le j\le J}}%

In this section we also let $J=J_{u}$ for ease of notation. 
\begin{lem}
\label{lem:recquad} For all matrices $G_{\ell},H_{\ell}\in\R^{k_{\ell}\times k_{u}}$
and $G_{r},H_{r}\in\R^{k_{r}\times k_{u}}$, 

\begin{multline*}
\GlrT\Sig_{u}^{-1}\Hlr=\sum_{1\le j\le J}G_{\etaj}^{T}\Sig_{\etaj}^{-1}H_{\etaj}\\
-\left(\sum_{1\le j\le J}G_{\etaj}^{T}\Sig_{\etaj}^{-1}\Psi_{\etaj}\right)\left(V_{u}^{-1}+\sum_{1\le j\le J}\Psi_{\etaj}^{T}\Sig_{\etaj}^{-1}\Psi_{\etaj}\right)^{-1}\left(\sum_{1\le j\le J}\Psi_{\etaj}^{T}\Sig_{\etaj}^{-1}H_{\etaj}\right)
\end{multline*}
\end{lem}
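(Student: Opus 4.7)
The plan is to apply the Sherman--Morrison--Woodbury identity directly to the explicit form of $\Sigma_u$ supplied by Proposition 1 and then exploit the block-diagonal structure of $\Siglrinv$ to collapse every resulting matrix product into a sum indexed by the children $\eta_j$ of $u$. Because the right-hand side of the lemma is exactly what one would obtain from a term-by-term Woodbury expansion, no recursion or induction is required; the argument is essentially one line of algebra followed by bookkeeping.

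First I would rewrite \eqref{eq:qpsisig} as $\Sigma_u = A + UCV$ with $A = \Siglr$, $U = \Psilr$, $C = V_u$, and $V = \PsilrT$. Since each $\Sigma_{\eta_j}$ and $V_u$ is symmetric positive definite, $A$ and $C$ are invertible, so the Woodbury identity cited in the main text applies and yields
\[
\Sigma_u^{-1} = \Siglrinv - \Siglrinv \Psilr \Bigl(V_u^{-1} + \PsilrT \Siglrinv \Psilr\Bigr)^{-1} \PsilrT \Siglrinv.
\]
Then I would pre- and post-multiply this equation by $\GlrT$ and $\Hlr$ respectively. Because $\Siglrinv$ is block diagonal, any bilinear form of the shape $X^T \Siglrinv Y$ with stacked block vectors $X,Y$ decouples across children: in particular $\GlrT \Siglrinv \Hlr = \sum_j G_{\eta_j}^T \Sigma_{\eta_j}^{-1} H_{\eta_j}$, and the analogous identities hold when one or both of $G,H$ is replaced by $\Psi$. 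Substituting these four collapsed sums into the sandwiched Woodbury expression produces the claimed formula directly.

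The only point that requires verification is that the inner factor $V_u^{-1} + \sum_j \Psi_{\eta_j}^T \Sigma_{\eta_j}^{-1} \Psi_{\eta_j}$ is invertible so that the Woodbury step is legitimate; this is immediate from the fact that $V_u^{-1}$ is positive definite and each $\Psi_{\eta_j}^T \Sigma_{\eta_j}^{-1} \Psi_{\eta_j}$ is positive semi-definite, so their sum is positive definite. The genuine ``obstacle'' is therefore purely notational: one must be disciplined about distinguishing stacked block vectors from ordinary matrices, matching the dimensions $k_{\eta_j} \times k_u$ in the individual summands with $(\sum_j k_{\eta_j}) \times k_u$ in the stacked objects, and pairing the outer $G,H$ blocks with the correct factors of the Woodbury correction. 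Once that bookkeeping is done carefully, no further case analysis or asymptotic argument is needed.
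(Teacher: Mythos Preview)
Your proposal is correct and follows essentially the same approach as the paper: apply the Woodbury identity to $\Sigma_u = \Siglr + \Psilr V_u \PsilrT$, then sandwich by $\GlrT$ and $\Hlr$ and use the block-diagonal structure of $\Siglrinv$ to collapse each product into a sum over children. Your extra remark on the positive-definiteness (hence invertibility) of $V_u^{-1} + \sum_j \Psi_{\eta_j}^T \Sigma_{\eta_j}^{-1} \Psi_{\eta_j}$ is a detail the paper leaves implicit.
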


\begin{proof}
Recall that Woodbury formula allows us to write inverse of sum to
sum of inverse 
\begin{equation}
(A+UCV)^{-1}=A^{-1}-A^{-1}U(C^{-1}+VA^{-1}U)^{-1}VA^{-1}.
\end{equation}
Applying the above to \eqref{qpsisig} gives us\setlength\arraycolsep{0.3pt}{\footnotesize{}
\begin{align}
\Sig_{u}^{-1} & =\left(\Siglr+\Psilr V_{u}\PsilrT\right)^{-1}\\
 & =\Siglrinv-\Siglrinv\Psilr\\
 & \quad\quad\quad\left(V_{u}^{-1}+\PsilrT\Siglrinv\Psilr\right)^{-1}\PsilrT\Siglrinv
\end{align}
}\setlength\arraycolsep{1.5pt}Therefore, multiplying $\GlrT$ and
$\Hlr$ on the left and right respectively immediately results in 

\begin{multline*}
\GlrT\Sig_{u}^{-1}\Hlr=\sum_{1\le j\le J}G_{\etaj}^{T}\Sig_{\etaj}^{-1}H_{\etaj}\\
\quad\quad-\left(\sum_{1\le j\le J}G_{\etaj}^{T}\Sig_{\etaj}^{-1}\Psi_{\etaj}\right)\left(V_{u}^{-1}+\sum_{1\le j\le J}\Psi_{\etaj}^{T}\Sig_{\etaj}^{-1}\Psi_{\etaj}\right)^{-1}\left(\sum_{1\le j\le J}\Psi_{\etaj}^{T}\Sig_{\etaj}^{-1}H_{\etaj}\right),
\end{multline*}
which is the required equation. 
\end{proof}
Now we proceed to prove Proposition 2 in the main text.
\begin{proof}[Proof of Proposition 2.]
First we prove the equation of $c_{u}$. To ease notation, we let
$p_{\etaj}=x_{\etaj}-q_{\etaj}$, and $G_{\etaj}=H_{\etaj}=p_{\etaj}-\Psi_{\etaj}w_{u}$.
So by the definition of $c_{u}$ and an application of \lemref{recquad}
and \eqref{qu} we have

\begin{align}
c_{u} & =(x_{u}-q_{u})^{T}\Sig_{u}^{-1}(x_{u}-q_{u})\\
 & =\left(\xlr-\qlr-\Psilr w_{u}\right)^{T}\Sig_{u}^{-1}\left(\xlr-\qlr-\Psilr w_{u}\right)\\
 & =\GlrT\Sig_{u}^{-1}\Hlr\\
 & =\sum_{1\le j\le J}G_{\etaj}^{T}\Sig_{\etaj}^{-1}H_{\etaj}-\left(\sum_{1\le j\le J}G_{\etaj}^{T}\Sig_{\etaj}^{-1}\Psi_{\etaj}\right)\left(V_{u}^{-1}+\sum_{1\le j\le J}\Psi_{\etaj}^{T}\Sig_{\etaj}^{-1}\Psi_{\etaj}\right)^{-1}\left(\sum_{1\le j\le J}\Psi_{\etaj}^{T}\Sig_{\etaj}^{-1}H_{\etaj}\right)\label{eq:cuquad}
\end{align}
Now if we expand at the terms one-by-one, we have, for the first term
above,

\begin{align}
\begin{split}G_{\etaj}^{T}\Sig_{\etaj}^{-1}H_{\etaj} & =(p_{\etaj}-\Psi_{\etaj}w_{u})^{T}\Sig_{\etaj}^{-1}(p_{\etaj}-\Psi_{\etaj}w_{u})\\
 & =p_{\etaj}^{T}\Sigma_{\etaj}^{-1}p_{\etaj}-p_{\etaj}^{T}\Sig_{\etaj}^{-1}\Psi_{\etaj}w_{u}-w_{u}^{T}\Psi_{\etaj}^{T}\Sig_{\etaj}^{-1}p_{\etaj}+w_{u}^{T}\Psi_{\etaj}^{T}\Sig_{\etaj}^{-1}\Psi_{\etaj}w_{u}\\
 & =c_{\etaj}-\gamma_{\etaj}^{\dagger}w_{u}-w_{u}^{T}\gamma_{\etaj}+w_{u}^{T}\Omega_{\etaj}w_{u}
\end{split}
\end{align}
and for the second term, observe that
\begin{align*}
G_{\etaj}^{T}\Sig_{\etaj}^{-1}\Psi_{\etaj} & =(p_{\etaj}-\Psi_{\etaj}w_{u})^{T}\Sig_{\etaj}^{-1}\Psi_{\etaj}=\gamma_{\etaj}^{\dagger}-w_{u}^{T}\Omega_{\eta_{j}}w_{u}\\
\Psi_{\etaj}^{T}\Sig_{\etaj}^{-1}H_{\etaj} & =\Psi_{\etaj}^{T}\Sig_{\etaj}^{-1}(p_{\etaj}-\Psi_{\etaj}w_{u})=\gamma_{\etaj}-w_{u}^{T}\Omega_{\eta_{j}}w_{u}
\end{align*}

Substituting the above into \eqref{cuquad} immediately results in
\eqref{curecur} because 
\begin{align}
c_{u} & =\sumj\left(c_{\etaj}-2\gamma_{\etaj}^{T}w_{u}+w_{u}^{T}\Omega_{\etaj}w_{u}\right)\\
 & \quad\quad\quad\quad-\left(\sumj\gamma_{\etaj}^{T}-w_{u}^{T}\Omega_{\etaj}\right)\left(V_{u}^{-1}+\sumj\Omega_{\etaj}\right)^{-1}\left(\sumj\gamma_{\etaj}-\Omega_{\etaj}^{T}w_{u}\right)\\
 & =\sumc-2\sumgam^{T}w_{u}+w_{u}^{T}\left(\sumOmega\right)w_{u}-b_{u}^{T}\Lambda_{u}b_{u}
\end{align}

Similarly, to prove \eqref{gammaurecur}, we apply \lemref{recquad}
to the definition of $\gamma_{u}$, which leads to\setlength\arraycolsep{0.3pt}{\footnotesize{}
\begin{align}
\gamma_{u} & =\Psi_{u}^{T}\Sig_{u}^{-1}(x_{u}-q_{u})\\
 & =\Psi_{u}^{T}\Sig_{u}^{-1}\left(\xlr-\qlr-\Psilr w_{u}\right)\\
 & =\PsilrTPhiuT\Sig_{u}^{-1}\begin{bmatrix}p_{\etaone}-\Psi_{\etaone}w_{u}\\
\vdots\\
p_{\etaJ}-\Psi_{\etaJ}w_{u}
\end{bmatrix}\\
 & =\Phi_{u}^{T}\left\{ \begin{gathered}\left(\sum_{1\le j\le J}\Psi_{\etaj}^{T}\Sig_{\etaj}^{-1}\left(p_{\etaj}-\Psi_{\etaj}w_{u}\right)\right)-\left(\sum_{1\le j\le J}\Psi_{\etaj}^{T}\Sig_{\etaj}^{-1}\Psi_{\etaj}\right)\quad\quad\quad\quad\\
\quad\quad\left(V_{u}^{-1}+\sum_{1\le j\le J}\Psi_{\etaj}^{T}\Sig_{\etaj}^{-1}\Psi_{\etaj}\right)^{-1}\left(\sum_{1\le j\le J}\Psi_{\etaj}^{T}\Sig_{\etaj}^{-1}\left(p_{\etaj}-\Psi_{\etaj}w_{u}\right)\right)
\end{gathered}
\right\} \\
 & =\Phi_{u}^{T}\left\{ \left(\sum_{1\le j\le J}\gamma_{\etaj}-\Omega_{\etaj}w_{u}\right)-\left(\sum_{1\le j\le J}\Omega_{\etaj}\right)\Lambda_{u}\left(\sum_{1\le j\le J}\gamma_{\etaj}-\Omega_{\etaj}w_{u}\right)\right\} \\
 & 
\Phi_{u}^{T}\left\{ b_{u}-\left(\sum_{1\le j\le J}\Omega_{\etaj}\right)\Lambda_{u}b_{u}\right\} \\
 & =\Phi_{u}^{T}\left\{ I-\left(\sum_{1\le j\le J}\Omega_{\etaj}\right)\Lambda_{u}\right\} b_{u},
\end{align}
}{\footnotesize\par}

therefore \eqref{gammaurecur} holds.

Using the same sort of argument, applying \lemref{recquad} to the
definition of $\Omega_{u}$ we have
\begin{align}
\Omega_{u} & =\PsilrTPhiuT\Sig_{u}^{-1}\PsilrPhiu\\
 & =\Phi_{u}^{T}\left[\sumj\Psi_{\etaj}^{T}\Sig_{\etaj}^{-1}\Psi_{\etaj}-\left(\sumj\Psi_{\etaj}^{T}\Sig_{\etaj}^{-1}\Psi_{\etaj}\right)\left(V_{u}^{-1}\sumj\Psi_{\etaj}^{T}\Sig_{\etaj}^{-1}\Psi_{\etaj}\right)^{-1}\left(\sumj\Psi_{\etaj}^{T}\Sig_{\etaj}^{-1}\Psi_{\etaj}\right)\right]\Phi_{u}\\
 & =\Phi_{u}^{T}\left(\sumj\Omega_{\etaj}-\Omega_{\etaj}\Lambda_{u}\Omega_{\etaj}\right)\Phi_{u}\\
 & =\Phi_{u}^{T}\sumj\Omega_{\etaj}\left[I_{k_{u}}-\Lambda_{u}\sumj\Omega_{\etaj}\right]\Phi_{u}
\end{align}

Finally, recall that \setlength\arraycolsep{0.3pt}
\begin{equation}
\det\left(A+UWV\right)=\det(W^{-1}+VA^{-1}U)\det W\det A
\end{equation}
for any invertible matrix $A$, $W$ and matrix $U$, $V$ of suitable
dimension. Applying this to \eqref{Deltau} results in 
\begin{align}
\Delta_{u} & =\ln\det\Sig_{u}\\
 & =\ln\det\left(\Siglr+\Psilr V_{u}\PsilrT\right)\\
 & =\ln\det\left(V_{u}^{-1}+\PsilrT\Siglrinv\Psilr\right)+\ln\det V_{u}+\ln\det\Siglr\\
 & =\ln\det\left(V_{u}^{-1}+\sumj\Omega_{\etaj}\right)+\ln\det V_{u}+\sumj\Delta_{\etaj},
\end{align}
as required.\setlength\arraycolsep{1.5pt}
\end{proof}
\begin{lem}
Furthermore, the all-species-all-trait likelihood given the root node
$o$ can be expressed as 
\begin{equation}
\ln p(x_{o}|z_{o})=-\frac{1}{2}\left[\sumj c_{\kappa_{j}}+z_{o}^{T}\left(\sumj\Omega_{\kappa_{j}}\right)z_{o}+\sumj\Delta_{\kappa_{j}}\right]+\left(\sumj\gamma_{\kappa_{j}}^{T}\right)z_{o}-\frac{Jk}{2}\ln(2\pi)\label{likmaster}
\end{equation}
where the nodes $\kappa_{j}\in\children(o)$ .
\end{lem}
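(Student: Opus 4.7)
The plan is to factor $p(x_o\mid z_o)$ across the direct children of the root via independence, identify each factor using Proposition~1, and then expand a Gaussian log-density into the $(c,\gamma,\Omega,\Delta)$ notation.

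First I would invoke the modelling assumption made in Section~\ref{sec:Model} that, after branching from a common ancestor, descendant lineages evolve independently. Applied at the root, this gives conditional independence given $z_o$ of the subclades rooted at the children $\kappa_j\in\children(o)$. Because $x_o$ is by definition the concatenation of the tip trait vectors within these subclades, namely $\{x_{\kappa_j}\}_{1\le j\le J_o}$, the joint density factorises as
\begin{equation*}
p(x_o\mid z_o)=\prod_{1\le j\le J_o}p(x_{\kappa_j}\mid z_o),
\qquad
\ln p(x_o\mid z_o)=\sum_{1\le j\le J_o}\ln p(x_{\kappa_j}\mid z_o).
\end{equation*}

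Next I would apply Proposition~1 recursively from the tips of the subtree rooted at $\kappa_j$, starting from the base identifications $(q_\eta,\Psi_\eta,\Sigma_\eta)=(w_\eta,\Phi_\eta,V_\eta)$ at the tips, to conclude that
\begin{equation*}
p(x_{\kappa_j}\mid z_o)=\phi(x_{\kappa_j};\,q_{\kappa_j}+\Psi_{\kappa_j}z_o,\,\Sigma_{\kappa_j}).
\end{equation*}
This is legitimate because the conclusion of Proposition~1 concerns only the conditional law of $x_u$ given its parent's trait and does not require the parent itself to be a non-root node; here $o$ simply plays the role of that parent.

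The last step is purely algebraic: I would take the logarithm of the explicit multivariate Gaussian density, expand the quadratic
\begin{equation*}
(x_{\kappa_j}-q_{\kappa_j}-\Psi_{\kappa_j}z_o)^{T}\Sigma_{\kappa_j}^{-1}(x_{\kappa_j}-q_{\kappa_j}-\Psi_{\kappa_j}z_o)=c_{\kappa_j}-2\gamma_{\kappa_j}^{T}z_o+z_o^{T}\Omega_{\kappa_j}z_o
\end{equation*}
using the definitions \prettyref{eq:cu}\textendash\prettyref{eq:Omegau}, identify $\ln\det\Sigma_{\kappa_j}=\Delta_{\kappa_j}$, and sum across $1\le j\le J_o$; the Gaussian normalising constants then collect into a single $-\tfrac{1}{2}\ln(2\pi)$ times the total trait dimension $\sum_{j}k_{\kappa_j}=k_o$. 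There is no genuine obstacle here: the whole argument is bookkeeping once Proposition~1 and the definitions of $c,\gamma,\Omega,\Delta$ are in hand. The only minor points worth care are the dimension accounting in the normalising constant (matching the $-\tfrac{k_o}{2}\ln(2\pi)$ constant used in the main-text version \prettyref{eq:likmaster}) and the observation that Proposition~1 is being invoked with the root playing the role of the ``parent'' node $v$.
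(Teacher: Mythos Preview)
Your proposal is correct and follows essentially the same route as the paper's own proof: factor the likelihood over the root's children by conditional independence, recognise each factor as the Gaussian density supplied by Proposition~1, expand the quadratic form, and match terms to the definitions of $c,\gamma,\Omega,\Delta$. Your remark about the normalising constant is apt, since the supplementary proof writes $\frac{Jk}{2}\ln(2\pi)$ while the main-text statement uses $\frac{k_o}{2}\ln(2\pi)$; these coincide once one sums the per-factor dimensions.
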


\begin{proof}
{\footnotesize{}
\begin{align}
\ln p(x_{o}|z_{o}) & =\sumj\ln p(x_{\kappa_{j}}|z_{o})\\
 & =\sumj\ln\phi(x_{\kappa_{j}};q_{\kappa_{j}}+\Psi_{\kappa_{j}}z_{o},\Sig_{\kappa_{j}})\\
 & =\sumj\left[\begin{aligned}-\frac{1}{2}\left(x_{\kappa_{j}}-q_{\kappa_{j}}-\Psi_{\kappa_{j}}z_{o}\right)^{T}\Sig_{\kappa_{j}}^{-1}\left(x_{\kappa_{j}}-q_{\kappa_{j}}-\Psi_{\kappa_{j}}z_{o}\right)\quad\quad\quad\quad\\
-\frac{1}{2}\sumj\ln\det{\Sig_{\kappa_{j}}}-\frac{k}{2}\ln(2\pi)
\end{aligned}
\right]\\
 & =\sumj\left[\begin{aligned}-\frac{1}{2}(x_{\kappa_{j}}-q_{\kappa_{j}})^{T}\Sig_{\kappa_{j}}^{-1}(x_{\kappa_{j}}-q_{\kappa_{j}})\quad\quad\quad\quad\quad\quad\quad\quad\quad\quad\quad\quad\quad\quad\quad\quad\\
+(x_{\kappa_{j}}-q_{\kappa_{j}})^{T}\Sig_{\kappa_{j}}^{-1}\Psi_{\kappa_{j}}z_{o}-\frac{1}{2}z_{o}^{T}\Psi_{\kappa_{j}}^{T}\Sig_{\kappa_{j}}^{-1}\Psi_{\kappa_{j}}z_{o}-\frac{1}{2}\sumj\Delta_{\kappa_{j}}
\end{aligned}
\right]-\frac{Jk}{2}\ln(2\pi)\\
 & =\sumj\left[-\frac{c_{\kappa_{j}}}{2}+\gamma_{\kappa_{j}}^{T}z_{o}-\frac{1}{2}z_{o}^{T}\Omega_{mj}z_{o}\right]-\frac{1}{2}\sumj\Delta_{\kappa_{j}}-\frac{Jk}{2}\ln(2\pi)\\
 & =-\frac{1}{2}\left[\sumj c_{\kappa_{j}}+z_{o}^{T}\left(\sumj\Omega_{\kappa_{j}}\right)z_{o}+\sumj\Delta_{\kappa_{j}}\right]+\left(\sumj\gamma_{\kappa_{j}}^{T}\right)z_{o}-\frac{Jk}{2}\ln(2\pi)
\end{align}
}{\footnotesize\par}
\end{proof}
Proposition 2 in the main text is a direct result of the above lemmata.

\section*{Proof of Proposition 3}

\begin{proof}
For each equation in the proposition, we prove by induction on $k$
starting from $k=n$ and inducing downward by showing that the equation
must hold for $k-1$ if it holds for any $k$.

First we prove that \eqref{dchnOmega} holds for all $k$. When $k=n$,
by definition immediately we have $F_{u_{k}}^{\eta}=F_{u_{n}}^{u_{n}}=I$.
Otherwise if \eqref{dchnOmega} holds for any $k$ then
\begin{align}
\frac{\partial\Omega_{u_{k-1}}}{\partial\theta_{u_{n}}} & =\frac{\partial}{\partial\theta_{u_{n}}}\left[\Phi_{u_{k-1}}^{T}\left(\sum_{\iota\in\children\left(u_{k-1}\right)}\Omega_{\iota}\right)H_{u_{k-1}}\Phi_{u_{k-1}}\right]\nonumber \\
 & =\Phi_{u_{k-1}}^{T}\left\{ \left[\frac{\partial}{\partial\theta_{u_{n}}}\left(\sum_{\iota\in\children\left(u_{k-1}\right)}\Omega_{\iota}\right)\right]H_{u_{k-1}}+\left(\sum_{\iota\in\children\left(u_{k-1}\right)}\Omega_{\iota}\right)\frac{\partial H_{u_{k-1}}}{\partial\theta_{u_{n}}}\right\} \Phi_{u_{k-1}}.\label{dodttmp-1}
\end{align}
and
\begin{align}
\frac{\partial\Lambda_{u_{k-1}}}{\partial\theta_{u_{n}}} & =\frac{\partial}{\partial\theta_{u_{n}}}\left(V_{u_{k-1}}^{-1}+\sum_{\iota\in\children\left(u_{k-1}\right)}\Omega_{\iota}\right)^{-1}\nonumber \\
 & =-\Lambda_{u_{k-1}}\frac{\partial}{\partial\theta_{u_{n}}}\left(V_{u_{k-1}}^{-1}+\sum_{\iota\in\children\left(u_{k-1}\right)}\Omega_{\iota}\right)\Lambda_{u_{k-1}}\nonumber \\
 & =-\Lambda_{u_{k-1}}\frac{\partial\Omega_{u_{k}}}{\partial\theta_{u_{n}}}\Lambda_{u_{k-1}},\label{eq:dLambdadany-1}
\end{align}
where the last equality holds because only the one and only one child
of $u_{k-1}$ that leads to $u_{n}$ has non-zero derivative with
respect to $u_{n}'s$ parameters, and this child is by defintion $u_{k}$.

Using this, we can continue the derivation in \ref{dodttmp-1}, which
leads to 
\begin{align*}
\frac{\partial\Omega_{u_{k-1}}}{\partial\theta_{u_{n}}} & =\Phi_{u_{k-1}}^{T}\left\{ \frac{\partial\Omega_{u_{k}}}{\partial\theta_{u_{n}}}H_{u_{k-1}}+\left(\sum_{\iota\in\children\left(u_{k-1}\right)}\Omega_{\iota}\right)\frac{\partial}{\partial\theta_{u_{n}}}\left(I-\Lambda_{u_{k-1}}\sum_{\iota\in\children\left(u_{k-1}\right)}\Omega_{\iota}\right)\right\} \Phi_{u_{k-1}}\\
 & =\Phi_{u_{k-1}}^{T}\left\{ \frac{\partial\Omega_{u_{k}}}{\partial\theta_{u_{n}}}H_{u_{k-1}}+\left(\sum_{\iota\in\children\left(u_{k-1}\right)}\Omega_{\iota}\right)\left[\Lambda_{u_{k-1}}\frac{\partial\Omega_{u_{k}}}{\partial\theta_{u_{n}}}\Lambda_{u_{k-1}}\sum_{\iota\in\children\left(u_{k-1}\right)}\Omega_{\iota}-\Lambda_{u_{k-1}}\frac{\partial\Omega_{u_{k}}}{\partial\theta_{u_{n}}}\right]\right\} \Phi_{u_{k-1}}\\
 & =\Phi_{u_{k-1}}^{T}\left\{ \frac{\partial\Omega_{u_{k}}}{\partial\theta_{u_{n}}}H_{u_{k-1}}+\left(\sum_{\iota\in\children\left(u_{k-1}\right)}\Omega_{\iota}\right)\Lambda_{u_{k-1}}\frac{\partial\Omega_{u_{k}}}{\partial\theta_{u_{n}}}\left(\Lambda_{u_{k-1}}\sum_{\iota\in\children\left(u_{k-1}\right)}\Omega_{\iota}-I\right)\right\} \Phi_{u_{k-1}}\\
 & =\Phi_{u_{k-1}}^{T}\left\{ \frac{\partial\Omega_{u_{k}}}{\partial\theta_{u_{n}}}H_{u_{k-1}}-\left(\sum_{\iota\in\children\left(u_{k-1}\right)}\Omega_{\iota}\right)\Lambda_{u_{k-1}}\frac{\partial\Omega_{u_{k}}}{\partial\theta_{u_{n}}}H_{u_{k-1}}\right\} \Phi_{u_{k-1}}\\
 & =\Phi_{u_{k-1}}^{T}\left\{ I-\left(\sum_{\iota\in\children\left(u_{k-1}\right)}\Omega_{\iota}\right)\Lambda_{u_{k-1}}\right\} \frac{\partial\Omega_{u_{k}}}{\partial\theta_{u_{n}}}H_{u_{k-1}}\Phi_{u_{k-1}}\\
 & =\Phi_{u_{k-1}}^{T}H_{u_{k-1}}^{T}\frac{\partial\Omega_{u_{k}}}{\partial\theta_{u_{n}}}H_{u_{k-1}}\Phi_{u_{k-1}}\\
 & =\Phi_{u_{k-1}}^{T}H_{u_{k-1}}^{T}F_{u_{k}}^{u_{n}T}\frac{\partial\Omega_{u_{n}}}{\partial\theta_{u_{n}}}F_{u_{k}}^{u_{n}}H_{u_{k-1}}\Phi_{u_{k-1}}\\
 & =F_{u_{k-1}}^{u_{n}T}\frac{\partial\Omega_{u_{n}}}{\partial\theta_{u_{n}}}F_{u_{k-1}}^{u_{n}},
\end{align*}
as required to complete the induction proof of \eqref{dchnOmega}
for all $k=1\cdots n$.

We will now prove \eqref{dchngamma} holds for all $k=1\cdots n$
using the same induction. When $k=n$, the equation is trivially true
because $F_{u_{n}}^{u_{n}}=I$ and $g_{u_{n}}^{u_{n}}=0$ by definition.
Notice that
\begin{align*}
\frac{\partial b_{u_{k-1}}}{\partial\theta_{u_{n}}} & =\frac{\partial}{\partial\theta_{u_{n}}}\left(\sum_{\iota\in\children\left(u_{k-1}\right)}\gamma_{\iota}-\sum_{\iota\in\children\left(u_{k-1}\right)}\Omega_{\iota}w_{u_{k-1}}\right)\\
 & =\frac{\partial\gamma_{u_{k}}}{\partial\theta_{u_{n}}}-\frac{\partial\Omega_{u_{k}}}{\partial\theta_{u_{n}}}w_{u_{k-1}}.
\end{align*}
Using this, suppose \eqref{dchngamma} were true for any $k$, we
have
\begin{align*}
\frac{\partial\gamma_{u_{k-1}}}{\partial\theta_{u_{n}}} & =\frac{\partial}{\partial\theta_{u_{n}}}\left\{ \Phi_{u_{k-1}}^{T}\left[I-\left(\sum_{\iota\in\children\left(u_{k-1}\right)}\Omega_{\iota}\right)\Lambda_{u_{k-1}}\right]b_{u_{k-1}}\right\} \\
 & =\Phi_{u_{k-1}}^{T}\left\{ \left[\frac{\partial}{\partial\theta_{u_{n}}}\left[I-\left(\sum_{\iota\in\children\left(u_{k-1}\right)}\Omega_{\iota}\right)\Lambda_{u_{k-1}}\right]\right]b_{u_{k-1}}+\left[I-\left(\sum_{\iota\in\children\left(u_{k-1}\right)}\Omega_{\iota}\right)\Lambda_{u_{k-1}}\right]\frac{\partial}{\partial\theta_{u_{n}}}b_{u_{k-1}}\right\} \\
 & =\Phi_{u_{k-1}}^{T}\left\{ \begin{aligned}\left[-\frac{\partial\Omega_{u_{k}}}{\partial\theta_{u_{n}}}\Lambda_{u_{k-1}}-\left(\sum_{\iota\in\children\left(u_{k-1}\right)}\Omega_{\iota}\right)\left(-\Lambda_{u_{k-1}}\frac{\partial\Omega_{u_{k}}}{\partial\theta_{u_{n}}}\Lambda_{u_{k-1}}\right)\right]b_{u_{k-1}}\quad\quad\\
+H_{u_{k-1}}^{T}\left(\frac{\partial\gamma_{u_{k}}}{\partial\theta_{u_{n}}}-\frac{\partial\Omega_{u_{k}}}{\partial\theta_{u_{n}}}w_{u_{k-1}}\right)
\end{aligned}
\right\} \\
 & =\Phi_{u_{k-1}}^{T}\left\{ \begin{aligned}-\left[I-\left(\sum_{\iota\in\children\left(u_{k-1}\right)}\Omega_{\iota}\right)\Lambda_{u_{k-1}}\right]\left(\frac{\partial\Omega_{u_{k}}}{\partial\theta_{u_{n}}}\Lambda_{u_{k-1}}\right)b_{u_{k-1}}\quad\quad\quad\quad\quad\quad\quad\quad\\
+H_{u_{k-1}}^{T}\left[F_{u_{k}}^{u_{n}T}\left(\frac{\partial\gamma_{u_{n}}}{\partial\theta_{u_{n}}}-\frac{\partial\Omega_{u_{n}}}{\partial\theta_{u_{n}}}g_{u_{k}}^{u_{n}}\right)-F_{u_{k}}^{u_{n}T}\frac{\partial\Omega_{u_{n}}}{\partial\theta_{u_{n}}}F_{u_{k}}^{u_{n}}w_{u_{k-1}}\right]
\end{aligned}
\right\} \\
 & =\Phi_{u_{k-1}}^{T}\left\{ \begin{aligned}-H_{u_{k-1}}^{T}F_{u_{k}}^{u_{n}T}\left(\frac{\partial\Omega_{u_{n}}}{\partial\theta_{u_{n}}}\right)F_{u_{k}}^{u_{n}}\Lambda_{u_{k-1}}b_{u_{k-1}}\quad\quad\quad\quad\quad\quad\quad\quad\quad\quad\quad\quad\quad\quad\quad\\
+H_{u_{k-1}}^{T}\left[F_{u_{k}}^{u_{n}T}\left(\frac{\partial\gamma_{u_{n}}}{\partial\theta_{u_{n}}}-\frac{\partial\Omega_{u_{n}}}{\partial\theta_{u_{n}}}g_{u_{k}}^{u_{n}}\right)-F_{u_{k}}^{u_{n}T}\frac{\partial\Omega_{u_{n}}}{\partial\theta_{u_{n}}}F_{u_{k}}^{u_{n}}w_{u_{k-1}}\right]
\end{aligned}
\right\} \\
 & =\Phi_{u_{k-1}}^{T}H_{u_{k-1}}^{T}F_{u_{k}}^{u_{n}T}\left\{ \frac{\partial\gamma_{u_{n}}}{\partial\theta_{u_{n}}}-\left(\frac{\partial\Omega_{u_{n}}}{\partial\theta_{u_{n}}}\right)\left[g_{u_{k}}^{u_{n}}+F_{u_{k}}^{u_{n}}\left(\Lambda_{u_{k-1}}b_{u_{k-1}}+w_{u_{k-1}}\right)\right]\right\} \\
 & =F_{u_{k-1}}^{u_{n}T}\left(\frac{\partial\gamma_{u_{n}}}{\partial\theta_{u_{n}}}-\left(\frac{\partial\Omega_{u_{n}}}{\partial\theta_{u_{n}}}\right)g_{u_{k-1}}^{u_{n}}\right);
\end{align*}
therefore by induction \eqref{dchngamma} has been proven.

To prove \eqref{dchnc}, first, to slightly simplify the notation,
we let 
\[
M_{ij}=a_{u_{i}}^{T}F_{u_{i+1}}^{u_{n}T}\frac{\partial\Omega_{u_{n}}}{\partial\theta_{u_{n}}}F_{u_{j+1}}^{u_{n}}a_{u_{j}}.
\]
Then we directly take derivatives from the definition of $c$, which
gives

\begin{align*}
\frac{\partial c_{u_{\ell}}}{\partial\theta_{u_{n}}} & =\frac{\partial c_{u_{\ell+1}}}{\partial\theta_{u_{n}}}-2w_{u_{\ell}}^{T}\frac{\partial\gamma_{u_{\ell+1}}}{\partial\theta_{u_{n}}}+w_{u_{\ell}}^{T}\frac{\partial\Omega_{u_{\ell+1}}}{\partial\theta_{u_{n}}}w_{u_{\ell}}-\frac{\partial}{\partial\theta_{u_{n}}}\left(b_{u_{\ell}}^{T}\Lambda_{u_{\ell}}b_{u_{\ell}}\right)\\
 & =\frac{\partial c_{u_{\ell+1}}}{\partial\theta_{u_{n}}}-2w_{u_{\ell}}^{T}\frac{\partial\gamma_{u_{\ell+1}}}{\partial\theta_{u_{n}}}+w_{u_{\ell}}^{T}\frac{\partial\Omega_{u_{\ell+1}}}{\partial\theta_{u_{n}}}w_{u_{\ell}}-\left(\frac{\partial\gamma_{u_{\ell+1}}}{\partial\theta_{u_{n}}}-\frac{\partial\Omega_{u_{\ell+1}}}{\partial\theta_{u_{n}}}w_{u_{\ell}}\right)^{T}\Lambda_{u_{\ell}}b_{u_{\ell}}\\
 & \quad\quad\quad\quad-b_{u_{\ell}}^{T}\frac{\partial\Lambda_{u_{\ell}}}{\partial\theta_{u_{n}}}b_{u_{\ell}}-b_{u_{\ell}}^{T}\Lambda_{u_{\ell}}\left(\frac{\partial\gamma_{u_{\ell+1}}}{\partial\theta_{u_{n}}}-\frac{\partial\Omega_{u_{\ell+1}}}{\partial\theta_{u_{n}}}w_{u_{\ell}}\right)\\
 & =\frac{\partial c_{u_{\ell+1}}}{\partial\theta_{u_{n}}}-\left(a_{u_{\ell}}^{T}\frac{\partial\gamma_{u_{\ell+1}}}{\partial\theta_{u_{n}}}+\frac{\partial\gamma_{u_{\ell+1}}}{\partial\theta_{u_{n}}}^{T}a_{u_{\ell}}\right)+w_{u_{\ell}}^{T}\frac{\partial\Omega_{u_{\ell+1}}}{\partial\theta_{u_{n}}}w_{u_{\ell}}+w_{u_{\ell}}^{T}\frac{\partial\Omega_{u_{\ell+1}}}{\partial\theta_{u_{n}}}^{T}\Lambda_{u_{\ell}}b_{u_{\ell}}\\
 & \quad\quad\quad\quad+b_{u_{\ell}}^{T}\Lambda_{u_{\ell}}\frac{\partial\Omega_{u_{\ell+1}}}{\partial\theta_{u_{n}}}w_{u_{\ell}}+b_{u_{\ell}}^{T}\Lambda_{u_{\ell}}\frac{\partial\Omega_{u_{\ell+1}}}{\partial\theta_{u_{n}}}\Lambda_{u_{\ell}}b_{u_{\ell}}\\
 & =\frac{\partial c_{u_{\ell+1}}}{\partial\theta_{u_{n}}}-2a_{u_{\ell}}^{T}\frac{\partial\gamma_{u_{\ell+1}}}{\partial\theta_{u_{n}}}+a_{u_{\ell}}^{T}\frac{\partial\Omega_{u_{\ell+1}}}{\partial\theta_{u_{n}}}a_{u_{\ell}}\\
 & =\frac{\partial c_{u_{\ell+1}}}{\partial\theta_{u_{n}}}-2a_{u_{\ell}}^{T}F_{u_{\ell+1}}^{u_{n}T}\left(\frac{\partial\gamma_{u_{n}}}{\partial\theta_{u_{n}}}-\frac{\partial\Omega_{u_{n}}}{\partial\theta_{u_{n}}}g_{u_{\ell+1}}^{u_{n}}\right)+a_{u_{\ell}}^{T}F_{u_{\ell+1}}^{u_{n}T}\frac{\partial\Omega_{u_{n}}}{\partial\theta_{u_{n}}}F_{u_{\ell+1}}^{u_{n}}a_{u_{\ell}}\\
 & =\frac{\partial c_{u_{\ell+1}}}{\partial\theta_{u_{n}}}-a_{u_{\ell}}^{T}F_{u_{\ell+1}}^{u_{n}T}\left(2\frac{\partial\gamma_{u_{n}}}{\partial\theta_{u_{n}}}-2\frac{\partial\Omega_{u_{n}}}{\partial\theta_{u_{n}}}g_{u_{\ell+1}}^{u_{n}}-\frac{\partial\Omega_{u_{n}}}{\partial\theta_{u_{n}}}F_{u_{\ell+1}}^{u_{n}}a_{u_{\ell}}\right)\\
 & =\frac{\partial c_{u_{\ell+1}}}{\partial\theta_{u_{n}}}-\left(g_{u_{\ell}}^{u_{n}T}-g_{u_{\ell+1}}^{u_{n}T}\right)\left[2\frac{\partial\gamma_{u_{n}}}{\partial\theta_{u_{n}}}-\frac{\partial\Omega_{u_{n}}}{\partial\theta_{u_{n}}}\left(g_{u_{\ell}}^{u_{n}}+g_{u_{\ell+1}}^{u_{n}}\right)\right].
\end{align*}

Moving the first term to the left hand side, we have an expression
for the difference between $\frac{\partial c_{u_{\ell}}}{\partial\theta_{u_{n}}}$
and $\frac{\partial c_{u_{\ell+1}}}{\partial\theta_{u_{n}}}$, i.e.,
for any $\ell$, it follows that
\begin{align*}
\frac{\partial c_{u_{\ell}}}{\partial\theta_{u_{n}}}-\frac{\partial c_{u_{\ell+1}}}{\partial\theta_{u_{n}}} & =\left(g_{u_{\ell}}^{u_{n}T}-g_{u_{\ell+1}}^{u_{n}T}\right)\frac{\partial\Omega_{u_{n}}}{\partial\theta_{u_{n}}}\left(g_{u_{\ell}}^{u_{n}}+g_{u_{\ell+1}}^{u_{n}}\right)-2\left(g_{u_{\ell}}^{u_{n}T}-g_{u_{\ell+1}}^{u_{n}T}\right)\frac{\partial\gamma_{u_{n}}}{\partial\theta_{u_{n}}}\\
 & =g_{u_{\ell}}^{u_{n}T}\frac{\partial\Omega_{u_{n}}}{\partial\theta_{u_{n}}}g_{u_{\ell}}^{u_{n}}-g_{u_{\ell+1}}^{u_{n}T}\frac{\partial\Omega_{u_{n}}}{\partial\theta_{u_{n}}}g_{u_{\ell+1}}^{u_{n}}-2\left(g_{u_{\ell}}^{u_{n}T}-g_{u_{\ell+1}}^{u_{n}T}\right)\frac{\partial\gamma_{u_{n}}}{\partial\theta_{u_{n}}}\\
 & =\sum_{i=\ell}^{n-1}\sum_{j=\ell}^{n-1}a_{u_{i}}^{T}F_{u_{i+1}}^{u_{n}T}\frac{\partial\Omega_{u_{n}}}{\partial\theta_{u_{n}}}F_{u_{j+1}}^{u_{n}}a_{u_{j}}-\sum_{i=\ell+1}^{n-1}\sum_{j=\ell+1}^{n-1}a_{u_{i}}^{T}F_{u_{i+1}}^{u_{n}T}\frac{\partial\Omega_{u_{n}}}{\partial\theta_{u_{n}}}F_{u_{j+1}}^{u_{n}}a_{u_{j}}\\
 & \quad\quad\quad\quad-2\left(g_{u_{\ell}}^{u_{n}T}-g_{u_{\ell+1}}^{u_{n}T}\right)\frac{\partial\gamma_{u_{n}}}{\partial\theta_{u_{n}}}\\
 & =\sum_{j=\ell}^{n-1}a_{u_{\ell}}^{T}F_{u_{\ell+1}}^{u_{n}T}\frac{\partial\Omega_{u_{n}}}{\partial\theta_{u_{n}}}F_{u_{j+1}}^{u_{n}}a_{u_{j}}+\sum_{i=\ell+1}^{n-1}a_{u_{i}}^{T}F_{u_{i+1}}^{u_{n}T}\frac{\partial\Omega_{u_{n}}}{\partial\theta_{u_{n}}}F_{u_{\ell+1}}^{u_{n}}a_{u_{\ell}}-2a_{u_{\ell}}^{T}F_{u_{\ell+1}}^{u_{n}T}\frac{\partial\gamma_{u_{n}}}{\partial\theta_{u_{n}}}\\
 & =\sum_{j=\ell}^{n-1}M_{\ell j}+\sum_{i=\ell+1}^{n-1}M_{i\ell}-2a_{u_{\ell}}^{T}F_{u_{\ell+1}}^{u_{n}T}\frac{\partial\gamma_{u_{n}}}{\partial\theta_{u_{n}}}
\end{align*}
which leads to
\begin{align*}
\frac{\partial c_{u_{k}}}{\partial\theta_{u_{n}}}-\frac{\partial c_{u_{n}}}{\partial\theta_{u_{n}}} & =\sum_{\ell=k}^{n-1}\left(\frac{\partial c_{u_{\ell}}}{\partial\theta_{u_{n}}}-\frac{\partial c_{u_{\ell+1}}}{\partial\theta_{u_{n}}}\right)\\
 & =\sum_{\ell=k}^{n-1}\sum_{j=\ell}^{n-1}M_{\ell j}+\sum_{\ell=k}^{n-1}\sum_{i=\ell+1}^{n-1}M_{i\ell}-2\sum_{\ell=k}^{n-1}a_{u_{\ell}}^{T}F_{u_{\ell+1}}^{u_{n}T}\frac{\partial\gamma_{u_{n}}}{\partial\theta_{u_{n}}}\\
 & =\sum_{\ell=k}^{n-1}\sum_{j=\ell}^{n-1}M_{\ell j}+\sum_{j=k}^{n-1}\sum_{\ell=j+1}^{n-1}M_{\ell j}-2g_{u_{k}}^{u_{n}T}\frac{\partial\gamma_{u_{n}}}{\partial\theta_{u_{n}}}
\end{align*}
In the above, the first term sums over the index set $I_{1}=\left\{ \left(\ell,j\right)\in\mathbb{N}:k\le\ell\le j\le n-1\right\} $
and the second term sums over the set $I_{2}=\left\{ \left(\ell,j\right)\in\mathbb{N}:n-1\ge\ell>j\ge k\right\} $.
Here $I_{1}$ and $I_{2}$ are disjoint and their union covers the
entire set $\left\{ k\cdots n-1\right\} \times\left\{ k\cdots n-1\right\} $
because either one of $\ell\le j$ or $\ell>j$ must be true. Therefore
the first two terms can be rewritten as 
\[
\sum_{\ell=k}^{n-1}\sum_{j=\ell}^{n-1}M_{\ell j}+\sum_{j=k}^{n-1}\sum_{\ell=j+1}^{n-1}M_{\ell j}=\sum_{i=k}^{n-1}\sum_{j=k}^{n-1}M_{ij}=\left(\sum_{i=k}^{n-1}a_{u_{i}}^{T}F_{u_{i+1}}^{u_{n}T}\right)\frac{\partial\Omega_{u_{n}}}{\partial\theta_{u_{n}}}\left(\sum_{j=k}^{n-1}F_{u_{j+1}}^{u_{n}}a_{u_{j}}\right)=g_{u_{k}}^{u_{n}T}\frac{\partial\Omega_{u_{n}}}{\partial\theta_{u_{n}}}g_{u_{k}}^{u_{n}}.
\]
So we could conclude that
\begin{align*}
\frac{\partial c_{u_{k}}}{\partial\theta_{u_{n}}} & =\frac{\partial c_{u_{n}}}{\partial\theta_{u_{n}}}+g_{u_{k}}^{u_{n}T}\frac{\partial\Omega_{u_{n}}}{\partial\theta_{u_{n}}}g_{u_{k}}^{u_{n}}-2g_{u_{k}}^{u_{n}T}\frac{\partial\gamma_{u_{n}}}{\partial\theta_{u_{n}}},
\end{align*}
hence \eqref{dchnc} is proved.

Finally, to prove \eqref{dchnDelta}, we differentiate the definition
directly, leading to
\begin{align*}
\frac{\partial\Delta_{u_{\ell}}}{\partial\theta_{u_{n}}} & =\frac{\partial\Delta_{u_{\ell+1}}}{\partial\theta_{u_{n}}}-\frac{\partial}{\partial\theta_{u_{n}}}\ln\det\Lambda_{u_{\ell}}\\
 & =\frac{\partial\Delta_{u_{\ell+1}}}{\partial\theta_{u_{n}}}-\mathrm{tr}\left\{ \Lambda_{\ell}^{-T}\frac{\partial\Delta_{u_{\ell}}}{\partial\theta_{u_{n}}}\right\} \\
 & =\frac{\partial\Delta_{u_{\ell+1}}}{\partial\theta_{u_{n}}}-\mathrm{tr}\left\{ \Lambda_{\ell}^{-1}\left[-\Lambda_{u_{\ell}}\frac{\partial\Omega_{u_{\ell+1}}}{\partial\theta_{u_{n}}}\Lambda_{u_{\ell}}\right]\right\} \\
 & =\frac{\partial\Delta_{u_{\ell+1}}}{\partial\theta_{u_{n}}}+\mathrm{tr}\left\{ F_{u_{\ell+1}}^{u_{n}T}\frac{\partial\Omega_{u_{n}}}{\partial\theta_{u_{n}}}F_{u_{\ell+1}}^{u_{n}}\Lambda_{u_{\ell}}\right\} \\
 & =\frac{\partial\Delta_{u_{\ell+1}}}{\partial\theta_{u_{n}}}+\mathrm{tr}\left\{ \frac{\partial\Omega_{u_{n}}}{\partial\theta_{u_{n}}}F_{u_{\ell+1}}^{u_{n}}\Lambda_{u_{\ell}}F_{u_{\ell+1}}^{u_{n}T}\right\} 
\end{align*}
for any $1\le\ell\le n-1$. The above allows us to write $\frac{\partial\Delta_{u_{k}}}{\partial\theta_{u_{n}}}$
as a sum:
\[
\frac{\partial\Delta_{u_{k}}}{\partial\theta_{u_{n}}}-\frac{\partial\Delta_{u_{n}}}{\partial\theta_{u_{n}}}=\sum_{\ell=k}^{n-1}\left(\frac{\partial\Delta_{u_{\ell}}}{\partial\theta_{u_{n}}}-\frac{\partial\Delta_{u_{\ell+1}}}{\partial\theta_{u_{n}}}\right)=\mathrm{tr}\left\{ \frac{\partial\Omega_{u_{n}}}{\partial\theta_{u_{n}}}\sum_{\ell=k}^{n-1}F_{u_{\ell+1}}^{u_{n}}\Lambda_{u_{\ell}}F_{u_{\ell+1}}^{u_{n}T}\right\} =\mathrm{tr}\left\{ \frac{\partial\Omega_{u_{n}}}{\partial\theta_{u_{n}}}K_{u_{k}}^{u_{n}T}\right\} .
\]
Adding $\frac{\partial\Delta_{u_{n}}}{\partial\theta_{u_{n}}}$ to
both sides results in the required equation.
\end{proof}

\section*{Proof of Proposition 4}
\begin{proof}
Proposition 2 implies that $\Omega_{\eta}$ does not depend on $w_{\eta}$
and $c_{\eta}$ does not depend on $\Phi_{\eta}$, nor does $\Delta_{\eta}$
on $\Phi_{\eta}$ and $w_{\eta}$; thus these partial derivatives
vanish, i.e., \eqref{dxdxselfvanish} holds. To prove the rest of
the equations we take divide the proof into two cases: when $\eta$
is a tip and when it is not.

\textbf{When $\eta$ is a tip.} First, we prove the case where $\eta$
is a tip by directly differentiating $\Omega_{\eta}$, $\gamma_{\eta}$,
$c_{\eta}$ and $\Delta_{\eta}$ from definition (\eqref{cu}-(\ref{eq:Deltau})).
Note that now in \eqref{cu}-(\ref{eq:Deltau}) we have $\Psi_{\eta}=\Phi_{\eta}$,
$q_{\eta}=w_{\eta}$ and $\Sigma_{\eta}=V_{\eta}$.

Because 
\[
\frac{\partial V_{\eta}^{-1}}{\partial\left(V_{\eta}\right)_{ij}}=-V_{\eta}^{-1}e_{i}e_{j}^{T}V_{\eta}^{-1}=G_{\eta}^{\left(ij\right)}=M_{\eta}G_{\eta}^{\left(ij\right)}M_{\eta},
\]
\eqref{dodvself}-(\ref{eq:dcdvself}) can be easily seen as follows
\begin{gather*}
\frac{\partial\Omega_{\eta}}{\partial\left(V_{\eta}\right)_{ij}}=\Phi_{\eta}^{T}\left(\frac{\partial V_{\eta}^{-1}}{\partial\left(V_{\eta}\right)_{ij}}\right)\Phi_{\eta}=\Phi_{\eta}^{T}M_{\eta}G_{\eta}^{\left(ij\right)}M_{\eta}\Phi_{\eta};\\
\frac{\partial\gamma_{\eta}}{\partial\left(V_{\eta}\right)_{ij}}=\left[\Phi_{\eta}^{T}\frac{\partial V_{\eta}^{-1}}{\partial\left(V_{\eta}\right)_{ij}}\left(x_{\eta}-w_{\eta}\right)\right]=\Phi_{\eta}^{T}M_{\eta}G_{\eta}^{\left(ij\right)}\xi_{\eta};\mathrm{~and}\\
\frac{\partial c_{\eta}}{\partial\left(V_{\eta}\right)_{ij}}=\left(x_{\eta}-w_{\eta}\right)^{T}\frac{\partial V_{\eta}^{-1}}{\partial\left(V_{\eta}\right)_{ij}}\left(x_{\eta}-w_{\eta}\right)=\xi_{\eta}^{T}G_{\eta}^{\left(ij\right)}\xi_{\eta}.
\end{gather*}

\eqref{dddvself}-(\ref{eq:dgdwself}) can be derived as follows.
\begin{gather*}
\frac{\partial\Delta_{\eta}}{\partial V_{\eta}}=\frac{\partial}{\partial V_{\eta}}\ln\det V_{\eta}=\left(V_{\eta}^{-1}\right)^{T}=V_{\eta}^{-1}=D_{\eta}\\
\frac{\partial\Omega_{\eta}}{\partial\left(\Phi_{\eta}\right)_{ij}}=\Phi_{\eta}^{T}V_{\eta}^{-1}\frac{\partial\Phi_{\eta}}{\partial\left(\Phi_{\eta}\right)_{ij}}+\left[\frac{\partial\Phi_{\eta}}{\partial\left(\Phi_{\eta}\right)_{ij}}\right]^{T}V_{\eta}^{-1}\Phi_{\eta}=\Phi_{\eta}^{T}M_{\eta}Z_{\eta}e_{i}e_{j}^{T}+e_{j}^{T}e_{i}M_{\eta}Z_{\eta}\Phi_{\eta}\\
\frac{\partial\gamma_{\eta}}{\partial\left(\Phi_{\eta}\right)_{ij}}=\frac{\partial\Phi_{\eta}^{T}}{\partial\left(\Phi_{\eta}\right)_{ij}}V_{\eta}^{-1}\left(x_{\eta}-w_{\eta}\right)=e_{j}\left(e_{i}^{T}Z_{\eta}\xi_{\eta}\right)=\left[\left(Z_{\eta}e_{i}\right)^{T}\xi_{\eta}\right]e_{j}\\
\frac{\partial\gamma_{\eta}}{\partial\left(w_{\eta}\right)_{i}}=\Phi_{\eta}^{T}V_{\eta}^{-1}\left(\frac{\partial x_{\eta}}{\partial\left(w_{\eta}\right)_{i}}-\frac{\partial w_{\eta}}{\partial\left(w_{\eta}\right)_{i}}\right)=\Phi_{\eta}^{T}Z_{\eta}^{T}M_{\eta}\left(0-e_{i}\right)=-\Phi_{\eta}^{T}Z_{\eta}^{T}M_{\eta}e_{i}
\end{gather*}

Finally, \eqref{dcdwself} holds because, due to symmetricity of $V_{\eta}^{-1}$,
\[
\frac{\partial c_{\eta}}{\partial w_{\eta}}=\frac{\partial}{\partial w_{\eta}}\left[\left(x_{\eta}-w_{\eta}\right)^{T}V_{\eta}^{-1}\left(x_{\eta}-w_{\eta}\right)\right]=-2V_{\eta}^{-1}\left(x_{\eta}-w_{\eta}\right)=-2Z_{\eta}\xi_{\eta}.
\]

\textbf{When $\eta$ is not a tip.} It is useful to notice that by
\eqref{deflambda} we have
\begin{equation}
\frac{\partial\Lambda_{\eta}}{\partial\left(V_{\eta}\right)_{ij}}=\frac{\partial\left(V_{\eta}^{-1}+\sum_{\iota\in\children\left(\eta\right)}\Omega_{\iota}\right)^{-1}}{\partial\left(V_{\eta}\right)_{ij}}=-\Lambda_{\eta}\frac{\partial V_{\eta}^{-1}}{\partial\left(V_{\eta}\right)_{ij}}\Lambda_{\eta}=-\Lambda_{\eta}\frac{\partial V_{\eta}^{-1}}{\partial\left(V_{\eta}\right)_{ij}}\Lambda_{\eta}=\Lambda_{\eta}V_{\eta}^{-1}e_{i}e_{j}^{T}V_{\eta}^{-1}\Lambda_{\eta}=-G_{\eta}^{\left(ij\right)}.\label{eq:dLambdadv}
\end{equation}
Furthermore, using \eqref{adef} we have
\begin{align*}
\frac{\partial H_{\eta}}{\partial\left(V_{\eta}\right)_{ij}} & =\frac{\partial I_{k_{\eta}}}{\partial\left(V_{\eta}\right)_{ij}}-\frac{\partial\Lambda_{\eta}}{\partial\left(V_{\eta}\right)_{ij}}\sum_{\iota\in\children(\eta)}\Omega_{\iota}=0-\left(-G_{\eta}^{\left(ij\right)}M_{\eta}\right)=G_{\eta}^{\left(ij\right)}M_{\eta}\mathrm{\ and}
\end{align*}

Now we differentiate \eqref{curecur}-(\ref{eq:Deltaurecur}) with
respect to $V_{\eta}$, which gives us
\begin{gather}
\frac{\partial\Omega_{\eta}}{\partial\left(V_{\eta}\right)_{ij}}=\Phi_{\eta}^{T}M_{\eta}\frac{\partial H_{\eta}}{\partial\left(V_{\eta}\right)_{ij}}\Phi_{\eta}=\Phi_{\eta}^{T}M_{\eta}G_{\eta}^{\left(ij\right)}M_{\eta}\Phi_{\eta};\nonumber \\
\frac{\partial\gamma_{\eta}}{\partial\left(V_{\eta}\right)_{ij}}=\Phi_{\eta}^{T}\frac{\partial H_{\eta}^{T}}{\partial\left(V_{\eta}\right)_{ij}}b_{\eta}=\Phi_{\eta}^{T}M_{\eta}G_{\eta}^{\left(ij\right)}\xi_{\eta};\nonumber \\
\frac{\partial c_{\eta}}{\partial\left(V_{\eta}\right)_{ij}}=0-2\cdot0+0-b_{\eta}^{T}\frac{\partial\Lambda_{\eta}}{\partial\left(V_{\eta}\right)_{ij}}b_{\eta}=\xi_{\eta}^{T}G_{\eta}^{\left(ij\right)}\xi_{\eta};\label{eq:sidcdv}
\end{gather}
and
\begin{align*}
\frac{\partial\Delta_{\eta}}{\partial\left(V_{\eta}\right)_{ij}} & =0+\frac{\partial\ln\det V_{\eta}}{\partial\left(V_{\eta}\right)_{ij}}+\frac{\partial\ln\det\Lambda_{\eta}^{-1}}{\partial\left(V_{\eta}\right)_{ij}}\\
 & =\left(V_{\eta}^{-1}\right)_{ij}+\mathrm{tr}\left[\Lambda_{\eta}\frac{\partial\Lambda_{\eta}^{-1}}{\partial\left(V_{\eta}\right)_{ij}}\right]\\
 & =\left(V_{\eta}^{-1}\right)_{ij}+\mathrm{tr}\left\{ \Lambda_{\eta}\left[-\Lambda_{\eta}^{-1}\left(-\Lambda_{\eta}V_{\eta}^{-1}e_{i}e_{j}^{T}V_{\eta}^{-1}\Lambda_{\eta}\right)\Lambda_{\eta}^{-1}\right]\right\} \\
 & =\left(V_{\eta}^{-1}\right)_{ij}+\mathrm{tr}\left(\Lambda_{\eta}V_{\eta}^{-1}e_{i}e_{j}^{T}V_{\eta}^{-1}\right)\\
 & =\left(V_{\eta}^{-1}\right)_{ij}+\mathrm{tr}\left(e_{j}^{T}V_{\eta}^{-1}\Lambda_{\eta}V_{\eta}^{-1}e_{i}\right)\\
 & =\left[V_{\eta}^{-1}+\left(V_{\eta}^{-1}\Lambda_{\eta}V_{\eta}^{-1}\right)^{T}\right]_{ij}\\
 & =\left[V_{\eta}^{-1}+V_{\eta}^{-1}\Lambda_{\eta}V_{\eta}^{-1}\right]_{ij}\\
 & =\left[D_{\eta}\right]_{ij}.
\end{align*}
where the last equality holds because both $V_{\eta}$ and $\Lambda_{\eta}$
are symmetric (the latter is true because of \eqref{deflambda} and
(\ref{eq:Omegau})). Finally, the followings show \eqref{dodphiself}-(\ref{eq:dcdwself}):
\begin{align}
\frac{\partial\Omega_{\eta}}{\partial\left(\Phi_{\eta}\right)_{ij}} & =\Phi_{\eta}^{T}M_{\eta}H_{\eta}\frac{\partial\Phi_{\eta}}{\partial\left(\Phi_{\eta}\right)_{ij}}+\frac{\partial\Phi_{\eta}^{T}}{\partial\left(\Phi_{\eta}\right)_{ij}}M_{\eta}H_{\eta}\Phi_{\eta}=\Phi_{\eta}^{T}M_{\eta}Z_{\eta}e_{i}e_{j}^{T}+e_{j}e_{i}^{T}M_{\eta}Z_{\eta}\Phi_{\eta};\nonumber \\
\frac{\partial\gamma_{\eta}}{\partial\left(\Phi_{\eta}\right)_{ij}} & =\frac{\partial\Phi_{\eta}^{T}}{\partial\left(\Phi_{\eta}\right)_{ij}}H_{\eta}^{T}b_{\eta}=e_{j}e_{i}^{T}H_{\eta}^{T}b_{\eta}=e_{j}\left(e_{i}^{T}Z_{\eta}^{T}\xi_{\eta}\right)=\left[\left(Z_{\eta}^{T}e_{i}\right)^{T}\xi_{\eta}\right]e_{j};\nonumber \\
\frac{\partial\gamma_{\eta}}{\partial\left(w_{\eta}\right)_{i}} & =\Phi_{\eta}^{T}H_{\eta}^{T}\frac{\partial b_{\eta}}{\partial\left(w_{\eta}\right)_{i}}=\Phi_{\eta}^{T}H_{\eta}^{T}\left[-\left(\sum_{\iota\in\children\left(\eta\right)}\Omega_{\iota}\right)e_{i}\right]=-\Phi_{\eta}^{T}Z_{\eta}^{T}M_{\eta}e_{i};\nonumber \\
\frac{\partial c_{\eta}}{\partial w_{\eta}} & =\frac{\partial}{\partial\left(w_{\eta}\right)_{i}}\left[\sum_{\iota\in\children\left(\eta\right)}c_{\iota}-2\sum_{\iota\in\children\left(\eta\right)}\gamma_{\iota}^{T}w_{\eta}+w_{\eta}^{T}\left(\sum_{\iota\in\children\left(\eta\right)}\Omega_{\iota}\right)w_{\eta}-b_{\eta}^{T}\Lambda_{\eta}b_{\eta}\right]\label{eq:sidcdw1}\\
 & =0-2\left(\sum_{\iota\in\children\left(\eta\right)}\gamma_{\iota}\right)+2\left(\sum_{\iota\in\children\left(\eta\right)}\Omega_{\iota}\right)w_{\eta}\label{eq:sidcdw2}\\
 & \quad\quad\quad-\left[-2\left(\sum_{\iota\in\children\left(\eta\right)}\Omega_{\iota}\right)^{T}\Lambda_{\eta}\left(\sum_{\iota\in\children\left(\eta\right)}\gamma_{\iota}-\left(\sum_{\iota\in\children\left(\eta\right)}\Omega_{\iota}\right)w_{\eta}\right)\right]\\
 & =2\left[-I+\left(\sum_{\iota\in\children\left(\eta\right)}\Omega_{\iota}\right)\Lambda_{\eta}\right]\left(\sum_{\iota\in\children\left(\eta\right)}\gamma_{\iota}\right)+2\left[I-\left(\sum_{\iota\in\children\left(\eta\right)}\Omega_{\iota}\right)\Lambda_{\eta}\right]\left(\sum_{\iota\in\children\left(\eta\right)}\Omega_{\iota}\right)w_{\eta}\nonumber \\
 & =2H_{\eta}^{T}\left[\left(\sum_{\iota\in\children\left(\eta\right)}\Omega_{\iota}\right)w_{\eta}-\sum_{\iota\in\children\left(\eta\right)}\gamma_{\iota}\right]\nonumber \\
 & =-2H_{\eta}^{T}\xi_{\eta}\label{eq:sidcdw}\\
 & =-2Z_{\eta}^{T}\xi_{\eta}.\nonumber 
\end{align}
\end{proof}

\section*{Proof of Proposition 5}
\begin{proof}
It is trivial to verify that the second-order partial derivatives
not mentioned in Proposition 5 are all zero. Notice that regardless
of whether $\eta$ is a tip or not, we have 
\begin{align*}
\frac{\partial^{2}G_{\eta}^{\left(ij\right)}}{\partial\left(V_{\eta}\right)_{ij}\partial\left(V_{\eta}\right)_{pq}} & =\tilde{G}_{\eta}^{\left(ijpq\right)},
\end{align*}
because
\begin{align*}
\frac{\partial^{2}G_{\eta}^{\left(ij\right)}}{\partial\left(V_{\eta}\right)_{ij}\partial\left(V_{\eta}\right)_{pq}} & =\frac{\partial V_{\eta}^{-1}}{\partial\left(V_{\eta}\right)_{pq}}e_{i}e_{j}^{T}V_{\eta}^{-1}+V_{\eta}^{-1}e_{i}e_{j}^{T}\frac{\partial V_{\eta}^{-1}}{\partial\left(V_{\eta}\right)_{pq}}\\
 & =-V_{\eta}^{-1}e_{p}e_{q}^{T}V_{\eta}^{-1}e_{i}e_{j}^{T}V_{\eta}^{-1}-V_{\eta}^{-1}e_{i}e_{j}^{T}V_{\eta}^{-1}e_{p}e_{q}^{T}V_{\eta}^{-1}\\
 & =\tilde{G}_{\eta}^{\left(ijpq\right)}
\end{align*}
if $\eta$ were a tip, otherwise
\begin{align*}
\frac{\partial G_{\eta}^{\left(ij\right)}}{\partial\left(V_{\eta}\right)_{pq}} & =\frac{\partial\Lambda_{\eta}}{\partial\left(V_{\eta}\right)_{pq}}V_{\eta}^{-1}e_{i}e_{j}^{T}V_{\eta}^{-1}\Lambda_{\eta}+\Lambda_{\eta}\frac{\partial V_{\eta}^{-1}}{\partial\left(V_{\eta}\right)_{pq}}e_{i}e_{j}^{T}V_{\eta}^{-1}\Lambda_{\eta}\\
 & \quad\quad\quad\quad+\Lambda_{\eta}V_{\eta}^{-1}e_{i}e_{j}^{T}\frac{\partial V_{\eta}^{-1}}{\partial\left(V_{\eta}\right)_{pq}}\Lambda_{\eta}+\Lambda_{\eta}V_{\eta}^{-1}e_{i}e_{j}^{T}V_{\eta}^{-1}\frac{\partial\Lambda_{\eta}}{\partial\left(V_{\eta}\right)_{pq}}\\
 & =\Lambda_{\eta}V_{\eta}^{-1}e_{p}e_{q}^{T}V_{\eta}^{-1}\Lambda_{\eta}V_{\eta}^{-1}e_{i}e_{j}^{T}V_{\eta}^{-1}\Lambda_{\eta}-\Lambda_{\eta}V_{\eta}^{-1}e_{p}e_{q}^{T}V_{\eta}^{-1}e_{i}e_{j}^{T}V_{\eta}^{-1}\Lambda_{\eta}\\
 & \quad\quad\quad\quad-\Lambda_{\eta}V_{\eta}^{-1}e_{i}e_{j}^{T}V_{\eta}^{-1}e_{p}e_{q}^{T}V_{\eta}^{-1}\Lambda_{\eta}-\Lambda_{\eta}V_{\eta}^{-1}e_{i}e_{j}^{T}V_{\eta}^{-1}\Lambda_{\eta}V_{\eta}^{-1}e_{p}e_{q}^{T}V_{\eta}^{-1}\Lambda_{\eta}\\
 & =\Lambda_{\eta}V_{\eta}^{-1}e_{p}e_{q}^{T}\left(V_{\eta}^{-1}\Lambda_{\eta}V_{\eta}^{-1}-V_{\eta}^{-1}\right)e_{i}e_{j}^{T}V_{\eta}^{-1}\Lambda_{\eta}\\
 & \quad\quad\quad\quad+\Lambda_{\eta}V_{\eta}^{-1}e_{i}e_{j}^{T}\left(V_{\eta}^{-1}\Lambda_{\eta}V_{\eta}^{-1}-V_{\eta}^{-1}\right)e_{p}e_{q}^{T}V_{\eta}^{-1}\Lambda_{\eta}\\
 & =\tilde{G}_{\eta}^{\left(ijpq\right)}.
\end{align*}
Also it is trivial to verify that 
\[
\frac{\partial\xi_{\eta}}{\partial\left(w_{\eta}\right)_{p}}=-M_{\eta}e_{p}
\]
regardless of whether $\eta$ is a tip. Therefore in either case we
have
\begin{align}
\frac{\partial^{2}\Omega_{\eta}}{\partial\left(V_{\eta}\right)_{ij}\partial\left(V_{\eta}\right)_{pq}} & =\Phi_{\eta}^{T}M_{\eta}\frac{\partial G_{\eta}^{\left(ij\right)}}{\partial\left(V_{\eta}\right)_{pq}}M_{\eta}\Phi_{\eta}=\Phi_{\eta}^{T}M_{\eta}\tilde{G}_{\eta}^{\left(ijpq\right)}M_{\eta}\Phi_{\eta}.\nonumber \\
\frac{\partial^{2}\Omega_{\eta}}{\partial\left(V_{\eta}\right)_{ij}\partial\left(\Phi_{\eta}\right)_{pq}} & =\frac{\partial\Phi_{\eta}^{T}}{\partial\left(\Phi_{\eta}\right)_{pq}}M_{\eta}G_{\eta}^{\left(ij\right)}M_{\eta}\Phi_{\eta}+\Phi_{\eta}M_{\eta}G_{\eta}^{\left(ij\right)}M_{\eta}\frac{\partial\Phi_{\eta}}{\partial\left(\Phi_{\eta}\right)_{pq}}\nonumber \\
 & =e_{q}e_{p}^{T}M_{\eta}G_{\eta}^{\left(ij\right)}M_{\eta}\Phi_{\eta}+\Phi_{\eta}M_{\eta}G_{\eta}^{\left(ij\right)}M_{\eta}e_{p}e_{q}^{T}\\
\frac{\partial^{2}\Omega_{\eta}}{\partial\left(\Phi_{\eta}\right)_{ij}\partial\left(\Phi_{\eta}\right)_{pq}} & =\frac{\partial\Phi_{\eta}^{T}}{\partial\left(\Phi_{\eta}\right)_{pq}}M_{\eta}Z_{\eta}e_{i}e_{j}^{T}+e_{j}e_{i}^{T}M_{\eta}Z_{\eta}\frac{\partial\Phi_{\eta}}{\partial\left(\Phi_{\eta}\right)_{pq}}=e_{q}e_{p}^{T}M_{\eta}Z_{\eta}e_{i}e_{j}^{T}+e_{j}e_{i}^{T}M_{\eta}Z_{\eta}e_{p}e_{q}^{T}\nonumber \\
\frac{\partial^{2}\gamma_{\eta}}{\partial\left(V_{\eta}\right)_{ij}\partial\left(V_{\eta}\right)_{pq}} & =\Phi_{\eta}^{T}M_{\eta}\frac{\partial G_{\eta}^{\left(ij\right)}}{\partial\left(V_{\eta}\right)_{pq}}\xi_{\eta}=\Phi_{\eta}^{T}M_{\eta}\tilde{G}_{\eta}^{\left(ijpq\right)}\xi_{\eta}\nonumber \\
\frac{\partial^{2}\gamma_{\eta}}{\partial\left(V_{\eta}\right)_{ij}\partial\left(\Phi_{\eta}\right)_{pq}} & =\frac{\partial\Phi_{\eta}}{\partial\left(\Phi_{\eta}\right)_{pq}}M_{\eta}G_{\eta}^{\left(ij\right)}\xi_{\eta}=e_{p}e_{q}^{T}M_{\eta}G_{\eta}^{\left(ij\right)}\xi_{\eta}\nonumber \\
\frac{\partial^{2}\gamma_{\eta}}{\partial\left(V_{\eta}\right)_{ij}\partial\left(w_{\eta}\right)_{p}} & =\Phi_{\eta}M_{\eta}G_{\eta}^{\left(ij\right)}\frac{\partial\xi_{\eta}}{\partial\left(w_{\eta}\right)_{p}}=-\Phi_{\eta}M_{\eta}G_{\eta}^{\left(ij\right)}M_{\eta}e_{p}\nonumber \\
\frac{\partial^{2}\gamma_{\eta}}{\partial\left(\Phi_{\eta}\right)_{ij}\partial\left(w_{\eta}\right)_{p}} & =-\frac{\partial\Phi_{\eta}^{T}}{\partial\left(\Phi_{\eta}\right)_{ij}}Z_{\eta}^{T}M_{\eta}e_{p}=-e_{j}e_{i}^{T}Z_{\eta}^{T}M_{\eta}e_{p}=-\left[\left(M_{\eta}Z_{\eta}e_{i}\right)^{T}e_{p}\right]e_{j}\nonumber \\
\frac{\partial^{2}c_{\eta}}{\partial\left(V_{\eta}\right)_{ij}\partial\left(V_{\eta}\right)_{pq}} & =\xi_{\eta}^{T}\frac{\partial G_{\eta}^{\left(ij\right)}}{\partial\left(V_{\eta}\right)_{pq}}\xi_{\eta}=\xi_{\eta}^{T}\tilde{G}_{\eta}^{\left(ijpq\right)}\xi_{\eta}\nonumber \\
\frac{\partial^{2}c_{\eta}}{\partial\left(V_{\eta}\right)_{ij}\partial\left(w_{\eta}\right)_{p}} & =\frac{\partial\xi_{\eta}^{T}G_{\eta}^{\left(ij\right)}\xi_{\eta}}{\partial\left(w_{\eta}\right)_{p}}=\frac{\partial\xi_{\eta}^{T}}{\partial\left(w_{\eta}\right)_{p}}G_{\eta}^{\left(ij\right)}\xi_{\eta}+\xi_{\eta}^{T}G_{\eta}^{\left(ij\right)}\frac{\partial\xi_{\eta}}{\partial\left(w_{\eta}\right)_{p}}=-e_{p}^{T}M_{\eta}G_{\eta}^{\left(ij\right)}\xi_{\eta}-\xi_{\eta}^{T}G_{\eta}^{\left(ij\right)}M_{\eta}e_{p}\label{eq:sihcdvdw}
\end{align}
Finally, \eqref{hddvdv} holds because if $\eta$ is a tip then
\[
\frac{\partial^{2}\Delta_{\eta}}{\partial\left(V_{\eta}\right)_{ij}\partial\left(V_{\eta}\right)_{pq}}=e_{i}^{T}\frac{\partial V_{\eta}^{-1}}{\partial\left(V_{\eta}\right)_{pq}}e_{j}=-e_{i}^{T}V_{\eta}^{-1}e_{p}e_{q}^{T}V_{\eta}^{-1}e_{j}=-e_{i}^{T}D_{\eta}e_{p}e_{q}^{T}D_{\eta}e_{j};
\]
otherwise 
\begin{align*}
\frac{\partial^{2}\Delta_{\eta}}{\partial\left(V_{\eta}\right)_{ij}\partial\left(V_{\eta}\right)_{pq}} & =e_{i}^{T}\left[\frac{\partial V_{\eta}^{-1}}{\partial\left(V_{\eta}\right)_{pq}}-\frac{\partial V_{\eta}^{-1}}{\partial\left(V_{\eta}\right)_{pq}}\Lambda_{\eta}V_{\eta}^{-1}-V_{\eta}^{-1}\frac{\partial\Lambda_{\eta}}{\partial\left(V_{\eta}\right)_{pq}}V_{\eta}^{-1}-V_{\eta}^{-1}\Lambda_{\eta}\frac{\partial V_{\eta}^{-1}}{\partial\left(V_{\eta}\right)_{pq}}\right]e_{j}\\
 & =e_{i}^{T}\left(\begin{aligned}-V_{\eta}^{-1}e_{p}e_{q}^{T}V_{\eta}^{-1}+V_{\eta}^{-1}e_{p}e_{q}^{T}V_{\eta}^{-1}\Lambda_{\eta}V_{\eta}^{-1}\quad\quad\quad\quad\quad\quad\quad\quad\\
-V_{\eta}^{-1}\Lambda_{\eta}V_{\eta}^{-1}e_{p}e_{q}^{T}V_{\eta}^{-1}\Lambda_{\eta}V_{\eta}^{-1}+V_{\eta}^{-1}\Lambda_{\eta}V_{\eta}^{-1}e_{p}e_{q}^{T}V_{\eta}^{-1}
\end{aligned}
\right)e_{j}\\
 & =e_{i}^{T}\left[\begin{aligned}\left(V_{\eta}^{-1}\Lambda_{\eta}V_{\eta}^{-1}-V_{\eta}^{-1}\right)e_{p}e_{q}^{T}V_{\eta}^{-1}+V_{\eta}^{-1}e_{p}e_{q}^{T}\left(V_{\eta}^{-1}-D_{\eta}\right)\quad\quad\quad\\
-\left(V_{\eta}^{-1}-D_{\eta}\right)e_{p}e_{q}^{T}\left(V_{\eta}^{-1}-D_{\eta}\right)
\end{aligned}
\right]e_{j}\\
 & =e_{i}^{T}\left[-D_{\eta}e_{p}e_{q}^{T}V_{\eta}^{-1}+V_{\eta}^{-1}e_{p}e_{q}^{T}\left(V_{\eta}^{-1}-D_{\eta}\right)-\left(V_{\eta}^{-1}-D_{\eta}\right)e_{p}e_{q}^{T}\left(V_{\eta}^{-1}-D_{\eta}\right)\right]e_{j}\\
 & =e_{i}^{T}\left[-D_{\eta}e_{p}e_{q}^{T}V_{\eta}^{-1}+D_{\eta}e_{p}e_{q}^{T}\left(V_{\eta}^{-1}-D_{\eta}\right)\right]e_{j}\\
 & =e_{i}^{T}\left[D_{\eta}e_{p}e_{q}^{T}\left(-V_{\eta}^{-1}+V_{\eta}^{-1}-D_{\eta}\right)\right]e_{j}\\
 & =-e_{i}^{T}D_{\eta}e_{p}e_{q}^{T}D_{\eta}e_{j}.
\end{align*}

\end{proof}

\section*{Proof of Proposition 6}
\begin{proof}
If we substitute $u_{k}=u_{1}$ and $\eta=\eta_{2}$ into \eqref{dchnOmega}-(\ref{eq:dchnDelta})
and apply $\partial/\partial\theta_{\eta_{1}}$ to both sides of the
equations we will have
\begin{align*}
\hrooteta{\Omega}= & \detaone{F_{u_{1}}^{\eta_{2}}}^{T}\detatwo{\Omega_{\eta_{2}}}F_{u_{1}}^{\eta_{2}}+{F_{u_{1}}^{\eta_{2}}}^{T}\detatwo{\Omega_{\eta_{2}}}\detaone{F_{u_{1}}^{\eta_{2}}}+{F_{u_{1}}^{\eta_{2}}}^{T}\frac{\partial^{2}\Omega_{\eta_{2}}}{\partial\theta_{\eta_{1}}\partial\theta_{\eta_{2}}}\detaone{F_{u_{1}}^{\eta_{2}}}\\
\hrooteta{\gamma}= & \detaone{F_{u_{1}}^{\eta_{2}}}^{T}\left(\detatwo{\gamma_{\eta_{2}}}-\detatwo{\Omega_{\eta_{2}}}g_{u_{1}}^{\eta_{2}}\right)+{F_{u_{1}}^{\eta_{2}}}^{T}\left[\frac{\partial^{2}\gamma_{\eta_{2}}}{\partial\theta_{\eta_{1}}\theta_{\eta_{2}}}-\detatwo{\Omega_{\eta_{2}}}\detaone{g_{u_{1}}^{\eta_{2}}}-\frac{\partial^{2}\Omega_{\eta_{2}}}{\partial\theta_{\eta_{1}}\partial\theta_{\eta_{2}}}g_{u_{1}}^{\eta_{2}}\right]\\
\hrooteta c= & \frac{\partial^{2}c_{\eta_{2}}}{\partial\theta_{\eta_{1}}\partial\theta_{\eta_{2}}}+\detaone{g_{u_{1}}^{\eta_{2}}}^{T}\left(\detatwo{\Omega_{\eta_{2}}}g_{u_{1}}^{\eta_{2}}-2\detatwo{\gamma_{\eta_{2}}}\right)\\
 & \quad\quad\quad\quad+{g_{u_{1}}^{\eta_{2}}}^{T}\left(\detatwo{\Omega_{\eta_{2}}}\detaone{g_{u_{1}}^{\eta_{2}}}+\frac{\partial^{2}\Omega_{\eta_{2}}}{\partial\theta_{\eta_{1}}\partial\theta_{\eta_{2}}}g_{u_{1}}^{\eta_{2}}-2\frac{\partial^{2}\gamma_{\eta_{2}}}{\partial\theta_{\eta_{1}}\theta_{\eta_{2}}}\right);\quad\text{and}\\
\hrooteta{\Delta}= & \frac{\partial^{2}\Delta_{\eta_{2}}}{\partial\theta_{\eta_{1}}\partial\theta_{\eta_{2}}}+\text{tr}\left(\detatwo{\Omega_{\eta_{2}}}\detaone{K_{u_{1}}^{\eta_{2}}}+\frac{\partial^{2}\Omega_{\eta_{2}}}{\partial\theta_{\eta_{1}}\partial\theta_{\eta_{2}}}K_{u_{1}}^{\eta_{2}}\right).
\end{align*}
Now notice that $\frac{\partial^{2}\Omega_{\eta_{2}}}{\partial\theta_{\eta_{1}}\partial\theta_{\eta_{2}}}$,
$\frac{\partial^{2}\gamma_{\eta_{2}}}{\partial\theta_{\eta_{1}}\theta_{\eta_{2}}}$,
$\frac{\partial^{2}c_{\eta_{2}}}{\partial\theta_{\eta_{1}}\partial\theta_{\eta_{2}}}$,
and $\frac{\partial^{2}\Delta_{\eta_{2}}}{\partial\theta_{\eta_{1}}\partial\theta_{\eta_{2}}}$
vanishes no matter whether the relationship between $\eta_{1}$ and
$\eta_{2}$ is in Case 1 or 4, because according to \eqref{curecur}-(\ref{eq:Deltaurecur})
$\left(\Omega_{\eta_{2}},\gamma_{\eta_{2}},c_{\eta_{2}},\Delta_{\eta_{2}}\right)$
can only contain $\theta_{\eta_{1}}$ if $\eta_{1}$ were either $\eta_{2}$
itself or a descendant of it, which is a configuration that is excluded
in Case 1 nor 4 by definition. After setting them to zero \eqref{hodtdt}-(\ref{eq:hddtdt})
follows.
\end{proof}

\section*{Proof of Proposition 7}
\begin{proof}
We additionally let $u_{K+1}$ to be further nodes down the lineage
that goes from the root to $\eta_{2}$ in Case 4, if they exists.
Notice that in Case 1, both $u_{K+1}$ and $u_{K+2}$ must exists
by defintion while in Case 4 there is only $u_{K+1}$, which could
be just $\eta_{2}$ if the lowest common ancestor $u_{K}$ of $\eta_{1}$
and $\eta_{2}$ were $\eta_{2}$'s mother.

Notice that we can write $F_{u_{1}}^{\eta_{2}}=F_{u_{K+1}}^{\eta_{2}}F_{u_{1}}^{u_{K+1}}$
by definition. In Case 4, $F_{u_{K+1}}^{\eta_{2}}$ cannot be dependent
on $\theta_{\eta_{1}}$ as $u_{K+1}$ is not a ancestor of $\eta_{2}$;
while in Case 1 this is not true, as $u_{K+1}=\eta_{1}$ always hold
in Case 1 by definition. Therefore by product rule of calculus we
have
\begin{align}
\frac{\partial F_{u_{1}}^{\eta_{2}}}{\partial\theta_{\eta_{1}}} & =1_{\textrm{Ch}^{*}\left(\eta_{2}\right)}\left(\eta_{1}\right)\left(\frac{\partial F_{\eta_{1}}^{\eta_{2}}}{\partial\theta_{\eta_{1}}}\right)F_{u_{1}}^{\eta_{1}}+F_{u_{K+1}}^{\eta_{2}}\frac{\partial F_{u_{1}}^{u_{K+1}}}{\partial\theta_{\eta_{1}}}\label{eq:dfdanything}
\end{align}
where $1_{\textrm{Ch}^{*}\left(\eta_{2}\right)}\left(\eta_{1}\right)$
is one in Case 1 and zero in Case 4. We can expand $\frac{\partial F_{u_{1}}^{u_{K+1}}}{\partial\theta_{\eta_{1}}}$
using the facts that
\begin{align}
\frac{\partial H_{u_{i}}}{\partial\theta_{u_{j}}} & =\frac{\partial}{\partial\theta_{u_{j}}}\left[I-\Lambda_{u_{i}}\sum_{\eta\in\children\left(u_{i}\right)}\Omega_{\eta}\right]\nonumber \\
 & =-\left(\frac{\partial\Lambda_{u_{i}}}{\partial\theta_{u_{j}}}\right)\sum_{\eta\in\children\left(u_{i}\right)}\Omega_{\eta}-\Lambda_{u_{i}}\frac{\partial}{\partial\theta_{u_{j}}}\sum_{\eta\in\children\left(u_{i}\right)}\Omega_{\eta}\nonumber \\
 & =\Lambda_{u_{i}}\frac{\partial\Omega_{u_{i+1}}}{\partial\theta_{u_{j}}}\Lambda_{u_{i}}\sum_{\eta\in\children\left(u_{i}\right)}\Omega_{\eta}-\Lambda_{u_{i}}\frac{\partial\Omega_{u_{i+1}}}{\partial\theta_{u_{j}}}\nonumber \\
 & =\Lambda_{u_{i}}F_{u_{i+1}}^{u_{j}T}\frac{\partial\Omega_{u_{j}}}{\partial\theta_{u_{j}}}F_{u_{i+1}}^{u_{j}}\left(\Lambda_{u_{i}}\sum_{\eta\in\children\left(u_{i}\right)}\Omega_{\eta}-I\right)\nonumber \\
 & =-\Lambda_{u_{i}}F_{u_{i+1}}^{u_{j}T}\frac{\partial\Omega_{u_{j}}}{\partial\theta_{u_{j}}}F_{u_{i+1}}^{u_{j}}H_{u_{i}}\label{eq:dH}
\end{align}
as follows
\begin{align*}
\frac{\partial F_{u_{1}}^{u_{K+1}}}{\partial\theta_{\eta_{1}}} & =\frac{\partial}{\partial\theta_{\eta_{1}}}\left(H_{u_{K}}\Phi_{u_{K}}H_{u_{K-1}}\Phi_{u_{K-1}}\cdots H_{u_{1}}\Phi_{u_{1}}\right)\\
 & =\sum_{1\le k\le K}F_{u_{k+1}}^{u_{K+1}}\frac{\partial H_{u_{k}}}{\partial\theta_{\eta_{1}}}\Phi_{u_{k}}F_{u_{1}}^{u_{k}}\\
 & =-\sum_{1\le k\le K}F_{u_{k+1}}^{u_{K+1}}\Lambda_{u_{k}}F_{u_{k+1}}^{\eta_{1}T}\frac{\partial\Omega_{\eta_{1}}}{\partial\theta_{\eta_{1}}}F_{u_{k+1}}^{u_{j}}H_{u_{k}}\Phi_{u_{k}}F_{u_{1}}^{u_{k}}\\
 & =-\left(\sum_{1\le k\le K}F_{u_{k+1}}^{u_{K+1}}\Lambda_{u_{k}}F_{u_{k+1}}^{\eta_{1}T}\right)\frac{\partial\Omega_{\eta_{1}}}{\partial\theta_{\eta_{1}}}F_{u_{1}}^{\eta_{1}}.
\end{align*}
Plugging these back into \eqref{dfdanything}, we will have
\begin{align}
\frac{\partial F_{u_{1}}^{\eta_{2}}}{\partial\theta_{\eta_{1}}} & =1_{\textrm{Ch}^{*}\left(\eta_{2}\right)}\left(\eta_{1}\right)\left(\frac{\partial F_{\eta_{1}}^{\eta_{2}}}{\partial\theta_{\eta_{1}}}\right)F_{u_{1}}^{\eta_{1}}-\left(\sum_{1\le k\le K}F_{u_{k+1}}^{\eta_{2}}\Lambda_{u_{k}}F_{u_{k+1}}^{\eta_{1}T}\right)\frac{\partial\Omega_{\eta_{1}}}{\partial\theta_{\eta_{1}}}F_{u_{1}}^{\eta_{1}};\label{eq:dfdtu1}
\end{align}
therefore (\ref{eq:dfdtheta}) has been proven.

In addition because during the above derivation we have never assumed
$u_{1}$ to be actually among one of the direct children of the root
node, nothing would change had we replace $u_{1}$ with $u_{i}$ in
\eqref{dfdtu1} and amend the lower bound of summation to $i$ accordingly.
This will help us derive the formulae for $\detaone{g_{u_{1}}^{\eta_{2}}}$,
which we will discuss immediately below.

Consider differentiating \eqref{gdef} term by term to obtain $\detaone{g_{u_{1}}^{\eta_{2}}}$.
Regardless of whether we are in Case 1 or Case 4, by definition (\eqref{gdef}),
the vector $g_{u_{1}}^{\eta_{2}}$ is the sum of $F_{u_{k+1}}^{\eta_{2}}a_{u_{k}}$
along the lineage leading from $u_{1}$ to $\eta_{2}$. But as before,
we know that $F_{u_{k+1}}^{\eta_{2}}$ does not depend on $\theta_{\eta_{1}}$
unless $k\le K$ and similarly $a_{u_{k}}$ does not depend on $\theta_{\eta_{1}}$
unless $k\le K+1$ (in fact, $\frac{\partial a_{u_{k}}}{\partial\theta_{\eta_{1}}}$
also vanishes when $k=K+1$ in Case 4 although not in Case 1); this
means that we only need to consider part of the sum $\sum_{1\le k\le K+1}F_{u_{k+1}}^{\eta_{2}}a_{u_{k}}$
when differentiating, as the differentials of all other terms down
the lineage toward $\eta_{2}$ would vanish. These allow us to write
\begin{align}
\frac{\partial g_{u_{1}}^{\eta_{2}}}{\partial\theta_{\eta_{1}}} & =\sum_{1\le k\le K}\frac{\partial F_{u_{k+1}}^{\eta_{2}}}{\partial\theta_{\eta_{1}}}a_{u_{k}}+\sum_{1\le k\le K+1}F_{u_{k+1}}^{\eta_{2}}\frac{\partial a_{u_{k}}}{\partial\theta_{\eta_{1}}}\nonumber \\
 & =\sum_{1\le k\le K-1}\frac{\partial F_{u_{k+1}}^{\eta_{2}}}{\partial\theta_{\eta_{1}}}a_{u_{k}}+1_{\textrm{Ch}^{*}\left(\eta_{2}\right)}\left(\eta_{1}\right)\frac{\partial F_{\eta_{1}}^{\eta_{2}}}{\partial\theta_{\eta_{1}}}a_{u_{K}}+\sum_{1\le k\le K}F_{u_{k+1}}^{\eta_{2}}\frac{\partial a_{u_{k}}}{\partial\theta_{\eta_{1}}}+1_{\textrm{Ch}^{*}\left(\eta_{2}\right)}\left(\eta_{1}\right)F_{u_{K+2}}^{\eta_{2}}\frac{\partial a_{\eta_{1}}}{\partial\theta_{\eta_{1}}}\nonumber \\
 & =1_{\textrm{Ch}^{*}\left(\eta_{2}\right)}\left(\eta_{1}\right)\left[\frac{\partial F_{\eta_{1}}^{\eta_{2}}}{\partial\theta_{\eta_{1}}}a_{\parent\left(\eta_{1}\right)}+F_{u_{K+2}}^{\eta_{2}}\frac{\partial a_{\eta_{1}}}{\partial\theta_{\eta_{1}}}\right]\\
 & \quad\quad+\sum_{1\le k\le K-1}\left[1_{\textrm{Ch}^{*}\left(\eta_{2}\right)}\left(\eta_{1}\right)\frac{\partial F_{\eta_{1}}^{\eta_{2}}}{\partial\theta_{\eta_{1}}}F_{u_{k+1}}^{\eta_{1}}-\left(\sum_{k+1\le\ell\le K}F_{u_{\ell+1}}^{\eta_{2}}\Lambda_{u_{\ell}}F_{u_{\ell+1}}^{\eta_{1}T}\right)\frac{\partial\Omega_{\eta_{1}}}{\partial\theta_{\eta_{1}}}F_{u_{k+1}}^{\eta_{1}}\right]a_{u_{k}}\label{eq:dgdttmp01-1}\\
 & \quad\quad+\sum_{1\le k\le K}F_{u_{k+1}}^{\eta_{2}}\frac{\partial a_{u_{k}}}{\partial\theta_{\eta_{1}}}.
\end{align}
The only thing remaining that has not been solved for is $F_{u_{k+1}}^{\eta_{2}}\frac{\partial a_{u_{k}}}{\partial\theta_{\eta_{1}}}$
in the last term. Because $u_{k}$ is never the same node as $\eta_{1}$
in Case 1 and 4 by definition, nor can it be one of the latter's ancestors,
by plugging in \eqref{dLambdadany-1}, (\ref{eq:dH}), and (\ref{eq:dchngamma})
into \eqref{adef} we can obtain the expression for $F_{u_{k+1}}^{\eta_{2}}\frac{\partial a_{u_{k}}}{\partial\theta_{\eta_{1}}}$
as follows.
\begin{align*}
F_{u_{k+1}}^{\eta_{2}}\frac{\partial a_{u_{k}}}{\partial\theta_{\eta_{1}}} & =F_{u_{k+1}}^{\eta_{2}}\left[\frac{\partial H_{u_{k}}}{\partial\theta_{\eta_{1}}}w_{u_{k}}+\frac{\partial\Lambda_{u_{k}}}{\partial\theta_{\eta_{1}}}\sum_{\iota\in\children\left(u_{k}\right)}\gamma_{\iota}+\Lambda_{u_{k}}\frac{\partial}{\partial\theta_{\eta_{1}}}\sum_{\iota\in\children\left(u_{k}\right)}\gamma_{\iota}\right]\\
 & =-F_{u_{k+1}}^{\eta_{2}}\left[\begin{aligned}\Lambda_{u_{k}}F_{u_{k+1}}^{\eta_{1}T}\frac{\partial\Omega_{\eta}}{\partial\theta_{\eta_{1}}}F_{u_{k+1}}^{\eta_{1}}H_{u_{k}}w_{u_{k}}+\Lambda_{u_{k}}F_{u_{k+1}}^{u_{n}T}\frac{\partial\Omega_{u_{n}}}{\partial\theta_{u_{n}}}F_{u_{k+1}}^{u_{n}}\Lambda_{u_{k}}\sum_{\iota\in\children\left(u_{k}\right)}\gamma_{\iota}\quad\quad\quad\\
+\Lambda_{u_{k}}F_{u_{k+1}}^{\eta_{1}T}\left(\frac{\partial\gamma_{\eta_{1}}}{\partial\theta_{\eta_{1}}}-\frac{\partial\Omega_{\eta_{1}}}{\partial\theta_{\eta_{1}}}g_{u_{k+1}}^{\eta}\right)
\end{aligned}
\right]\\
 & =-F_{u_{k+1}}^{\eta_{2}}\Lambda_{u_{k}}F_{u_{k+1}}^{\eta_{1}T}\left\{ \frac{\partial\Omega_{\eta}}{\partial\theta_{\eta_{1}}}\left[F_{u_{k+1}}^{\eta_{1}}\left(H_{u_{k}}w_{u_{k}}+\Lambda_{u_{k}}\sum_{\iota\in\children\left(u_{k}\right)}\gamma_{\iota}\right)+g_{u_{k+1}}^{\eta_{1}}\right]-\frac{\partial\gamma_{\eta_{1}}}{\partial\theta_{\eta_{1}}}\right\} \\
 & =-F_{u_{k+1}}^{\eta_{2}}\Lambda_{u_{k}}F_{u_{k+1}}^{\eta_{1}T}\left[\frac{\partial\Omega_{\eta}}{\partial\theta_{\eta_{1}}}\left(F_{u_{k+1}}^{\eta_{1}}a_{u_{k}}+g_{u_{k+1}}^{\eta_{1}}\right)-\frac{\partial\gamma_{\eta_{1}}}{\partial\theta_{\eta_{1}}}\right]\\
 & =F_{u_{k+1}}^{\eta_{2}}\Lambda_{u_{k}}F_{u_{k+1}}^{\eta_{1}T}\left[\frac{\partial\gamma_{\eta_{1}}}{\partial\theta_{\eta_{1}}}-\frac{\partial\Omega_{\eta}}{\partial\theta_{\eta_{1}}}g_{u_{k}}^{\eta_{1}}\right].
\end{align*}
Plugging this back into \eqref{dgdttmp01-1} immediately gives \eqref{gdef}.

Now we consider $\frac{\partial K_{u_{1}}^{\eta_{2}}}{\partial\theta_{\eta_{1}}}$,
i.e., \eqref{dkdtheta}. Similar to the situation of $\frac{\partial g_{u_{1}}^{\eta_{2}}}{\partial\theta_{\eta_{1}}}$,
the matrix $K_{u_{1}}^{\eta_{2}}$ can be written as a sum along a
lineage. If we differentiate directly from the definition using the
product rule of differentation, $\frac{\partial}{\partial\theta_{\eta_{1}}}F_{u_{k+1}}^{\eta_{2}}$
and $\frac{\partial\Lambda_{u_{k}}}{\partial\theta_{\eta_{1}}}$ arises.
Let
\[
K^{*}=\begin{cases}
K+1 & \textrm{if case (1);}\\
K & \textrm{if case (4).}
\end{cases}
\]
 First, notice that for $k\ge K^{*}$ we have $\frac{\partial}{\partial\theta_{\eta_{1}}}F_{u_{k+1}}^{\eta_{2}}=0$
and on the other hand for $1\le k\le K^{*}-1$, according to the discussion
around \eqref{dfdtu1}, it follows that
\begin{align*}
\left(\frac{\partial}{\partial\theta_{\eta_{1}}}F_{u_{k+1}}^{\eta_{2}}\right)\Lambda_{u_{k}}F_{u_{k+1}}^{\eta_{1}T} & =\left[\begin{aligned}1_{\textrm{Ch}^{*}\left(\eta_{2}\right)}\left(\eta_{1}\right)\left(\frac{\partial F_{\eta_{1}}^{\eta_{2}}}{\partial\theta_{\eta_{1}}}\right)F_{u_{k+1}}^{\eta_{1}}\quad\quad\quad\quad\quad\quad\quad\quad\\
\quad\quad-\left(\sum_{k+1\le\ell\le K}F_{u_{\ell+1}}^{\eta_{2}}\Lambda_{u_{\ell}}F_{u_{\ell+1}}^{\eta_{1}T}\right)\frac{\partial\Omega_{\eta_{1}}}{\partial\theta_{\eta_{1}}}F_{u_{k+1}}^{\eta_{1}}
\end{aligned}
\right]\Lambda_{u_{k}}F_{u_{k+1}}^{\eta_{1}T}\\
 & =A_{k}
\end{align*}
and anologously $F_{u_{k+1}}^{\eta_{2}}\Lambda_{u_{k}}\left(\frac{\partial}{\partial\theta_{\eta_{1}}}F_{u_{k+1}}^{\eta_{1}T}\right)=A_{k}^{T}$
for $1\le k\le K^{*}-1$. On the other hand, by definition $\frac{\partial\Lambda_{u_{k}}}{\partial\theta_{\eta_{1}}}=0$
for $k\ge K^{*}+1$ and for $k\le K^{*}$, according to \eqref{dLambdadany-1}
and \eqref{dchnOmega}, we have
\begin{align*}
F_{u_{k+1}}^{\eta_{2}}\frac{\partial\Lambda_{u_{k}}}{\partial\theta_{\eta_{1}}}F_{u_{k+1}}^{\eta_{1}T} & =-F_{u_{k+1}}^{\eta_{1}}\Lambda_{u_{k}}\frac{\partial\Omega_{u_{k+1}}}{\partial\theta_{\eta_{1}}}\Lambda_{u_{k}}F_{u_{k+1}}^{\eta_{1}T}\\
 & =-F_{u_{k+1}}^{\eta_{1}}\Lambda_{u_{k}}F_{u_{k+1}}^{\eta_{1}T}\frac{\partial\Omega_{\eta_{1}}}{\partial\theta_{\eta_{1}}}F_{u_{k+1}}^{\eta_{1}}\Lambda_{u_{k-1}}F_{u_{k+1}}^{\eta_{1}T}\\
 & =B_{k}.
\end{align*}
 Using these, we arrive at 
\begin{align*}
\frac{\partial K_{u_{1}}^{\eta_{2}}}{\partial\theta_{\eta_{1}}} & =\sum_{1\le k\le K+1}\frac{\partial}{\partial\theta_{\eta_{1}}}\left(F_{u_{k+1}}^{\eta_{2}}\Lambda_{u_{k}}F_{u_{k+1}}^{\eta_{1}T}\right)\\
 & =\sum_{1\le k\le K^{*}}\left(\frac{\partial}{\partial\theta_{\eta_{1}}}F_{u_{k+1}}^{\eta_{2}}\right)\Lambda_{u_{k}}F_{u_{k+1}}^{\eta_{1}T}+\sum_{1\le k\le K^{*}}F_{u_{k+1}}^{\eta_{2}}\Lambda_{u_{k}}\left(\frac{\partial}{\partial\theta_{\eta_{1}}}F_{u_{k+1}}^{\eta_{1}T}\right)\\
 & \quad\quad\quad+\sum_{1\le k\le K+1}F_{u_{k+1}}^{\eta_{2}}\left(\frac{\partial}{\partial\theta_{\eta_{1}}}\Lambda_{u_{k}}\right)F_{u_{k+1}}^{\eta_{1}T}\\
 & =\sum_{1\le k\le K^{*}-1}\left(A_{k}+A_{k}^{T}\right)+\sum_{1\le k\le K^{*}}B_{k}.
\end{align*}
The above expression is just another way of writing the required \eqref{dkdtheta}
because $K^{*}$ here simply means adding one to the upper bounds
of the sums, which is expressed in \eqref{dkdtheta} using a indicator
function.
\end{proof}

\section*{Proof of Proposition 9}

In this section we will make heavy use of directional derivatives
of a multivariate function. We denote the directional derivative with
respect to the direction $V$ as $\D_{V}f(X)$ when it is clear from
the context which variable we are differentiating. When both $X$
and $V$ are $p\times q$ matrix, we treat them as simple vectors
of $pq$ numeric values, so that $\D_{U^{ij}}f(X)=\frac{\partial f\left(x\right)}{\partial X_{ij}}$
(recall that in the manuscript we defined $U^{ij}$=$e_{i}e_{j}^{T}$,
i.e., the matrix whose elements are all zero except that its $\left(i,j\right)$
entry is one). Before we start, notice that we will use the following
identity multiple times (Najfeld and Havel 1995; see manuscript for
full reference):

\begin{equation}
\D_{V}e^{tA}=\int_{0}^{t}e^{\left(t-\tau\right)A}Ve^{\tau A}\d\tau.\label{eq:nhdir}
\end{equation}

First, we derive the expression for $\frac{\partial V_{\eta}}{\partial H_{ij}}.$
If we expand the $V_{\eta}$ into its matrix integral form according
to definition and directly differentiate it, we would arrive at this:

\begin{align}
\frac{\partial V_{\eta}}{\partial H_{ij}} & =\frac{\partial}{\partial H_{ij}}\int_{0}^{t}e^{-vH}\Sigma e^{-vH^{T}}\d v\nonumber \\
 & =\int_{0}^{t}\left(\frac{\partial}{\partial H_{ij}}e^{-vH}\right)\Sigma e^{-vH^{T}}+e^{-vH}\Sigma\left(\frac{\partial}{\partial H_{ij}}e^{-vH}\right)^{T}\d v.\label{eq:p901}
\end{align}
If we treat the partial derivative $\frac{\partial}{\partial H_{ij}}e^{-vH}$
above as an directional derivative with respect to $U^{ij}$ and apply
\eqref{nhdir} we get 
\begin{equation}
\frac{\partial}{\partial H_{ij}}e^{-vH}=-e^{-vH}\int_{0}^{v}e^{\tau H}U^{ij}e^{-\tau H}\d\tau.\label{eq:devdhij01}
\end{equation}
Letting $J_{ij}\left(v\right)=\int_{0}^{v}e^{\tau H}U^{ij}e^{-\tau H}\d\tau$,
i.e., the integral above, we can rearrange \eqref{p901} as
\begin{align}
\frac{\partial V_{\eta}}{\partial H_{ij}} & =\int_{0}^{t}-e^{-vH}J_{ij}\left(v\right)\Sigma e^{-vH^{T}}\d v+\left(\int_{0}^{t}-e^{-vH}J_{ij}\left(v\right)\Sigma e^{-vH^{T}}\d v\right)^{T}\\
 & =-\mathcal{S}\left(\int_{0}^{t}e^{-vH}J_{ij}\left(v\right)\Sigma e^{-vH^{T}}\d v\right),\label{eq:p9011-2}
\end{align}
where the integral in the RHS can be integrated by expressing the
integrand as a Hadamard product involving $H$'s eigen-decomposition
$H=P\Lambda P^{-1}$, as follows
\begin{align}
\int_{0}^{t}e^{-vH}J_{ij}\left(v\right)\Sigma e^{-vH^{T}}\d v & =P\int_{0}^{t}e^{-v\Lambda}P^{-1}J_{ij}\left(v\right)\Sigma P^{-T}e^{-v\Lambda}\d vP^{T}\nonumber \\
 & =P\int_{0}^{t}\left(P^{-1}J_{ij}\left(v\right)\Sigma P^{-T}\right)\odot\left[e^{-\left(\lambda_{k}+\lambda_{\ell}\right)v}\right]_{k\ell}\d vP^{T}.\label{eq:p902}
\end{align}

Since this is a double integral, we first write $J_{ij}\left(v\right)$
explicitly. This is easy $J_{ij}\left(v\right)$ using the same Hadamard
product trick, which gives
\begin{align*}
J_{ij}\left(v\right) & =P\int_{0}^{v}\left(P^{-1}U^{ij}P\right)\odot\left[e^{\left(\lambda_{k}-\lambda_{\ell}\right)\tau}\right]_{k\ell}\d\tau P^{-1}\\
 & =P\left\{ \left(P^{-1}U^{ij}P\right)\odot\left[\int_{0}^{v}e^{\left(\lambda_{k}-\lambda_{\ell}\right)\tau}\d\tau\right]_{k\ell}\right\} P^{-1}\\
 & =P\left\{ \overline{U^{ij}}\odot\left[I_{0}\left(\lambda_{k}-\lambda_{\ell},v\right)\right]_{k\ell}\right\} P^{-1},
\end{align*}
which, if plugged back into \eqref{p902}, its RHS become
\[
P\int_{0}^{t}\left\{ \left\{ \overline{U^{ij}}\odot\left[I_{0}\left(\lambda_{k}-\lambda_{\ell},v\right)\right]_{k\ell}\right\} \overline{\Sigma}\right\} \odot\left[e^{-\left(\lambda_{k}+\lambda_{\ell}\right)v}\right]_{k\ell}\d vP^{T}.
\]
The last step is to evaluate the matrix integral in the above elementwise.
The $\left(m,n\right)$ element of this matrix integral is
\[
\sum_{k}\overline{U^{ij}}_{mk}\overline{\Sigma}_{kn}\int_{0}^{t}I_{0}\left(\lambda_{m}-\lambda_{k},v\right)e^{-\left(\lambda_{m}+\lambda_{n}\right)v}\d v=\sum_{p}\overline{U^{ij}}_{mk}\overline{\Sigma}_{kn}K_{0}\left(\lambda_{m}-\lambda_{k},-\lambda_{m}-\lambda_{n},t\right)=M_{mn}.
\]
Therefore, plugging this back into \eqref{p902} we have 
\[
\int_{0}^{t}e^{-vH}J_{ij}\left(v\right)\Sigma e^{-vH^{T}}\d v=PMP^{T};
\]
so we can conclude from \eqref{p9011-2} that $\frac{\partial V_{\eta}}{\partial H_{ij}}=-\mathcal{S}\left(PMP^{T}\right)$.

On the other hand, to get the expression of $\frac{\partial V_{\eta}}{\partial L_{ij}}$,
notice that
\[
\frac{\partial V_{\eta}}{\partial L_{ij}}=\int_{0}^{t}e^{-Hv}\frac{\partial LL^{T}}{\partial L_{ij}}e^{-H^{T}v},
\]
and 
\begin{align*}
\frac{\partial LL^{T}}{\partial L_{ij}} & =\frac{\partial L}{\partial L_{ij}}L^{T}+L\frac{\partial L^{T}}{\partial L_{ij}}\\
 & =\mathcal{S}\left(\frac{\partial L}{\partial L_{ij}}L^{T}\right)\\
 & =\mathcal{S}\left(U^{ij}L^{T}\right),
\end{align*}
therefore \eqref{dvdlij} follows.

The derivative of $\frac{\partial w_{\eta}}{\partial H_{ij}}$ and
$\frac{\partial w_{\eta}}{\partial\mu_{i}}$ in \eqref{dwdhij} and
\eqref{dwdmui} can be seen trivially from \eqref{wOU}.

Finally $\frac{\partial\Phi_{\eta}}{\partial H_{ij}}$ (\eqref{dphidhij})
is obtained by using \eqref{nhdir} again because when $\Phi_{\eta}$
is seen as a function of $\frac{\partial\Phi_{\eta}}{\partial H_{ij}}=\D_{U^{ij}}e^{-Ht_{\eta}}$.
So
\begin{align*}
\frac{\partial\Phi_{\eta}}{\partial H_{ij}} & =\D_{U^{ij}}e^{-Ht_{\eta}}\\
 & =-\int_{0}^{t_{\eta}}e^{\left(\tau-t_{\eta}\right)H}U^{ij}e^{-\tau H}\d\tau\\
 & =-P\int_{0}^{t_{\eta}}\left(P^{-1}U^{ij}P\right)\odot\left[e^{\left(\tau-t_{\eta}\right)\lambda_{i}-\tau\lambda_{j}}\right]_{ij}\d\tau P^{-1}\\
 & =-P\left\{ \left(P^{-1}U^{ij}P\right)\odot\left[e^{-t_{\eta}\lambda_{i}}\int_{0}^{t_{\eta}}e^{\left(\lambda_{i}-\lambda_{j}\right)\tau}\d\tau\right]_{ij}\right\} P^{-1}\\
 & =-P\left\{ \overline{U^{ij}}\odot\left[e^{-t_{\eta}\lambda_{i}}I_{0}\left(\lambda_{i}-\lambda_{j},t_{\eta}\right)\right]_{ij}\right\} P^{-1},
\end{align*}
as required. 

\section*{Proof of Proposition 10}

First, we introduce a simple ``change-of-basis'' lemma that expresses
a set of double partial derivatives in double directional derivatives
in directions that are easier to work with. Same as before, when we
talk about basis and directional derivatives, we operate in the vector
space of $n\times n$ matrices whose addition and multiplication operators
are simple matrix addition and scalar-matrix multiplication.

We denote such matrix basis by $\left\{ U_{\alpha}\right\} _{\alpha\in\left\{ 1\cdots n\right\} ^{2}}$,
where the square of a set refers to the Cartesian product of the set
with itself. or simply $\left\{ U_{\alpha}\right\} $ when the index
range is clear from the context.
\begin{lem}
\label{lem:basishess}If $\left\{ U_{\alpha}\right\} _{\alpha\in\left\{ 1\cdots n\right\} ^{2}}$
is a basis of the space of $n$-dimensional square matrices, $P$
is an $n\times n$ invertible matrix, and $G_{\alpha}=PU_{\alpha}P^{-1}$,
then $\left\{ G_{\alpha}\right\} _{\alpha\in\left\{ 1\cdots n\right\} ^{2}}$
is also a basis. Furthermore, if $\left\{ U_{\alpha}\right\} $ is
the standard basis, i.e., the set of $U^{ij}$, then the directional
derivative operator $\D_{U_{\alpha}}$ can be expressed as
\[
\D_{U_{\alpha}}=\sum_{\beta\in\left\{ 1\cdots n\right\} ^{2}}\left(P^{-1}U_{\alpha}P\right)_{\beta}\D_{G_{\beta}}
\]
for all $\alpha$.
\end{lem}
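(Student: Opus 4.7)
The plan is to dispatch both claims with a short linearity argument, with essentially no obstacles.

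First I would verify that $\{G_{\alpha}\}$ is a basis by observing that the similarity map $\Phi\colon M\mapsto PMP^{-1}$ is a linear automorphism of the space of $n\times n$ matrices (its inverse being $M\mapsto P^{-1}MP$, which is well-defined since $P$ is invertible). Since linear automorphisms carry bases to bases, $\{G_{\alpha}\}=\{\Phi(U_{\alpha})\}$ is a basis whenever $\{U_{\alpha}\}$ is.

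Next, for the decomposition of $\D_{U_{\alpha}}$, the key observation is that when $\{U_{\beta}\}$ is the standard matrix-unit basis $\{U^{ij}\}$, the coefficients of any matrix $M$ in this basis are literally its entries; that is, $M=\sum_{\beta}M_{\beta}U_{\beta}$ with $M_{(i,j)}=M_{ij}$. Applying this to $M=P^{-1}U_{\alpha}P$ gives
\begin{equation*}
P^{-1}U_{\alpha}P=\sum_{\beta\in\{1\cdots n\}^{2}}\bigl(P^{-1}U_{\alpha}P\bigr)_{\beta}\,U_{\beta},
\end{equation*}
and conjugating both sides by $P$ yields
\begin{equation*}
U_{\alpha}=\sum_{\beta\in\{1\cdots n\}^{2}}\bigl(P^{-1}U_{\alpha}P\bigr)_{\beta}\,PU_{\beta}P^{-1}=\sum_{\beta\in\{1\cdots n\}^{2}}\bigl(P^{-1}U_{\alpha}P\bigr)_{\beta}\,G_{\beta}.
\end{equation*}

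The final step is to invoke linearity of the directional derivative in the direction argument: for any smooth $f$ and any scalars $c_{\beta}$, $\D_{\sum_{\beta}c_{\beta}V_{\beta}}f(X)=\sum_{\beta}c_{\beta}\D_{V_{\beta}}f(X)$ (this is just the chain rule applied to $t\mapsto f(X+t\sum_{\beta}c_{\beta}V_{\beta})$). Substituting the expansion of $U_{\alpha}$ obtained above into $\D_{U_{\alpha}}$ then gives the claimed identity. The main ``obstacle'' is essentially nil; the only point worth isolating is the identification of basis-expansion coefficients with matrix entries in the standard basis, which is what lets one write the coefficient as the entry $(P^{-1}U_{\alpha}P)_{\beta}$ without any intermediate computation.
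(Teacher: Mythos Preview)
Your proof is correct and follows the same overall strategy as the paper: establish that $\{G_\alpha\}$ is a basis, expand $U_\alpha$ in that basis, and then invoke linearity of the directional derivative in its direction argument. The one noteworthy difference is in how the expansion coefficients are obtained. The paper sets up the change-of-basis problem via vectorisation and Kronecker products, writing the change-of-basis matrix as $M=(P^{-T}\otimes P)[\vecc U_{11}\mid\cdots\mid\vecc U_{nn}]$ and inverting it to solve for $\kappa_{\alpha\cdot}$. Your route is more direct: you expand $P^{-1}U_\alpha P$ in the standard basis (where coefficients are just matrix entries) and conjugate by $P$ to read off $U_\alpha=\sum_\beta (P^{-1}U_\alpha P)_\beta\,G_\beta$ immediately. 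This bypasses the Kronecker machinery entirely and is arguably the cleaner argument; the paper's version, on the other hand, makes the linear-algebraic structure of the change of basis more explicit, which may be pedagogically useful if the reader is less comfortable with similarity transforms.
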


\begin{proof}
To see that $\left\{ G_{\alpha}\right\} $ is indeed linearly independent,
let $\left\{ c_{\alpha}\right\} _{\alpha\in\left\{ 1\cdots n\right\} ^{2}}$
be some scalars and notice that if $\sum_{\alpha\in\left\{ 1\cdots n\right\} ^{2}}c_{\alpha}G_{\alpha}=0$
then 
\[
\sum_{\alpha\in\left\{ 1\cdots n\right\} ^{2}}c_{\alpha}U_{\alpha}=\sum_{\alpha\in\left\{ 1\cdots n\right\} ^{2}}c_{\alpha}P^{-1}G_{\alpha}P=P^{-1}\left(\sum_{\alpha\in\left\{ 1\cdots n\right\} ^{2}}c_{\alpha}G_{\alpha}\right)P=0;
\]
which implies $c_{\alpha}=0$ for all $\alpha$ because $\left\{ U_{\alpha}\right\} $
is also linearly independent.

To express $\D_{U_{\alpha}}$ using $\D_{G_{\alpha}}$, notice that
if we can express each $U_{\alpha}$ as a linear combination of $G_{\beta}$,
say, 
\begin{equation}
U_{\alpha}=\sum_{\beta\in\left\{ 1\cdots n\right\} ^{2}}\kappa_{\alpha\beta}G_{\beta},\label{eq:chgbasis001}
\end{equation}
then $\D_{U_{\alpha}}f\left(x\right)=\sum_{\beta\in\left\{ 1\cdots n\right\} ^{2}}\kappa_{\alpha\beta}\D_{G_{\alpha}}f\left(x\right)$
because
\[
\vecc\D_{U_{\alpha}}f\left(x\right)=\left[\vecc\nabla f\left(x\right)\right]^{T}\vecc U_{\alpha}=\sum_{\beta}\kappa_{\beta}\left\{ \left[\vecc\nabla f\left(x\right)\right]^{T}\vecc G_{\beta}\right\} =\sum_{\beta}\kappa_{\beta}\vecc\D_{G_{\alpha}}f\left(x\right).
\]
Therefore all we need is to solve for $\kappa_{\alpha\beta}$.

If we fix $\alpha$ and consider \eqref{chgbasis001} in its vectorised
form 
\begin{equation}
\vecc U_{\alpha}=\sum_{\beta}\kappa_{\alpha\beta}\vecc G_{\beta},\label{eq:chgbasis002}
\end{equation}
we notice that $\kappa_{\alpha\cdot}$, seen as a vector of length
$n^{2}$, must be of form $\kappa_{\alpha\cdot}=M^{-1}\vecc U_{\alpha}$,
where $M$ is the $n^{2}\times n^{2}$ change-of-basis matrix that
changes from the standard basis $\left\{ U^{ij}\right\} _{1\le i,j\le n}$
to $\left\{ G_{\beta}\right\} $; so according standard linear algebra,
\[
M=\left[\vecc G_{11}|\vecc G_{21}|\cdots|\vecc G_{nn}\right]=\left(P^{T}\otimes P^{-1}\right)\left[\vecc U_{11}|\vecc U_{21}|\cdots|\vecc U_{nn}\right].
\]
Therefore 
\[
\kappa_{\alpha\cdot}=M^{-1}\vecc U_{\alpha}=\left[\vecc U_{11}|\vecc U_{21}|\cdots|\vecc U_{nn}\right]^{-1}\left(P^{-T}\otimes P\right)\vecc U_{\alpha}=\vecc\left(PU_{\alpha}P^{-1}\right).
\]
Plugging the above back into \eqref{chgbasis002} gives the required
equation.
\end{proof}
With the above lemma, we can start proving the equation for $\frac{\partial^{2}V_{\eta}}{\partial H_{mn}\partial H_{ij}}$,
i.e., \eqref{hvdhdh}, by taking double directional derivatives. First,
$H=P\Lambda P^{-1}$ be the eigen-decomposition of $H$, and $G_{k\ell}=PU^{k\ell}P^{-1}$.
We start by taking the same steps as in \eqref{p901}-(\ref{eq:p9011-2}),
but instead of applying \eqref{nhdir} in the direction of $U^{ij}$
to obtain $\frac{\partial V_{\eta}}{\partial H_{ij}}$, we use $G_{k\ell}$
as our direction to obtain $\D_{G_{k\ell}}V_{\eta}$ (of course, here
$V_{\eta}$ is treated as a function of $H$); this gives us
\begin{equation}
\D_{G_{k\ell}}V_{\eta}=-\mathcal{S}\left(\int_{0}^{t}e^{-vH}J_{k\ell}\left(v\right)\Sigma e^{-vH^{T}}\d v\right)\label{eq:dgklvn}
\end{equation}
where 
\[
J_{k\ell}\left(v\right)=\int_{0}^{v}e^{\tau H}G_{k\ell}e^{-\tau H}\d v.
\]
Thus, diffentiating again in the direction of $G_{\alpha\beta}$,
we get
\begin{align}
\D_{G_{\alpha\beta}}\D_{G_{k\ell}}V_{\eta} & =-\mathcal{S}\left(\D_{G_{\alpha\beta}}\int_{0}^{t}e^{-vH}J_{k\ell}\left(v\right)\Sigma e^{-vH^{T}}\d v\right)\nonumber \\
 & =-\mathcal{S}\left[\int_{0}^{t}\left(\D_{G_{\alpha\beta}}e^{-vH}\right)J_{k\ell}\left(v\right)\Sigma e^{-vH^{T}}+e^{-vH}J_{k\ell}\left(v\right)\Sigma\left(\D_{G_{\alpha\beta}}e^{-vH^{T}}\right)\d v\right]\nonumber \\
 & \quad\quad\quad-\mathcal{S}\left(\int_{0}^{t}e^{-vH}\left(\D_{G_{\alpha\beta}}J_{k\ell}\left(v\right)\right)\Sigma e^{-vH^{T}}\d v\right).\label{eq:sodarnlong}
\end{align}
We simplify the above two terms separately. The first term can be
written as follows after expanding the $\mathcal{S}$ operator and
regrouping the sums.
\[
-\int_{0}^{t}\left(\D_{G_{\alpha\beta}}e^{-vH}\right)\mathcal{S}\left(J_{k\ell}\left(v\right)\Sigma\right)e^{-vH^{T}}+e^{-vH}\mathcal{S}\left(J_{k\ell}\left(v\right)\Sigma\right)\left(\D_{G_{\alpha\beta}}e^{-vH}\right)^{T}\d v.
\]
According to \eqref{nhdir}, we have $\D_{G_{\alpha\beta}}e^{-vH}=-e^{-vH}\int_{0}^{v}e^{\tau H}G_{\alpha\beta}e^{-\tau H}d\tau=-e^{-vH}J_{\alpha\beta}\left(v\right)$;
therefore after plugging into the above and factoring out $e^{-vH}$
and $e^{-vH^{T}}$, the above is further simplified to
\begin{equation}
\int_{0}^{t}e^{-vH}\mathcal{S}\left[J_{\alpha\beta}\left(v\right)\mathcal{S}\left(J_{k\ell}\left(v\right)\Sigma\right)\right]e^{-vH^{T}}\d v.\label{eq:troublesomeintegral}
\end{equation}
Notice that the integrals in $J_{k\ell}$ and $J_{\alpha\beta}$ takes
a simple form
\begin{align}
J_{k\ell}\left(v\right) & =\int_{0}^{v}e^{\tau H}G_{k\ell}e^{-\tau H}\d v\nonumber \\
 & =P\int_{0}^{v}\left(P^{-1}G_{k\ell}P\right)\odot\left[e^{\left(\lambda_{r}-\lambda_{s}\right)\tau}\right]_{rs}\d\tau P^{-1}\nonumber \\
 & =P\left\{ U^{k\ell}\odot\left[I_{0}\left(\lambda_{r}-\lambda_{s},v\right)\right]_{rs}\right\} P^{-1}\nonumber \\
 & =I_{0}\left(\lambda_{k}-\lambda_{\ell},v\right)PU^{k\ell}P^{-1}\label{eq:jklv-1}
\end{align}
where the last step above holds because all but one entry of $U^{k\ell}$
are zero. Therefore plugging this back into \eqref{troublesomeintegral}
allows us to solve the integral as follows. 
\begin{align*}
 & \int_{0}^{t}e^{-vH}\mathcal{S}\left[J_{\alpha\beta}\left(v\right)\mathcal{S}\left(J_{k\ell}\left(v\right)\Sigma\right)\right]e^{-vH^{T}}\d v\\
= & \int_{0}^{t}e^{-vH}I_{0}\left(\lambda_{\alpha}-\lambda_{\beta},v\right)I_{0}\left(\lambda_{k}-\lambda_{\ell},v\right)\mathcal{S}\left[PU^{\alpha\beta}P^{-1}\mathcal{S}\left(PU^{k\ell}P^{-1}\Sigma\right)\right]e^{-vH^{T}}\d v\\
= & P\int_{0}^{t}\left\{ e^{-v\Lambda}I_{0}\left(\lambda_{\alpha}-\lambda_{\beta},v\right)I_{0}\left(\lambda_{k}-\lambda_{\ell},v\right)P^{-1}\mathcal{S}\left[PU^{\alpha\beta}P^{-1}\mathcal{S}\left(PU^{k\ell}P^{-1}\Sigma\right)\right]P^{-T}e^{-v\Lambda}\right\} \d vP^{T}\\
= & P\int_{0}^{t}\left[I_{0}\left(\lambda_{\alpha}-\lambda_{\beta},v\right)I_{0}\left(\lambda_{k}-\lambda_{\ell},v\right)P^{-1}\mathcal{S}\left[PU^{\alpha\beta}P^{-1}\mathcal{S}\left(PU^{k\ell}P^{-1}\Sigma\right)\right]P^{-T}\right]\odot\left[e^{-\left(\lambda_{r}+\lambda_{s}\right)v}\right]_{rs}\d vP^{T}\\
= & P\left\{ \left\{ P^{-1}\mathcal{S}\left[PU^{\alpha\beta}P^{-1}\mathcal{S}\left(PU^{k\ell}P^{-1}\Sigma\right)\right]P^{-T}\right\} \odot\left[\int_{0}^{t}e^{-\left(\lambda_{r}+\lambda_{s}\right)v}I_{0}\left(\lambda_{\alpha}-\lambda_{\beta},v\right)I_{0}\left(\lambda_{k}-\lambda_{\ell},v\right)\d v\right]_{rs}\right\} P^{T}\\
= & P\left\{ \left\{ P^{-1}\mathcal{S}\left[PU^{\alpha\beta}P^{-1}\mathcal{S}\left(PU^{k\ell}P^{-1}\Sigma\right)\right]P^{-T}\right\} \odot\left[Q\left(-\lambda_{r}-\lambda_{s},\lambda_{\alpha}-\lambda_{\beta},\lambda_{k}-\lambda_{\ell},t\right)\right]_{rs}\right\} P^{T};
\end{align*}
or, in fact, an alternative form of the above integral is
\[
P\left\{ \left\{ P^{-1}\mathcal{S}\left[PU^{\alpha\beta}P^{-1}\mathcal{S}\left(PU^{k\ell}P^{-1}\Sigma\right)\right]P^{-T}\right\} \odot\left[\int_{0}^{t}e^{-\left(\lambda_{r}+\lambda_{s}\right)x}\int_{0}^{x}\int_{0}^{x}e^{\left(\lambda_{\alpha}-\lambda_{\beta}\right)y+\left(\lambda_{k}-\lambda_{\ell}\right)z}\d z\d y\d x\right]_{rs}\right\} P^{T}.
\]
Furthermore, the left side of the Hadamard product above simplifies
to
\begin{align*}
P^{-1}\mathcal{S}\left[PU^{\alpha\beta}P^{-1}\mathcal{S}\left(PU^{k\ell}P^{-1}\Sigma\right)\right]P^{-T} & =P^{-1}\mathcal{S}\left[PU^{\alpha\beta}P^{-1}PU^{k\ell}P^{-1}\Sigma+PU^{\alpha\beta}P^{-1}\Sigma P^{-T}U^{\ell k}P^{T}\right]P^{-T}\\
 & =P^{-1}\mathcal{S}\left[PU^{\alpha\beta}U^{k\ell}P^{-1}\Sigma+PU^{\alpha\beta}P^{-1}\Sigma P^{-T}U^{\ell k}P^{T}\right]P^{-T}\\
 & =U^{\alpha\beta}U^{k\ell}P^{-1}\Sigma P^{-T}+U^{\alpha\beta}P^{-1}\Sigma P^{-T}U^{\ell k}\\
 & \quad\quad+P^{-1}\Sigma P^{-T}U^{\ell k}U^{\beta\alpha}+U^{k\ell}P^{-1}\Sigma P^{-T}U^{\beta\alpha}\\
 & =U^{\alpha\beta}U^{k\ell}\overline{\Sigma}+U^{\alpha\beta}\overline{\Sigma}U^{\ell k}+\overline{\Sigma}U^{\ell k}U^{\beta\alpha}+U^{k\ell}\overline{\Sigma}U^{\beta\alpha}\\
 & =\mathcal{S}\left(U^{\alpha\beta}U^{k\ell}\overline{\Sigma}+U^{\alpha\beta}\overline{\Sigma}U^{\ell k}\right).
\end{align*}
Therefore the first term of $\prettyref{eq:sodarnlong}$ is
\begin{equation}
P\left\{ \mathcal{S}\left(U^{\alpha\beta}U^{k\ell}\overline{\Sigma}+U^{\alpha\beta}\overline{\Sigma}U^{\ell k}\right)\odot\left[Q\left(-\lambda_{r}-\lambda_{s},\lambda_{\alpha}-\lambda_{\beta},\lambda_{k}-\lambda_{\ell},t\right)\right]_{rs}\right\} P^{T}.\label{eq:p10nicet1}
\end{equation}
On the other hand, to evaluate the second term of \eqref{sodarnlong}
we obviously need $\D_{G_{\alpha\beta}}J_{k\ell}\left(v\right)$.
This can be found using \eqref{nhdir} again, which gives
\begin{align}
\D_{G_{\alpha\beta}}J_{k\ell}\left(v\right) & =\D_{G_{\alpha\beta}}\int_{0}^{v}e^{\tau H}G_{k\ell}e^{-\tau H}\d\tau\nonumber \\
 & =\int_{0}^{v}\left(\D_{G_{\alpha\beta}}e^{\tau H}\right)G_{k\ell}e^{-\tau H^{T}}+e^{\tau H}G_{k\ell}\left(\D_{G_{\alpha\beta}}e^{-\tau H}\right)\d\tau\nonumber \\
 & =\int_{0}^{v}e^{\tau H}\int_{0}^{\tau}e^{-yH}G_{\alpha\beta}e^{yH}\d yG_{k\ell}e^{-\tau H^{T}}-e^{\tau H}G_{k\ell}e^{-\tau H}\int_{0}^{\tau}e^{yH}G_{\alpha\beta}e^{-yH}\d\tau\nonumber \\
 & =\int_{0}^{v}e^{\tau H}I_{0}\left(\lambda_{\beta}-\lambda_{\alpha},\tau\right)PU^{\alpha\beta}P^{-1}G_{k\ell}e^{-\tau H}-e^{\tau H}G_{k\ell}e^{-\tau H}I_{0}\left(\lambda_{\alpha}-\lambda_{\beta},\tau\right)PU^{\alpha\beta}P^{-1}\d\tau\nonumber \\
 & =\int_{0}^{v}e^{\tau H}I_{0}\left(\lambda_{\beta}-\lambda_{\alpha},\tau\right)PU^{\alpha\beta}U^{k\ell}P^{-1}e^{-\tau H}\d\tau\nonumber \\
 & \quad\quad-\int_{0}^{v}I_{0}\left(\lambda_{\alpha}-\lambda_{\beta},\tau\right)e^{\tau H}PU^{k\ell}P^{-1}e^{-\tau H}\d\tau PU^{\alpha\beta}P^{-1}\nonumber \\
 & =P\left\{ \left(U^{\alpha\beta}U^{k\ell}\right)\odot\left[\int_{0}^{v}I_{0}\left(\lambda_{\beta}-\lambda_{\alpha},\tau\right)e^{\left(\lambda_{r}-\lambda_{s}\right)\tau}\d\tau\right]_{rs}\right\} P^{-1}\nonumber \\
 & \quad\quad-P\left\{ U^{k\ell}\odot\left[\int_{0}^{v}I_{0}\left(\lambda_{\alpha}-\lambda_{\beta},\tau\right)e^{\left(\lambda_{r}-\lambda_{s}\right)\tau}\d\tau\right]_{rs}\right\} U^{\alpha\beta}P^{-1}.\label{eq:dgabdjkl}
\end{align}
The two terms above seems dissimilar, but in fact, they are of the
same form. First, a matrix of form $\left(U^{\alpha\beta}U^{k\ell}\right)\odot X$
contains one and only one element $X_{\alpha\ell}$ at index $\left(\alpha,\ell\right)$
if and only if $\beta=k$; on the other hand, the matrix $\left(U^{k\ell}\odot X\right)U^{\alpha\beta}$
also contains one and only one element $X_{k\beta}$ at index $\left(k,\beta\right)$
if and only if $\alpha=\ell$. Therefore matching the pattern, the
latter can also be written as $\left(U^{k\ell}\odot X\right)U^{\alpha\beta}=\left(U^{k\ell}U^{\alpha\beta}\right)\odot X$.
Also similarly, it is easy to check that for any matrix $T$ of suitable
size we have $\left[\left(U^{\alpha\beta}U^{k\ell}\right)\odot X\right]T=\left(U^{\alpha\beta}U^{k\ell}T\right)\odot\left[X_{\alpha\ell}\right]_{rs}$;
furthermore, if $X$ were a function of either $\lambda_{\beta}$
or $\lambda_{k}$, we can always exchange the two because we know
that $U^{\alpha\beta}U^{k\ell}$ is a non-zero matrix only if $\beta=k$.
Using this, we can write $\D_{G_{\alpha\beta}}J_{k\ell}\left(v\right)\overline{\Sigma}$
in a more symmetric form:

\begin{equation}
\D_{G_{\alpha\beta}}J_{k\ell}\left(v\right)\overline{\Sigma}=P\left\{ \begin{aligned}\left(U^{\alpha\beta}U^{k\ell}\overline{\Sigma}\right)\odot\left[\int_{0}^{v}I_{0}\left(\lambda_{k}-\lambda_{\alpha},\tau\right)e^{\left(\lambda_{\alpha}-\lambda_{\ell}\right)\tau}\d\tau\right]_{rs}\quad\quad\quad\\
-\left(U^{k\ell}U^{\alpha\beta}\overline{\Sigma}\right)\odot\left[\int_{0}^{v}I_{0}\left(\lambda_{\ell}-\lambda_{\beta},\tau\right)e^{\left(\lambda_{k}-\lambda_{\beta}\right)\tau}\d\tau\right]_{rs}
\end{aligned}
\right\} P^{-1}.\label{eq:dgabdjkl2}
\end{equation}

Therefore the second term of \eqref{sodarnlong} can be derived after
writing the matrix exponentials in their spectral form:
\begin{align*}
 & -\mathcal{S}\left(\int_{0}^{t}e^{-vH}\left(\D_{G_{\alpha\beta}}J_{k\ell}\left(v\right)\right)\Sigma e^{-vH^{T}}\d v\right)\\
= & -\mathcal{S}\left\{ P\int_{0}^{t}e^{-v\Lambda}\left\{ \begin{gathered}\left(U^{\alpha\beta}U^{k\ell}\overline{\Sigma}\right)\odot\left[\int_{0}^{v}I_{0}\left(\lambda_{k}-\lambda_{\alpha},\tau\right)e^{\left(\lambda_{\alpha}-\lambda_{\ell}\right)\tau}\d\tau\right]_{rs}\quad\quad\\
-\left(U^{k\ell}U^{\alpha\beta}\overline{\Sigma}\right)\odot\left[\int_{0}^{v}I_{0}\left(\lambda_{\ell}-\lambda_{\beta},\tau\right)e^{\left(\lambda_{k}-\lambda_{\beta}\right)\tau}\d\tau\right]_{rs}
\end{gathered}
\right\} e^{-v\Lambda}\d vP^{T}\right\} \\
= & -P\left\{ \mathcal{S}\left\{ \begin{gathered}\left(U^{\alpha\beta}U^{k\ell}\overline{\Sigma}\right)\odot\left[\int_{0}^{t}\int_{0}^{x}\int_{0}^{y}e^{-\left(\lambda_{r}+\lambda_{s}\right)x+\left(\lambda_{\alpha}-\lambda_{\ell}\right)y+\left(\lambda_{k}-\lambda_{\alpha}\right)z}\d z\d y\d x\right]_{rs}\quad\quad\\
-\left(U^{k\ell}U^{\alpha\beta}\overline{\Sigma}\right)\odot\left[\int_{0}^{t}\int_{0}^{x}\int_{0}^{y}e^{-\left(\lambda_{r}+\lambda_{s}\right)x+\left(\lambda_{k}-\lambda_{\beta}\right)y+\left(\lambda_{\ell}-\lambda_{\beta}\right)z}\d z\d y\d x\right]_{rs}
\end{gathered}
\right\} \right\} P^{T}\\
= & -P\left\{ \mathcal{S}\left\{ \begin{gathered}\left(U^{\alpha\beta}U^{k\ell}\overline{\Sigma}\right)\odot\left[\int_{0}^{t}e^{-\left(\lambda_{r}+\lambda_{s}\right)x}K\left(\lambda_{\alpha}-\lambda_{\ell},\lambda_{k}-\lambda_{\alpha},x\right)\d x\right]_{rs}\quad\quad\\
-\left(U^{k\ell}U^{\alpha\beta}\overline{\Sigma}\right)\odot\left[\int_{0}^{t}e^{-\left(\lambda_{r}+\lambda_{s}\right)x}K\left(\lambda_{k}-\lambda_{\beta},\lambda_{\ell}-\lambda_{\beta},x\right)\d x\right]_{rs}
\end{gathered}
\right\} \right\} P^{T}
\end{align*}
The integrals above would become $\mathring{L}_{k\ell\alpha\iota}^{\eta}$
and $\mathring{R}_{k\ell\beta\iota}^{\eta}$ defined in the manuscript
if we use \eqref{k0sol} and write $U^{\alpha\beta}U^{k\ell}$ and
$U^{k\ell}U^{\alpha\beta}$ into their non-matrix Iversion bracket
form. Combining this with the first term \eqref{p10nicet1}, we already
have got both terms of $\D_{G_{\alpha\beta}}\D_{G_{k\ell}}V_{\eta}$.
The required equation (\eqref{hvdhdh}) for $\frac{\partial^{2}V_{\eta}}{\partial H_{mn}\partial H_{ij}}$
follows directly from applying \lemref{basishess} twice using $\D_{G_{\alpha\beta}}\D_{G_{k\ell}}V_{\eta}$.

Next, we consider $\frac{\partial^{2}V_{\eta}}{\partial L_{mn}\partial H_{ij}}$.
We directly differentiate \eqref{p9011-2} with respect to $L_{mn}$,
which gives
\[
\frac{\partial^{2}V_{\eta}}{\partial L_{mn}\partial H_{ij}}=-\mathcal{S}\left(\int_{0}^{t}e^{-vH}J_{ij}\left(v\right)\frac{\partial\Sigma}{\partial L_{mn}}e^{-vH^{T}}\d v\right)=-\mathcal{S}\left(\int_{0}^{t}e^{-vH}J_{ij}\left(v\right)\frac{\partial\Sigma}{\partial L_{mn}}e^{-vH^{T}}\d v\right)
\]
where $J_{ij}\left(v\right)=\int_{0}^{v}e^{\tau H}U^{ij}e^{-\tau H}\d\tau$.
Because $\Sigma=LL^{T}$, it is easy to verify that
\begin{equation}
\frac{\partial\Sigma}{\partial L_{mn}}=\frac{\partial LL^{T}}{\partial L_{mn}}=\mathcal{S}\left(U^{mn}L^{T}\right).\label{eq:dsigdlmn}
\end{equation}
As the rest of the proof for $\frac{\partial^{2}V_{\eta}}{\partial H_{ij}}$
makes no assumption about $\Sigma$, we can conclude that $\frac{\partial^{2}V_{\eta}}{\partial L_{mn}\partial H_{ij}}$
is the same as $\frac{\partial V_{\eta}}{\partial H_{ij}}$ except
replacing $\Sigma$ with $\mathcal{S}\left(U^{mn}L^{T}\right)$; hence
\eqref{hvdldh} holds.

Similarly, $\frac{\partial^{2}V_{\eta}}{\partial L_{mn}\partial L_{ij}}$
(\eqref{hvdldl}) is obtained by directly taking double partial derivative
on the definition of $V_{\eta}$ in \eqref{VOU}, where we get
\[
\frac{\partial^{2}V_{\eta}}{\partial L_{mn}\partial L_{ij}}==\intop_{0}^{t_{\eta}}e^{-Hv}\left(\frac{\partial^{2}LL^{T}}{\partial L_{mn}\partial L_{ij}}\right)e^{-H^{T}v}\mathrm{d}v,
\]
and $\frac{\partial^{2}LL^{T}}{\partial L_{mn}\partial L_{ij}}=\mathcal{S}\left(U^{ij}U^{nm}\right)$
can be trivially obtained by differntiating \eqref{dsigdlmn} once
more.

\eqref{hwdhdh}, and (\ref{eq:hwdmudh}), i.e., the formula for $\frac{\partial^{2}w_{\eta}}{\partial H_{mn}\partial H_{ij}}$
and $\frac{\partial^{2}w_{\eta}}{\partial\mu_{m}\partial H_{ij}}$
trivially follows from \eqref{dwdhij}.

Finally, to get the expression of $\frac{\partial^{2}\Phi_{\eta}}{\partial H_{mn}\partial H_{ij}}$
in \eqref{hphidhdh}, we use the same strategy as in $\frac{\partial^{2}V_{\eta}}{\partial H_{mn}\partial H_{ij}}$,
namely, to find $\D_{G_{\alpha\beta}}\D_{G_{k\ell}}\Phi_{\eta}$ for
all $\alpha$, $\beta$, $k$, and $\ell$ instead, and apply \lemref{basishess}
twice to obtain $\frac{\partial^{2}\Phi_{\eta}}{\partial H_{mn}\partial H_{ij}}$.
Applying \eqref{nhdir} we immediately get $\D_{G_{k\ell}}\Phi_{\eta}=\D_{G_{k\ell}}e^{-tH}=-e^{-tH}J_{k\ell}\left(t\right)$.
So using the product rule of differentiation we have 
\[
\D_{G_{\alpha\beta}}\D_{G_{k\ell}}\Phi_{\eta}=e^{-tH}J_{\alpha\beta}\left(t\right)J_{k\ell}\left(t\right)-e^{-tH}\D_{G_{\alpha\beta}}J_{k\ell}\left(t\right).
\]
We have already seen $J_{k\ell}\left(t\right)$ and $\D_{G_{\alpha\beta}}J_{k\ell}\left(t\right)$
in \eqref{jklv-1} and (\ref{eq:dgabdjkl2}), therefore the above
can be written as
\begin{align*}
\D_{G_{\alpha\beta}}\D_{G_{k\ell}}\Phi_{\eta} & =e^{-tH}I_{0}\left(\lambda_{\alpha}-\lambda_{\beta},t\right)I_{0}\left(\lambda_{k}-\lambda_{\ell},t\right)PU^{\alpha\beta}U^{k\ell}P^{-1}\\
 & \quad\quad\quad-e^{-tH}P\left\{ \begin{aligned}\left(U^{\alpha\beta}U^{k\ell}\right)\odot\left[\int_{0}^{t}I_{0}\left(\lambda_{k}-\lambda_{\alpha},\tau\right)e^{\left(\lambda_{r}-\lambda_{s}\right)\tau}\d\tau\right]_{rs}\quad\quad\\
-\left(U^{k\ell}U^{\alpha\beta}\right)\odot\left[\int_{0}^{t}I_{0}\left(\lambda_{\ell}-\lambda_{\beta},\tau\right)e^{\left(\lambda_{r}-\lambda_{s}\right)\tau}\d\tau\right]_{rs}
\end{aligned}
\right\} P^{-1}\\
 & =Pe^{-t\Lambda}\left\{ \begin{aligned}\left[I_{0}\left(\lambda_{\alpha}-\lambda_{\beta},t\right)I_{0}\left(\lambda_{k}-\lambda_{\ell},t\right)-K_{0}\left(\lambda_{\alpha}-\lambda_{\ell},\lambda_{k}-\lambda_{\alpha},t\right)\right]U^{\alpha\beta}U^{k\ell}\quad\quad\\
+K_{0}\left(\lambda_{k}-\lambda_{\beta},\lambda_{\ell}-\lambda_{\beta},t\right)U^{k\ell}U^{\alpha\beta}
\end{aligned}
\right\} P^{-1}.
\end{align*}

\section*{Proof of \eqref{choleskychain-1}}
\begin{proof}
Chain rule for the Cholesky decomposition in \eqref{choleskychain-1}
can be seen as follows.
\begin{align*}
\frac{\partial\Omega}{\partial L_{\ell m}} & =\sum_{i=1}^{k}\sum_{j=1}^{k}\frac{\partial V_{ij}}{\partial L_{\ell m}}\frac{\partial\Omega}{\partial V_{ij}}\\
 & =\sum_{i=1}^{k}\sum_{j=1}^{k}\left(\sum_{q=1}^{k}\frac{\partial L_{iq}L_{jq}}{\partial L_{\ell m}}\right)\frac{\partial\Omega}{\partial V_{ij}}\\
 & =\sum_{i=1}^{k}\sum_{j=1}^{k}\left[2\delta_{ij}\delta_{i\ell}L_{\ell m}+\left(1-\delta_{ij}\right)\left(\delta_{i\ell}L_{jm}+\delta_{j\ell}L_{im}\right)\right]\frac{\partial\Omega}{\partial V_{ij}},\\
 & =2L_{\ell m}\frac{\partial\Omega}{\partial V_{\ell\ell}}+\sum_{\substack{j=1\\
j\ne\ell
}
}^{k}L_{jm}\frac{\partial\Omega}{\partial V_{\ell j}}+\sum_{\substack{i=1\\
i\ne\ell
}
}^{k}L_{im}\frac{\partial\Omega}{\partial V_{i\ell}}\\
 & =\sum_{j=1}^{k}L_{jm}\frac{\partial\Omega}{\partial V_{\ell j}}+\sum_{i=1}^{k}L_{im}\frac{\partial\Omega}{\partial V_{i\ell}}\\
 & =\sum_{i=1}^{k}L_{im}\left(\frac{\partial\Omega}{\partial V_{\ell i}}+\frac{\partial\Omega}{\partial V_{i\ell}}\right).
\end{align*}
\end{proof}

\section*{Explicit Expression for Some Exponential Integrals \label{sec:Integrals-Explicit-Expression}}

In the main manuscript the following definite integrals are defined;

\begin{align}
I_{n}\left(a,t\right) & =\intop_{0}^{t}y^{n}e^{ay}\textrm{d}y,\label{eq:intin}\\
K_{n}\left(a,b,t\right) & =\intop_{0}^{t}e^{ay}I_{n}\left(b,y\right)\textrm{d}y,\label{eq:intkn}\\
Q\left(a,b,c,t\right) & =\intop_{0}^{t}e^{ay}I_{0}\left(b,y\right)I_{0}\left(c,y\right)\textrm{d}y;\label{eq:intq}
\end{align}
and only some of them are used

\begin{align}
I_{0}(a,t) & =\left[a=0\right]t+\left[a\neq0\right]\frac{e^{at}-1}{a}\textrm{, }\nonumber \\
I_{1}(a,t) & =\left[a=0\right]\frac{{t^{2}}}{2}+\left[a\neq0\right]\frac{e^{at}(at-1)+1}{a^{2}}\textrm{, }\nonumber \\
I_{2}\left(a,t\right) & =\left[a=0\right]\frac{t^{3}}{3}+\left[a\neq0\right]e^{at}\left(\frac{t^{2}}{a}-\frac{2t}{a^{2}}+\frac{2}{a^{3}}\right)-\frac{2}{a^{3}},\nonumber \\
K_{0}\left(a,b,t\right) & =\left[b=0\right]I_{1}\left(a,t\right)+\left[b\neq0\right]\frac{I_{0}\left(a+b,t\right)-I_{0}(a,t)}{b},\label{eq:k0sol}\\
K_{1}(a,b,t) & =\left[b=0\right]\frac{I_{2}\left(a,t\right)}{2}+\left[b\neq0\right]\left\{ \frac{I_{1}(a+b,t)}{b}-\frac{I_{0}\left(a+b,t\right)-I_{0}(a,t)}{b^{2}}\right\} ,\nonumber \\
Q\left(a,b,c,t\right) & =\begin{cases}
t^{3}/3 & a=b=c=0\\
\left[I_{0}\left(b,t\right)-t\right]/b & a=0,\,\,b\neq0,\,\,c=0\\
\left[I_{0}\left(c,t\right)-t\right]/c & a=0,\,\,b=0,\,\,c\neq0\\
I_{2}\left(a,t\right) & a\neq0,\,\,b=0,\,\,c=0\\
\left[I_{1}\left(a+b,t\right)-I_{1}\left(a,t\right)\right]/b & a\neq0,\,\,b\neq0,\,\,c=0\\
\left[I_{1}\left(a+c,t\right)-I_{1}\left(a,t\right)\right]/c & a\neq0,\,\,b=0,\,\,c\neq0\\
\left[I_{0}\left(a+b+c,t\right)+I_{0}\left(a,t\right)-I_{0}\left(a+b,t\right)-I_{0}\left(a+c,t\right)\right]/(bc) & \textrm{otherwise.}
\end{cases}\nonumber 
\end{align}

\end{document}